\pdfoutput=1
% cSpell:words subfigure natbib pdflscape authblk
\documentclass[12pt]{article}

% Page geometry
\usepackage[margin=1in]{geometry}

% Math & symbols
\usepackage{amsmath, amssymb, mathtools}

% Figures & tables
\usepackage{graphicx}
\usepackage{booktabs}
\usepackage{multirow}
\usepackage{makecell}
\usepackage{tabularx}
\usepackage{longtable}
\usepackage{threeparttablex}
\usepackage{subcaption}   % replaces deprecated subfigure
\usepackage{siunitx}

% Typesetting
\usepackage{microtype}
\usepackage{xcolor}
\usepackage{bm}
\usepackage{setspace}
\usepackage[flushmargin]{footmisc}
\setstretch{1.7}

% Bibliography
\usepackage{natbib}       % load BEFORE hyperref
\bibliographystyle{abbrvnat}

% URLs and links
\usepackage[hidelinks]{hyperref} % load AFTER natbib
\hypersetup{
    colorlinks=true,
    linkcolor=blue,
    citecolor=blue,
    urlcolor=blue,
}

% Algorithms
\usepackage{algorithm}
\usepackage{algpseudocode}

% Code listings
\usepackage{listings}

% Landscape
\usepackage{pdflscape}

\makeatletter
\renewenvironment{abstract}{
  \vspace{0.5\baselineskip}
  \centerline{\normalfont\bfseries\large Abstract}
  \begin{list}{}{
    \setlength{\leftmargin}{0.25in}
    \setlength{\rightmargin}{0.25in}
  }\item\relax
  \fontsize{10pt}{11pt}\selectfont
}{\end{list}}
\makeatother

% Theorems
\usepackage{amsthm}
\numberwithin{equation}{section}
\newtheorem{theorem}{Theorem}[section]
\newtheorem{lemma}{Lemma}[section]
\newtheorem{proposition}{Proposition}[section]

\theoremstyle{definition}

\theoremstyle{remark}

% Keywords helper

% Title & authors
\usepackage{authblk}
\title{\Large Testing composite null hypotheses with high-dimensional dependent data: a computationally scalable FDR-controlling procedure}

\author{\normalsize Pengfei Lyu$^1$, Xianyang Zhang$^2$ and Hongyuan Cao$^{3*}$}

\footnotetext[1]{\mbox{Department of Biostatistics \& Bioinformatics, Duke University, Durham, NC 27705, USA.}}
\footnotetext[2]{\mbox{Department of Statistics, Texas A\&M University, College Station, TX, 77843, USA.}}
\footnotetext[3]{\mbox{Department of Biostatistics and Medical Informatics, University of Wisconsin-Madison, Madison, WI 53706, USA.}}
\begingroup

\footnotetext{\mbox{Corresponding author: \href{mailto:hongyuancao@gmail.com}{hongyuancao@gmail.com}}}
\endgroup

\date{}

\begin{document}
\maketitle
\vspace*{-1.5em}
\pagestyle{plain}

\begin{abstract}
Testing composite null hypotheses is fundamental to many scientific applications, including mediation and replicability analyses, and becomes particularly challenging in high-throughput settings involving tens of thousands of features. Existing high-dimensional composite null hypotheses testing often ignores the dependence structure among features, leading to overly conservative or liberal results. To address this limitation, we develop a four-state hidden Markov model (HMM) for bivariate $p$-value sequences arising from two-study replicability analysis. This model captures local dependence among features and accommodates study-specific heterogeneity. Based on the HMM, we propose a multiple testing procedure that asymptotically controls the false discovery rate (FDR). Extending this framework to more than two studies is computationally intensive, with complexity growing exponentially in the number of studies $n.$ To address this scalability issue, we introduce a novel e-value framework that reduces computational complexity to quadratic in $n,$ while preserving asymptotic FDR control. Extensive simulations demonstrate that our method achieves higher power than existing approaches at the same FDR levels. When applied to genome-wide association studies (GWAS), the proposed approach identifies replicable SNP-level signals that are not detected at the same significance threshold by competing methods.
\end{abstract}

\begin{keywords}
  Composite null hypotheses, e-values, false discovery rate, hidden Markov model, high dimension, non-parametric maximum likelihood estimation
\end{keywords}
\newpage
\section{Introduction}

Composite null hypotheses frequently arise in modern statistical applications. One example is mediation analysis, where researchers aim to understand the mechanisms by which an exposure influences an outcome through intermediate variables, or mediators \citep{mackinnon2012introduction, sun2024testing}. The null hypothesis in this context is composite, comprising three distinct subspaces: (i) the exposure does not affect the mediator, and the mediator does not affect the outcome; (ii) the exposure influences the mediator, but the mediator has no effect on the outcome; and (iii) the mediator affects the outcome, but is not affected by the exposure. 
% {\color{red}
Crucially, this structural challenge also underpins replicability analysis, which aims to detect features showing consistent association across multiple independent studies. This paper addresses the broader challenge of high-dimensional composite null testing by focusing on replicability analysis in high-throughput experiments, introducing a statistically rigorous and computationally scalable framework.

A large body of literature has been developed for high-dimensional replicability analysis under the assumption of independent hypotheses. A simple \textit{ad hoc} approach applies the Benjamini--Hochberg (BH) procedure \citep{benjamini1995controlling} separately to each study and intersects the discoveries, but this generally fails to control the false discovery rate (FDR). To guarantee valid FDR control, \cite{benjamini2009selective} proposed using the maximum $p$-value across studies as a test statistic under the composite null, although this approach is often conservative. Subsequent works have improved power through empirical-Bayes modeling \citep{heller2014replicability}, cross-screening strategies \citep{bogomolov2018assessing}, nonparametric estimation \citep{zhao2020nonparametric}, refined null approximations \citep{lyu2023jump}, and conditionally symmetric Gaussian mixture models \citep{sun2024testing}. 
% Related developments for partial conjunction testing include \cite{benjamini2008screening}, \cite{wang2022detecting}, \cite{liang2022powerful}, and \cite{deng2024joint}. {\color{red}Partial conjunction testing asks whether a hypothesis is non-null in at least $u$ out of $n$ studies for some $1 \le u \le n$; in contrast, the full replicability target considered in this paper requires evidence of a signal in all $n$ studies ($u = n$), which corresponds to a strictly composite null and a different scientific question.} Comprehensive reviews are provided by \cite{bogomolov2023replicability}.
A related line of work focuses on partial conjunction testing \citep{benjamini2008screening, wang2022detecting, liang2022powerful, deng2024joint}, which assesses whether a feature is non-null in at least $u$ out of $n$ studies ($1 \le u \le n$). In contrast, the full replicability target considered in this paper requires evidence of a signal in all $n$ studies ($u = n$). This corresponds to a strictly composite null and addresses a distinct scientific question. Comprehensive reviews are provided by \cite{bogomolov2023replicability}. Several recent methods address related composite-hypothesis problems beyond classical replicability analysis. PLACO \citep{ray2020powerful} targets two-study testing under a composite null but relies on parametric modeling and does not account for data dependence. Primo \citep{gleason2020primo} and QCH \citep{mary2022querying,de2025large} provide more general frameworks for integrating multiple sets of $p$-values or testing composed hypotheses, but they do not explicitly model local dependence among nearby genetic variants. AMDP \citep{ding2023amdp} addresses high-dimensional mediation analysis, whose structure differs from the composite null hypotheses testing considered here.

% Despite these advances, most existing replicability methods assume independence among hypotheses or rely on restrictive dependence assumptions.

Despite these advances, most existing replicability methods relying on independence assumptions may fail to control the FDR or suffer from a loss of power when applied to dependent datasets. In high-throughput experiments, feature dependence is the norm rather than the exception. In genome-wide association studies (GWAS), nearby genetic variants are often correlated because of linkage disequilibrium (LD), a phenomenon in which variants close to each other on the genome tend to be inherited together \citep{visscher2012five}. This local correlation among single-nucleotide polymorphisms (SNPs) is a common source of feature dependence in GWAS.
% {\color{red} 
Ignoring such structural dependence can substantially reduce statistical power or invalidate inferential procedures. While some existing approaches accommodate composite null testing under dependence, they typically rely on strict Gaussian assumptions \citep{sun2024testing} or weak forms of positive dependence \citep{bogomolov2023testing}, both of which may be inadequate for complex, large-scale genomic applications.

Hidden Markov models (HMMs) provide a natural framework for capturing local dependence structures in large-scale inference problems. In genomic association data, HMMs have been widely used to capture LD-induced dependence \citep{li2003modeling,sun2009large,sesia2021false,abraham2022multiple}. This modeling framework is further supported by a rigorous body of parametric and nonparametric theory %theoretical properties of HMMs have been extensively studied 
\citep{leroux1992maximum,bickel1998asymptotic,alexandrovich2016nonparametric}. In particular, \cite{abraham2022multiple} developed empirical-Bayes multiple testing procedures under nonparametric two-state HMMs for single-sequence testing problems. Another dependence-aware approach, Cartesian HMM \citep{wang2019replicability}, was developed for two-study replicability analysis. However, this approach has three key limitations: the densities are modeled parametrically; replicability testing is not formulated explicitly as a composite null problem, and no consistency theory for the HMM estimators or asymptotic FDR control is provided; and the method is limited to two studies.

In this paper, we propose CoHiM (\underline{Co}mposite null hypotheses testing under \underline{Hi}dden \underline{M}arkov models), a dependence-aware framework for testing high-dimensional composite null hypotheses. We begin with the two-study setting and model the latent signal configurations using a four-state HMM corresponding to the four possible combinations of study-specific signal indicators. Unlike existing dependence-aware composite null hypotheses testing methods, such as Cartesian HMM, which imposes parametric assumptions on the non-null $p$-value distributions, CoHiM estimates the non-null densities nonparametrically under a mild monotonicity condition for each study. Based on the proposed HMM framework, we apply a nonparametric maximum likelihood estimation procedure integrating the forward-backward algorithm \citep{baum1970maximization}, the expectation-maximization algorithm \citep{dempster1977maximum}, and the pool-adjacent-violator algorithm \citep{robertson1988order}. The forward-backward, EM, and PAVA algorithm components are established algorithmic tools; CoHiM's methodological contribution lies in adapting them to a four-state composite-null HMM, together with establishing consistency theory and FDR-control guarantees. We further extend CoHiM to the multiple-study setting. Directly modeling all latent signal configurations across many studies requires exponentially growing state spaces and quickly becomes computationally infeasible. To overcome this challenge, we construct test statistics for all study pairs and transform them into e-values. A valid e-value has expectation no greater than one under the null hypothesis, with larger values providing stronger evidence against the null \citep{vovk2021values}. We then aggregate the resulting e-values and apply the e-BH procedure \citep{wang2022false} to obtain a scalable composite null hypotheses testing procedure whose computational complexity grows quadratically with the number of studies. 

Our work makes several contributions. First, we develop a dependence-aware framework for testing high-dimensional composite null hypotheses using multi-state HMMs, allowing both feature dependence and cross-study heterogeneity. Second, unlike existing HMM-based multiple testing methods that focus on simple null hypotheses under two-state latent models,
% , such as \citet{abraham2022multiple} who established empirical-Bayes FDR analysis for nonparametric two-state HMMs in single-sequence testing, 
our framework addresses composite nulls induced by four-state latent structures, where the null hypothesis comprises three distinct hidden states rather than a single null state, and inference targets the posterior probability of a union of hidden states. Third, we establish consistency of the estimated stationary probabilities, transition probabilities, and nonparametric density estimators under the proposed four-state HMM framework, and prove asymptotic FDR control of the resulting procedure. Finally, we develop a scalable extension to multiple studies through pairwise e-value aggregation and e-BH, avoiding the exponential computational burden of directly modeling all latent configurations simultaneously.
We summarize different methods in Table \ref{tab:method_comparison}.
% }
All simulation and data analysis results are fully reproducible, with code available at \url{https://github.com/hongyuan-cao/CoHiM}.

\begin{table*}[htbp]
\centering
\caption{Different methods for high-dimensional composite null hypotheses testing.
\textbf{Dep.}: whether the method explicitly models dependence among features.
\textbf{Nonp.}: whether the method is nonparametrically based.
% ; purely $p$-value- or rank-based procedures (e.g., \textit{ad hoc} BH, MaxP, AdaFilter) qualify because they require no parametric non-null modeling, 
% where ``Partial'' denotes copula- or mixture-based modeling that retains some parametric structure on the non-null signal.
\textbf{Multi-study}: whether the method applies to more than $2$ studies.
\textbf{FDR}: F-S\,=\,finite-sample guarantee; Asym.\,=\,asymptotic guarantee; No\,=\,no formal FDR guarantee.}
\label{tab:method_comparison}
\scriptsize
\setlength{\tabcolsep}{4pt}
\begin{tabularx}{\textwidth}{@{}
  p{2.1cm}
  >{\raggedright\arraybackslash}X
  c c c
  >{\centering\arraybackslash}p{1.4cm}
  >{\raggedright\arraybackslash}X
@{}}
\toprule
\textbf{Method} &
\textbf{Setting} &
\textbf{Dep.} &
\textbf{Nonp.} &
\textbf{Multi-study} &
\textbf{FDR} &
\textbf{Limitation} \\
\midrule
\textit{ad hoc} BH
  & $n/n$ replicability via BH intersection
  & $\times$ & \checkmark & \checkmark & No
  & No formal FDR guarantee \\[2pt]
MaxP
  & $n/n$ replicability, max $p$-value
  & $\times$ & \checkmark & \checkmark & F-S
  & Overly conservative\\[2pt]
MaRR
  & $n/n$ reproducibility, maximum rank
  & $\times$ & \checkmark & \checkmark & No
  & Rank-based; relies on a strong concordance structure; no dependence modeling \\[2pt]
radjust
  & $2/2$ replicability, cross-screening
  & $\times$ & $\times$ & $\times$ & F-S
  & Restricted to two studies\\[2pt]
JUMP
  & $2/2$ replicability
  & $\times$ & \checkmark & $\times$ & No
  & No dependence modeling \\[2pt]
STAREG
  & $2/2$ replicability, PAVA-based
  & $\times$ & \checkmark & $\times$ & Asym.
  & Primarily two studies \\[2pt]
AdaFilter
  & $u/n$ partial conjunction
  & $\times$ & \checkmark & \checkmark & F-S
  & No dependence modeling \\[2pt]
QCH
  & General composite null
  & $\times$ & Partial & \checkmark & Asym.
  & No dependence modeling \\[2pt]
AMDP
  & Mediation analysis composite null
  & $\times$ & Partial & $\times$ & Asym.
  & Not applicable to $n/n$ replicability \\[2pt]
Primo
  & General composite null
  & $\times$ & Partial & \checkmark & No
  & No formal FDR guarantee \\[2pt]
PLACO
  & $2/2$  composite null
  & $\times$ & $\times$ & $\times$ & Asym.
  & Two studies; parametric; no dependence modeling \\[2pt]
Cartesian HMM
  & $2/2$ composite null
  & \checkmark & $\times$ & $\times$ & No
  & Parametric assumptions; two studies \\
% \midrule
\textbf{CoHiM}
  & General composite null 
  & \checkmark & \checkmark & \checkmark & Asym.
  & Markovian assumption\\
\bottomrule
\end{tabularx}
\end{table*}

\section{Methodology}\label{sec_meth}
\subsection{Problem setup}\label{subsec_model}
We begin by considering the case of two studies. Throughout this section, we use GWAS as a motivating and illustrative example, although the proposed method CoHiM is broadly applicable to high-dimensional testing. Let $(y_{1j}, y_{2j})_{j=1}^m$ denote the paired $p$-values for $m$ hypotheses across two studies. For example, these may arise when testing marginal associations between SNPs and a phenotype, such as heart disease, in two different populations \citep{diabetes2012large}. Our goal is to identify SNPs that exhibit replicable association with the phenotype across both studies. Let $\theta_{ij}$ denote the latent state of the $j$th SNP in study $i$, where $\theta_{ij}=1$ indicates true association and $\theta_{ij}=0$ indicates no association. Define $s_j = 0, 1, 2, 3$ as the joint latent state corresponding to $(\theta_{1j}, \theta_{2j}) = (0, 0), (0, 1), (1, 0)$ and $(1, 1)$, respectively. The replicability null hypothesis is composite and can be specified as
\begin{equation}
    H_{0j}: s_j\in\{0,1,2\} \quad \text{for }j=1,\dots,m,\notag
\end{equation}
i.e., the SNP is not associated with the phenotype in at least one study. Rejecting $H_{0j}$ implies evidence for a replicable association, that is, $\theta_{1j} = \theta_{2j}=1.$

To capture local dependence across hypotheses, we assume that the latent state sequence $\boldsymbol s=(s_1, \dots,s_m)$ follows a stationary, irreducible, and aperiodic four-state Markov chain. The transition probabilities are defined as 
\begin{align}
    \label{eq_a_def}
    a_{k\ell} =  \mathbb P(s_{j+1}=\ell\mid s_j=k) \text{ for }j = 1,\ldots, m-1, \text{ and } k,\ell=0,1,2,3,
\end{align}
subject to the normalization condition $\sum_{\ell=0}^3a_{k\ell}=1$ for each $k$. The stationary probabilities of the Markov chain are given by $\pi = (\pi_0, \pi_1, \pi_2, \pi_3),$ where $\pi_k = \mathbb P(s_j=k)$ and $\sum_{k=0}^3\pi_k=1$. Let $A = (a_{k\ell})\in \mathbb R^{4\times 4}$ denote the transition probability matrix. By stationarity, the stationary distribution satisfies the equation $$\pi A = \pi.$$

Conditional on the hidden states, we assume a two-component mixture model for the $p$-values in each study:
\begin{equation}\label{eq_mixture}
    \begin{aligned}
        y_{1j}\mid \theta_{1j} \sim (1-\theta_{1j})f_0 + \theta_{1j}f_1,\\
        y_{2j}\mid \theta_{2j} \sim (1-\theta_{2j})f_0 + \theta_{2j}f_2,\\
    \end{aligned}
\end{equation}
where $f_0$ is the null density, and $f_1$ and $f_2$ are the non-null densities for study 1 and study 2, respectively. We assume $f_0$ is the standard uniform density on $[0, 1],$ and we impose the following monotone likelihood ratio condition \citep{sun2007oracle, cao2013optimal}:
\begin{equation}\label{monotone}
    f_1(x)/f_0(x) \text{ and } f_2(x)/f_0(x) \text{ are monotonically non-increasing in }x.
\end{equation}
This condition is natural, as smaller $p$-values provide stronger evidence against the null hypothesis. We assume that, conditional on the hidden states, the two studies are independent. Therefore, for $j = 1, \ldots, m$, the paired $p$-values $(y_{1j}, y_{2j})$ are conditionally independent given the joint latent states $s_j.$ 
Let $f^{(s_j)}$ denote the conditional density function of $(y_{1j}, y_{2j})$ given the latent state $s_j$. By the mixture model in (\ref{eq_mixture}), $f^{(s_j)}$ satisfies
\begin{align}
    \label{eq_f_sj}
    f^{(s_j)}(y_{1j}, y_{2j}) = \begin{cases}
        f_0(y_{1j})f_0(y_{2j}) \quad \text{ if } s_j = 0,\\
        f_0(y_{1j})f_2(y_{2j}) \quad \text{ if } s_j = 1,\\
        f_1(y_{1j})f_0(y_{2j}) \quad \text{ if } s_j = 2,\\
        f_1(y_{1j})f_2(y_{2j}) \quad \text{ if } s_j = 3.
    \end{cases}
\end{align}

\subsection{Estimation and testing procedure}
% {\color{red}
    Denote the true parameter as $\phi^* = (\pi^*, A^*, f_1^* , f_2^*)$. We obtain the maximum likelihood estimator $\widehat{\phi}_m = (\widehat{\pi}, \widehat{A}, \widehat{f}_1, \widehat{f}_2)$ via the Expectation-Maximization (EM) algorithm \citep{dempster1977maximum}, utilizing the forward-backward algorithm \citep{baum1970maximization} and the pool-adjacent-violators algorithm (PAVA) \citep{robertson1988order}. The PAVA update for monotone non-null p-value densities follows the estimation strategy in STAREG \citep{li2024stareg}, and it is adapted here to the four-state HMM. The details can be found in Section~\ref{subsec_est} in the Supplementary Materials.
% }

Define the forward probability $\alpha_j(s_j) = \mathbb P_{\phi^*}((y_{1t},y_{2t})_{t=1}^j, s_j)$ for $j = 1, \ldots, m$ and the backward probability $\beta_j(s_j) =\mathbb P_{\phi^*}((y_{1t},y_{2t})_{t=j+1}^m\mid s_j)$ for $j = 1, \ldots, m-1$, initialized by $\alpha_1(s_1) = \pi_{s_1}f^{(s_1)}(y_{11},y_{21})$ and $\beta_m(s_m) = 1.$ By the Markov property and the transition probability defined in \eqref{eq_a_def}, these quantities can be computed recursively: 
\begin{align*}
    \alpha_{j+1}(s_{j+1}) =& \sum_{s_j=0}^3\alpha_j(s_j)a_{s_js_{j+1}}f^{(s_{j+1})}(y_{1,j+1},y_{2,j+1}), \quad \text{ and }
    \\
    \beta_j(s_{j})=&\sum_{s_{j+1}=0}^3\beta_{j+1}(s_{j+1})f^{(s_{j+1})}(y_{1,j+1},y_{2,j+1}) a_{s_js_{j+1}}.
\end{align*}

With the estimator $\widehat{\phi}_m = (\widehat{\pi}, \widehat{A}, \widehat{f}_1, \widehat{f}_2)$, we compute the estimated forward and backward probabilities for $j = 1, \ldots, m-1$ as follows:
\begin{align}
    \widehat{\alpha}_1(s_1) =& \widehat{\pi}_{s_1}\widehat{f}^{(s_1)}(y_{11}, y_{21}),
    \quad \quad \widehat{\beta}_m(s_m) = 1,
    \label{eq_alpha_beta_est_initial}
    \\
    \widehat{\alpha}_{j+1}(s_{j+1}) =& \sum_{s_j = 0}^3 \widehat{\alpha}_j(s_j) \widehat{a}_{s_j, s_{j+1}} \widehat{f}^{(s_{j+1})}(y_{1,j+1}, y_{2,j+1})\quad \text{ and }
    \label{eq_alpha_est_update}
    \\
    \widehat{\beta}_j(s_j) =& \sum_{s_{j+1} = 0}^3 \widehat{\beta}_{j+1}(s_{j+1}) \widehat{a}_{s_j, s_{j+1}} \widehat{f}^{(s_{j+1})}(y_{1,j+1}, y_{2,j+1}).
    \label{eq_beta_est_update}
\end{align}
For $j = 1, \ldots, m$, define replicability Local Index of Significance (rLIS) as the posterior probability that the $j$th hypothesis is not replicable, i.e., $s_j$ belongs to the non-replicable configuration set $\{0,1,2\}$, given all observed $p$-value pairs:
\begin{align*}
    {\rm rLIS}_j =& \mathbb P_{\phi^*}\left(s_j\in\{0,1,2\}\mid (y_{1j'}, y_{2j'})_{j'=1}^m\right)
    \\
    =&\frac{\sum_{s_j=0}^2 \alpha_j(s_j)\beta_j(s_j)}{\sum_{s_j=0}^3 \alpha_j(s_j)\beta_j(s_j)}.
    %\quad \text{ for }j = 1, \ldots, m.
\end{align*}
Using the estimated forward and backward probabilities in \eqref{eq_alpha_beta_est_initial}-\eqref{eq_beta_est_update}, the estimated rLIS is 
\begin{align}
    \widehat{\mathrm{rLIS}}_j =& \mathbb P_{\widehat{\phi}_m} \left(s_j\in\{0, 1, 2\}\mid (y_{1j'}, y_{2j'})_{j'=1}^m\right)= \frac{\sum_{s_j=0}^2 \widehat{\alpha}_j(s_j)\widehat{\beta}_j(s_j)}{\sum_{s_j=0}^3 \widehat{\alpha}_j(s_j)\widehat{\beta}_j(s_j)}.\label{eq_test_stats}
\end{align}
To implement the data-driven step-up procedure, we first order the estimated replicability Local Index of Significance values $\widehat{\mathrm{rLIS}}_{(1)}\leq \cdots \leq \widehat{\mathrm{rLIS}}_{(m)}$ with the corresponding replicability null hypotheses denoted by $H_{0(1)}, \ldots, H_{0(m)}$. Given a target FDR level $q\in (0, 1)$, we have the step-up procedure
\begin{align}
    \begin{aligned}
        &\widehat{R} = \max\left\{r: \frac{1}{r}\sum_{j=1}^r \widehat{\mathrm{rLIS}}_{(j)} \leq q\right\},\\
        & \text{and reject } H_{0(j)} \quad \text{ for }j = 1,\ldots,\widehat R.
    \end{aligned}
    \label{eq_rej_procedure}
\end{align}

The full FDR-controlling procedure is summarized in Algorithm \ref{algo_2study}.
% \begin{algo}\label{algo1}
% CoHiM for the two-study case
% % \begin{algorithmic}[1]
% %     \label{algo_2study}
% \begin{tabbing}
%     \qquad\enspace\textbf{Input:} $p$-values from two studies $(y_{1j}, y_{2j})_{j=1}^m$, nominal FDR level $q$.
%     \\
%     \qquad\enspace Obtain the estimates $\widehat \phi = (\widehat \pi, \widehat A, \widehat f_1, \widehat f_2)$ from the EM algorithm in Section \ref{subsec_est}.
%     \\
%     \qquad\enspace Calculate the forward and backward probabilities $\widehat\alpha_j(s_j)$ and $\widehat\beta_j(s_j)$ for $j = 1,\ldots,m$ and \\\enspace $s_j = 0,1,2,3$ by (\ref{eq_alpha_beta_initial}), (\ref{eq_alpha_update}) and (\ref{eq_beta_update}).
%     \\
%     \qquad\enspace Calculate $\widehat{\rm rLIS}_j = \sum_{s_j=0}^2\widehat\alpha_j(s_j)\widehat\beta_j(s_j)/\sum_{s_j=0}^3\widehat\alpha_j(s_j)\widehat\beta_j(s_j)$.
%     \\
%     \qquad\enspace Order the rLIS values $\widehat{\rm rLIS}_{(1)}\leq \cdots \leq \widehat{\rm rLIS}_{(m)}$ with the corresponding null hypotheses 
%     \\\enspace denoted by $H_{0(1)}, \ldots, H_{0(m)}$.
%     \\
%     \qquad\enspace Find the rejection number $\widehat R = \max\{r: r^{-1}\sum_{j=1}^r \widehat{\rm rLIS}_{(j)}\leq q\}.$
%     \\
%     \qquad\enspace \textbf{Output:} Reject $H_{0j}$  if $\widehat{\rm rLIS}_j\leq \widehat{\rm rLIS}_{(\widehat R)}$ ($j = 1,\ldots,m$).
% \end{tabbing}
% % \end{algorithmic}
% \end{algo}
\begin{algorithm}[htbp]
\caption{CoHiM for the two-study case}\label{algo_2study}
\begin{algorithmic}[1]
\State \textbf{Input: }$p$-values from two studies $(y_{1j}, y_{2j})_{j=1}^m$, nominal FDR level $q$.
\State Estimate $\widehat \phi = (\widehat \pi, \widehat A, \widehat f_1, \widehat f_2)$ from the EM algorithm. %in Section \ref{subsec_est}.
\State Compute forward and backward probabilities $\widehat\alpha_j(s_j)$ and $\widehat\beta_j(s_j)$ for $j = 1,\ldots,m$ and $s_j = 0,1,2,3$ by (\ref{eq_alpha_beta_est_initial}), (\ref{eq_alpha_est_update}) and (\ref{eq_beta_est_update}).
\State Compute $\widehat{\rm rLIS}_j = \sum_{s_j=0}^2\widehat\alpha_j(s_j)\widehat\beta_j(s_j)/\sum_{s_j=0}^3\widehat\alpha_j(s_j)\widehat\beta_j(s_j)$.
\State Order the rLIS values $\widehat{\rm rLIS}_{(1)}\leq \cdots \leq \widehat{\rm rLIS}_{(m)}$ with the corresponding replicability null hypotheses denoted by $H_{0(1)}, \ldots, H_{0(m)}$.
\State Let $\widehat R = \max\{r: r^{-1}\sum_{j=1}^r \widehat{\rm rLIS}_{(j)}\leq q\}.$
\State \textbf{Output: }Reject $H_{0j}$  if $\widehat{\rm rLIS}_j\leq \widehat{\rm rLIS}_{(\widehat R)}$ for $j = 1,\ldots,m$.
\end{algorithmic}
\end{algorithm}

\subsection{Generalization to more than two studies}
We extend our procedure to enable replicability analysis across multiple studies. Suppose we have $p$-values from $n$ studies ($n > 2$) for $m$ features, denoted by $(y_{ij})_{n\times m}$. For $i = 1, \ldots, n$ and $j = 1,\ldots, m$, let $\theta_{ij}$ denote the hidden binary state of the $j$th feature in study $i$, where $\theta_{ij} = 1$ indicates an association of the $j$th feature with the phenotype in study $i$ and $\theta_{ij} = 0$ indicates no association. Our goal is to identify features that are associated with the phenotype across all $n$ studies. The replicability null and alternative hypotheses for the $j$th feature are 
$$H_{0j}: \prod_{i=1}^n \theta_{ij} = 0 \quad \text{and} \quad H_{1j}: \prod_{i=1}^n \theta_{ij} = 1.$$ 

Because the joint hidden states $(\theta_{1j}, \ldots, \theta_{nj})$ have $2^n$ possible configurations, the composite null hypothesis $H_{0j}$ comprises $2^n - 1$ distinct states. Modeling the dependence across all $n$ studies using a standard HMM would therefore require a $2^n$-dimensional stationary probability vector $\pi$ and a $2^n \times 2^n$ transition matrix $A$. This exponential growth introduces a severe computational bottleneck as $n$ increases.

To alleviate the computational burden for large $n$, we propose a pairwise testing strategy. Specifically, we conduct replicability analysis on all pairs of studies, convert the test results into pairwise e-values, and subsequently aggregate them. For each pair of studies $1 \leq k < \ell \leq n$, let $(\widehat{\rm rLIS}_j^{k\ell})_{j=1}^m$ denote the estimated pairwise replicability Local Index of Significance (rLIS) values, and let $\widehat R^{k\ell}(t)$ denote the number of rejections at the pairwise FDR level $t \in (0, q]$.  Let $\mathcal H_0 = \{j: \prod_{i=1}^n \theta_{ij} = 0\}$ and $\mathcal H_0^{k\ell} = \{j: \theta_{kj}\theta_{\ell j} = 0\}$ represent the multi-study and pairwise null indices, respectively. An e-value is a random variable whose expectation is bounded by $1$ under the null hypothesis, where larger values signify stronger evidence against the null. Accordingly, we define the pairwise e-values for $1 \leq j \leq m$ as:\begin{align}\label{eq_evalue}\widehat e_{j}^{k\ell}(t)=\frac{m I\left(\widehat{\rm rLIS}j^{k\ell} \le \widehat{\rm rLIS}{(\widehat R^{k\ell}(t))}^{k\ell} \right)}{\sum_{j'=1}^{m}I\left(\widehat{\rm rLIS}{j'}^{k\ell}  \le \widehat{\rm rLIS}{(\widehat R^{k\ell}(t))}^{k\ell} \right)\widehat{\rm rLIS}_{j'}^{k\ell} }.\end{align} This construction is partly motivated by \cite{li2025note}, who demonstrated that most existing multiple testing procedures are equivalent to e-BH procedures \citep{wang2022false} when applied to an appropriately defined set of e-values.

For full $n/n$ replicability, a feature must demonstrate evidence of association across all $n$ studies, with the strength of this evidence quantified by e-values. Under this framework, if even a single pair of studies provides weak pairwise evidence, the global evidence must also be weak. Conversely, a large aggregated e-value is achieved only when every relevant pair exhibits sufficiently strong pairwise replicability. This logic motivates aggregating the pairwise e-values using a minimum operation.

The e-BH procedure for a list of e-values $(e_1, \ldots, e_m)$ with FDR nominal level $t$ proceeds by ordering the e-values from largest to smallest as $e_{(1)} \geq \cdots \geq e_{(m)}$ and rejecting the hypotheses corresponding to the top $R$ e-values, where $R = \max\{r: e_{(r)} \geq m/(rt)\}$. 
% {\color{red}
The key requirement for applying e-BH is that the constructed e-values have average expectation at most one over the null hypotheses. In our setting, we show asymptotically that
\begin{align*}
    \frac{1}{m}\sum_{j\in \mathcal H_0} \mathbb E(\widehat e_j) \le 1,
\end{align*}
which permits the use of the e-BH argument of \citet{wang2022false}. The proof is given in the Appendix.
% }
Following the e-BH procedure, we sort the pairwise e-values in (\ref{eq_evalue}) in descending order as $\widehat e_{(1)}^{k\ell}(t)  \geq \cdots \geq\widehat e_{(m)}^{k\ell}(t)$. Note that for studies $k$ and $\ell$, all the e-values share the same denominator, and the numerators take either the value of $0$ or $m$. Furthermore, the non-decreasing sequence $\widehat{\rm rLIS}_{(1)}^{k\ell}, \ldots, \widehat{\rm rLIS}_{(m)}^{k\ell}$ corresponds to the non-increasing e-value sequence $\widehat e_{(1)}^{k\ell}(t), \ldots, \widehat e_{(m)}^{k\ell}(t)$. By the construction of e-values, the largest $r$ such that $r^{-1}\sum_{j=1}^r \widehat{\rm rLIS}_{(j)}^{k\ell} \leq t$ is the same as the largest $r$ such that $\widehat e_{(r)}^{k\ell}(t)\geq m/(rt)$. Therefore, we obtain the following result, which shows the equivalence between the testing procedure (\ref{eq_rej_procedure}) based on the replicability Local Index of Significance values and the e-BH procedure based on $(\widehat e^{k\ell}_j(t))_{j=1}^m$.
\begin{proposition}
    For any pair $1\le k<\ell \le n$, applying the e-BH procedure \citep{wang2022false} based on the e-values defined in (\ref{eq_evalue}) yields an equivalent set of discoveries to the testing procedure in (\ref{eq_rej_procedure}) with FDR level $t$.
\end{proposition}

For any pair $1\le k<\ell \le n$, define $\pi_0^{k\ell} = \mathbb P(\theta_{kj} = 0, \theta_{\ell j} = 0)$, $\pi_1^{k\ell} = \mathbb P(\theta_{kj} = 0, \theta_{\ell j} = 1)$ and $\pi_2^{k\ell} = \mathbb P(\theta_{kj} = 1, \theta_{\ell j} = 0)$, and denote $\widehat\pi_0^{k\ell}, \widehat\pi_1^{k\ell}$ and $ \widehat\pi_2^{k\ell}$ as the corresponding estimators. To ensure replicability across all studies, we aggregate the pairwise e-values by defining
\begin{align}
    \label{eq_e_val_min}
    \widehat e_j(t) = \min_{k< \ell}\{(\widehat\pi_0^{k\ell}+\widehat\pi_1^{k\ell}+\widehat\pi_2^{k\ell})\cdot\widehat e_j^{k\ell} (t)\} \quad\text{ for }j = 1,\ldots,m.
\end{align}

The factor $(\widehat\pi_0^{k\ell}+\widehat\pi_1^{k\ell}+\widehat\pi_2^{k\ell})$ estimates the composite null proportion for pair $(k,\ell)$. We multiply the pairwise e-value by this factor to have valid e-values. This is the key step for verifying the asymptotic average e-value condition under the null required by the eBH procedure of \citet{wang2022false}.
We then apply the e-BH procedure to the aggregated e-values $(\widehat e_1(t), \ldots, \widehat e_m(t))$. 
To avoid the degenerate case where no hypothesis is rejected, we set a constant lower bound $q _-$ for the choice of the pairwise FDR level $t$. In practice, we suggest setting $q_- = q/\{n(n-1)\}$. In the Supplementary Materials, we show that for any $t\in[q_-, q]$, the eBH procedure controls the FDR at level $q$. 

By the construction in \eqref{eq_evalue}, for any fixed pair $1\le k<\ell\le n$ and fixed $t$, each pairwise e-value $\widehat e_j^{k\ell}(t)$ across $j = 1, \ldots, m$ is either $0$ or the same positive value,
$$U^{k\ell}(t) = \frac{m}{\sum_{j'=1}^{m}I\left(\widehat{\rm rLIS}_{j'}^{k\ell}  \le \widehat{\rm rLIS}_{(\widehat R^{k\ell}(t))}^{k\ell} \right)\widehat{\rm rLIS}_{j'}^{k\ell}}.$$
Consequently, the aggregated e-value in \eqref{eq_e_val_min} is also either $0$ or the positive value
$$U(t) = \min_{k<\ell}\{(\widehat\pi_0^{k\ell}+\widehat\pi_1^{k\ell}+\widehat\pi_2^{k\ell})\cdot U^{k\ell} (t)\}.$$
As $t$ increases, the pairwise rejection sets become larger, so more pairwise e-values in \eqref{eq_evalue} are nonzero and more aggregated e-values in \eqref{eq_e_val_min} are nonzero. At the same time, a larger $t$ increases the denominator in $U^{k\ell}(t)$, leading to smaller values of $U^{k\ell}(t)$ and hence a smaller $U(t)$. Thus, choosing $t$ too large may make the nonzero aggregated e-values too small to pass the eBH threshold at fixed FDR level $q$, resulting in an empty discovery set.
Therefore, to ensure the maximum number of discoveries, we select the largest $t\in[q_-, q]$, such that
\begin{align*}
    \min_{k<\ell}\left\{\frac{(\widehat \pi^{k\ell}_0 + \widehat \pi^{k\ell}_1 + \widehat \pi^{k\ell}_2)\widehat R(t)}{\widehat R^{k\ell}(t)}\right\} \geq \frac{t}{q}.
\end{align*}
This is a technical requirement. The rationale and details are provided in Step 1 of Section \ref{subsec_eBH_proof} in the Supplementary Materials.
We denote the selected pairwise FDR level as $\widehat q_m,$ which is used to form the final rejection set for replicability analysis across $n$ studies. 

Algorithm \ref{algo_n_study} summarizes the complete 
CoHiM procedure for the multiple study setting using pairwise e-value aggregation. To reduce computational cost and avoid redundant estimation, we recommend estimating the non-null densities $f_i$ once per study using a set of disjoint study pairs (e.g., $(1,2)$ and $(3,4)$ for $n=4$). The resulting estimates $(\widehat{f}_i)_{i=1}^n$ are reused across other pairs, while the stationary probabilities and transition matrices are learned separately for each pair via the EM algorithm. This approach replaces the full $2^n$-state HMM with $n(n-1)/2$ pairwise $4$-state HMMs, greatly reducing the computational burden while preserving theoretical guarantees on FDR control.   

\begin{algorithm}[htbp]
\caption{CoHiM for the $n$-study case}\label{algo_n_study}
\begin{algorithmic}[1]
\State \textbf{Input: }$p$-values from $n$ studies $(y_{ij})_{n\times m}$, nominal FDR level $q$ and pairwise FDR lower bound $q_- < q$.
\For{$1\leq k < \ell \leq n$}
    \State Compute $\widehat{\rm rLIS}_j^{k\ell}$ as in Algorithm \ref{algo_2study}.
    \State Order the rLIS values $\widehat{\rm rLIS}_{(1)}^{k\ell}\leq \cdots \leq \widehat{\rm rLIS}_{(m)}^{k\ell}$.
    \State Find the pairwise rejection number $\widehat R^{k\ell}(t) = \max\{r: r^{-1}\sum_{j=1}^r \widehat{\rm rLIS}_{(j)}^{k\ell} \leq t\}$ as a function of $t\in[q_-, q]$. 
    \State Compute 
        $\widehat e_j^{k\ell}(t) = mI(\widehat{\rm rLIS}_j^{k\ell} \le \widehat{\rm rLIS}_{(\widehat R_{k\ell}(t))}^{k\ell} )/\{\sum_{j=1}^{\widehat R_{k\ell}(t)}\widehat{\rm rLIS}_{(j)}^{k\ell}\}.$
\EndFor
\State Obtain $\widehat e_j(t) = \min_{k< \ell}\{(\widehat\pi_0^{k\ell} + \widehat\pi_1^{k\ell} + \widehat\pi_2^{k\ell})\widehat e_j^{k\ell}(t)\}$ for $j=1,\ldots,m$.
\State Order the e-values as $\widehat e_{(1)}(t) \geq \cdots \geq \widehat e_{(m)}(t)$.
\State Find the rejection number $\widehat R(t) = \max\{j: \widehat e_{(j)}(t)\geq m/(jq)\}$.
\State Find $\widehat q_m$ as the largest value in $[q_-, q]$ satisfying $\min\{(\widehat\pi_0^{k\ell} + \widehat\pi_1^{k\ell} + \widehat\pi_2^{k\ell})\widehat R(\widehat q_m)/\widehat R^{k\ell}(\widehat q_m)\}\geq \widehat q_m/q$. If no such value exists, let $\widehat q_m = q$.
\State \textbf{Output: } Reject $H_{0j}$ if $\widehat e_j(\widehat q_m) \geq \widehat e_{(\widehat R(\widehat q_m))}(\widehat q_m)$ for $j = 1,\ldots,m$.
\end{algorithmic}
\end{algorithm}

\section{Theory}\label{sec_theory}
\subsection{Notations}
We first consider the two-study case. Recall that $s_j$ takes values in $0,1,2,3$, corresponding to $(\theta_{1j}, \theta_{2j}) = (0,0), (0,1), (1,0), (1,1)$, respectively. Let $\pi = (\pi_0, \pi_1, \pi_2, \pi_3)$ denote the stationary probability of the underlying Markov chain, and let $A = (a_{k\ell})_{k,\ell=0}^3$ be its transition probability matrix. We denote by $f_1$ the probability density function of $y_{1j}$ conditional on $\theta_{1j} = 1$ and $f_2$ the probability density function of $y_{2j}$ conditional on $\theta_{2j} = 1$. Since the HMM is assumed to be stationary, we have $\pi A = \pi$, i.e., $\pi$ is the left eigenvector of $A$ with eigenvalue $1$. When ${\rm rank}(A-I_4) = 3$, the stationary probability $\pi$ is uniquely determined by $A$ under the constraint that $\sum_{k=0}^3\pi_k = 1$. Let $\Phi$ denote the parameter space of $\phi = (\pi, A, f_1, f_2)$, defined as
\begin{align*}
    \Phi = \bigg\{\phi = (\pi, A, f_1, f_2):&~\pi_k \in (0, 1), \sum_{k=0}^3\pi_k = 1; a_{k\ell} \in (0, 1), \\
    &\sum_{\ell=0}^3 a_{k\ell} = 1, \text{ for } k=0,1,2,3; \pi A = \pi; f_1, f_2\in \mathcal H\bigg\},
\end{align*}
where $\mathcal H$ is the space of non-increasing probability density functions supported on $[0, 1]$ satisfying $\lim_{\delta\to0^+}\sup_{f\in\mathcal H}\int_0^\delta f(y){\rm d}y = 0$. 

To measure the distance between two parameters $\phi^{(1)} =  (\pi^{(1)}, A^{(1)}, f_1^{(1)}, f_2^{(1)})$ and  $\phi^{(2)} =  (\pi^{(2)}, A^{(2)}, f_1^{(2)}, f_2^{(2)})$, we define the following metric:
\begin{equation}
    \label{eq_distance_of_parameters}
    d(\phi^{(1)}, \phi^{(2)}) = \|\pi^{(1)} - \pi^{(2)}\|_2 + \|A^{(1)} - A^{(2)}\|_F + d_H(f_1^{(1)}, f_1^{(2)}) + d_H(f_2^{(1)}, f_2^{(2)}),
\end{equation}
where $\|\cdot\|_2$ denotes the $L_2$ norm for vectors, $\|\cdot\|_F$ denotes the Frobenius norm for matrices, and $d_H(\cdot, \cdot)$ denotes the Hellinger distance between two density functions defined as $d_H(g_1, g_2)^2 = 0.5\cdot\int_0^1 \left\{g_1(y)^{1/2} - g_2(y)^{1/2}\right\}^2{\rm d}y.$
Under the distance metric (\ref{eq_distance_of_parameters}), we establish the compactness of the parameter space $\Phi$.
\begin{proposition}\label{prop_compact}
    The parameter space $\Phi$ is compact with respect to the distance $d(\cdot, \cdot)$ defined in (\ref{eq_distance_of_parameters}).
\end{proposition}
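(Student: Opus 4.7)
The plan is to exploit the product structure of $d$: since
$d(\phi,\phi^*) = \|\pi-\pi^*\|_2 + \|A-A^*\|_F + d_H(f_1,f_1^*) + d_H(f_2,f_2^*)$
is a sum of four metrics acting on independent factors of $\phi = (\pi,A,f_1,f_2)$, convergence in $d$ is equivalent to coordinatewise convergence. I would therefore take an arbitrary sequence $\{\phi^{(n)}\}\subset\Phi$ and extract nested convergent subsequences in each factor, then verify that the joint constraints ($\sum_k\pi_k=1$, $A$ row-stochastic, $\pi A=\pi$, $f_1,f_2\in\mathcal H$) pass to the limit, producing a point of $\Phi$.

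The finite-dimensional part is routine Bolzano--Weierstrass. The set of admissible $(\pi,A)$ is a closed bounded subset of $\mathbb{R}^4\times\mathbb{R}^{16}$ cut out by the linear equalities $\sum_k\pi_k=1$, $\sum_la_{kl}=1$, and $\pi A=\pi$, together with non-negativity. (I would work with the closed version of the open constraints $\pi_k,a_{kl}\in(0,1)$, since sequential compactness is a statement about the closure.) This yields a subsequence along which $(\pi^{(n)},A^{(n)})\to(\pi^*,A^*)$ in the Euclidean and Frobenius norms, and continuity of matrix multiplication transfers $\pi^*A^*=\pi^*$ to the limit.

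The main work is sequential compactness of $(\mathcal H,d_H)$. For any sequence $\{f^{(n)}\}\subset\mathcal H$, the corresponding CDFs $F^{(n)}(x)=\int_0^x f^{(n)}(y)\,dy$ are concave, nondecreasing, with $F^{(n)}(0)=0$ and $F^{(n)}(1)=1$. Helly's selection theorem yields a subsequence $F^{(n_k)}\to F^*$ pointwise on $[0,1]$ to a concave nondecreasing $F^*$ with $F^*(1)=1$; its derivative $f^*=(F^*)'$ is non-increasing almost everywhere, and $f^{(n_k)}(y)\to f^*(y)$ at every continuity point of $f^*$. The monotonicity bound $f^{(n_k)}(y)\le 1/y$ (since $\int_y^1 f^{(n_k)}\le 1$) uniformly dominates the integrand $(\sqrt{f^{(n_k)}}-\sqrt{f^*})^2$ on any $[\epsilon,1]$, so dominated convergence gives $\int_\epsilon^1(\sqrt{f^{(n_k)}}-\sqrt{f^*})^2\,dy\to 0$; the tail contribution on $[0,\epsilon]$ is crudely bounded by $2[F^{(n_k)}(\epsilon)+F^*(\epsilon)]$, which one drives to zero by sending $k\to\infty$ and then $\epsilon\downarrow 0$.

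The hard part---and the main obstacle---is showing that no mass escapes to the origin, i.e., that $\int_0^1 f^*=1$, so that $f^*$ is a genuine density in $\mathcal H$ rather than a defective limit (if $F^*(0+)=c>0$ then $f^*$ captures only mass $1-c$ and leaves a Dirac at $0$). A priori a monotone sequence can concentrate mass arbitrarily close to $0$---the canonical example $f^{(n)}=n\mathbf{1}_{(0,1/n]}$ converges pointwise to $0$---so Helly's theorem alone is insufficient and an additional tightness argument at the origin is required. Resolving this, either via a uniform bound $\sup_n f^{(n)}(0+)\le M$ inherent in the parameter space or via an equicontinuity-type estimate on the CDFs near $0$ that forces $\sup_k F^{(n_k)}(\epsilon)\to 0$ as $\epsilon\downarrow 0$, is the technical crux. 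Once tightness at $0$ is secured for both $f_1^{(n_k)}$ and $f_2^{(n_k)}$, the Hellinger tails vanish, the limits $f_1^*,f_2^*$ lie in $\mathcal H$, and combining with the finite-dimensional step yields $\phi^{(n_k)}\to\phi^*\in\Phi$ in $d$, proving compactness.
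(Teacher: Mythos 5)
Your decomposition of $d$ into its four factors and your handling of the finite-dimensional block $(\pi,A)$ are fine, but the step you yourself flag as ``the technical crux'' --- tightness of $\{f^{(n)}\}$ at the origin --- is a genuine gap, and it is not closable from the definition of $\mathcal H$ given in the paper. Your own example $f^{(n)}=n\mathbf 1_{(0,1/n]}$ is not merely evidence that Helly's theorem is insufficient; it shows $(\mathcal H,d_H)$ is not totally bounded. Since $d_H(g_1,g_2)^2=1-\int_0^1\sqrt{g_1g_2}\,\mathrm dy$, one computes $d_H(f^{(n)},f^{(m)})^2=1-\sqrt{n/m}$ for $n<m$, so along $n_k=2^k$ all pairwise Hellinger distances are bounded below by $(1-2^{-1/2})^{1/2}$ and no subsequence is Cauchy; moreover for any fixed $g\in\mathcal H$ with CDF $G$, Cauchy--Schwarz gives $\int_0^1\sqrt{f^{(n)}g}\le G(1/n)^{1/2}\to0$, so no subsequence converges to any element of $\mathcal H$ either. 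Consequently no equicontinuity estimate near $0$ exists in general, and the alternative you mention --- a uniform bound $\sup_nf^{(n)}(0+)\le M$ --- is an extra hypothesis not built into $\Phi$ (condition (C2) bounds $f_1,f_2$ near $0$ from \emph{below}, not above). Your plan therefore cannot be completed as a proof of sequential compactness of $\mathcal H$.

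For comparison, the paper takes a different and shorter route: it starts from a \emph{Cauchy} sequence $g_n\subseteq\mathcal H$, maps it isometrically (up to the factor $2$) into $L^2[0,1]$ via $h_n=g_n^{1/2}$, obtains a limit $h$ from completeness of $L^2$ (the proof asserts $L^2[0,1]$ is ``compact,'' but what is actually used is completeness), verifies $\|h\|_2=1$, and recovers monotonicity of $h$ from an a.e.-convergent subsequence via Riesz's theorem. That argument establishes that $(\mathcal H,d_H)$ is \emph{complete}, i.e., closed under Cauchy limits; it never addresses total boundedness, which is precisely where your counterexample lives. So the two approaches are genuinely different --- Helly on CDFs plus dominated convergence away from $0$ versus the square-root embedding into $L^2$ --- but they stall at the same point: completeness plus tightness at $0$ is what compactness would require, and the latter fails for $\mathcal H$ as defined. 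A correct version of the statement would either claim only completeness/closedness or restrict $\mathcal H$ to densities dominated by a fixed envelope near the origin.
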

The proof of Proposition \ref{prop_compact} is provided in the Supplementary Materials. Compactness is a crucial requirement for establishing the consistency of the maximum likelihood estimator. The transition matrix $A$ is assumed to have strictly positive entries, which implies that the Markov chain is irreducible (i.e., every state can be reached from any other state in finite steps), aperiodic (i.e., the chain does not exhibit periodic behavior) and ergodic (i.e., the chain converges to its unique stationary probability $\pi$ regardless of the initial state) by \cite{walters2000introduction}. Additionally, the non-null density functions $f_1$ and $f_2$ (corresponding to $\theta_{1j}=1$ and $\theta_{2j}=1,$ respectively) are non-uniform and non-increasing. Consequently, assuming conditional independence given the latent states, the four state-dependent density functions $f^{(s_j)}$ in Equation (\ref{eq_f_sj}) for $s_j=0,1,2,3$ are mutually distinct. By Theorem 1 of \cite{alexandrovich2016nonparametric}, the model parameter $\phi = (\pi, A, f_1, f_2)$ is fully identifiable up to label-switching, provided that $A$ is of full rank.

Our theoretical analysis is related in spirit to the empirical-Bayes FDR analysis for two-state nonparametric HMMs in \citet{abraham2022multiple}. However, the present setting requires additional treatment because the two-study replicability null is a union of three hidden states in a four-state HMM, and the procedure further involves study-specific non-null densities and a multi-study e-value aggregation step.

\subsection{Consistency of the maximum likelihood estimation}\label{subsec_consistency}
We impose the following conditions to establish the consistency of the maximum likelihood estimator $\widehat{\phi}_m$ defined in (\ref{eq_MLE_definition}) and to guarantee asymptotic FDR control.

(C1)
The true parameter  $\phi^*$ lies in the interior of the parameter space $\Phi$.

(C2) There exist constants $\delta_0 > 0$ and $0<\varepsilon_0 \leq 1/4$ such that for any $\phi$ satisfying $d(\phi, \phi^*) < \delta_0$, we have $\pi_k(\phi) \geq \varepsilon_0$ and $a_{k\ell}(\phi) \geq \varepsilon_0$ for all $k, \ell = 0, 1, 2, 3$. Furthermore, $\pi_3(\phi) < 1-q$, where $q$ is the target FDR level. 
Additionally, define 
\begin{align*}
    c = c(\varepsilon_0, q) = \frac{1-2\varepsilon_0 + \{(1-2\varepsilon_0)^2 + 4 (1-3\varepsilon_0)\varepsilon_0^3 q/(2-q)\}^{1/2}}{2\varepsilon_0^3q/(2-q)}.
\end{align*}
Since both $\varepsilon_0$ and $q$ are small, it follows that $c > 1$.
We require
\begin{align*}
    \lim_{y\rightarrow 0} f_1(y) > c, \quad \lim_{y\rightarrow 0} f_2(y) > c.
\end{align*}

(C3) There exists a constant $C_1 >0$ such that $\mathbb E_{\phi^*}\left[\left|\log f^{(k)}\left(Y_{11}, Y_{21}; \phi^*\right)\right|\right]<C_1$ for $k = 0,1,2,3$.

(C4) There exist constants $C_2>0$ and $\delta_1>0$ such that $$\mathbb E_{\phi^*}\left[\sup _{d(\phi^*, \phi)<\delta_1}\left\{\log f^{(k)}\left(Y_{11}, Y_{21}; \phi\right)\right\}^{+}\right]<C_2$$ for $k = 0, 1, 2, 3$, where $x^{+}=\max \{x, 0\}$. 

(C5) There exists $\delta_2 > 0$ such that for each $k = 0, 1, 2, 3$, 
$$\mathbb P_{\phi^*}(\rho_0(Y_{11}, Y_{21}) < \infty\mid s_1 = k) > 0,$$ where
\begin{align*}
    \rho_0(y_1, y_2) = \sup_{d(\phi, \phi^*) < \delta_2} \max_{0\leq k,k'\leq 3}\left\{\frac{f^{(k)}(y_1,y_2;\phi)}{f^{(k')}(y_1,y_2;\phi)}\right\}.
\end{align*}

Condition (C1) ensures that $\phi^*$ lies in the interior of a compact parameter space $\Phi,$ facilitating the consistency of the MLE. Condition (C2) prevents degeneracy in both stationary and transition probabilities near $\phi^*$ and enforces that small $p$-values are more probable under the non-null, aligning with the monotonicity assumption in (\ref{monotone}). The value of $c(\varepsilon_0, q)$ is the positive root of a quadratic equation arising in the proof of Theorem~\ref{thm_data_driven_FDR_control}; it ensures a sufficient lower bound on the signal strengths. A simple sufficient interpretation is that, in a neighborhood of zero, both non-null densities are bounded below by the constant $c(\epsilon_0,q)$, and this lower bound depends only on the minimum stationary and transition probability bound $\epsilon_0$ and the target FDR level $q$.

Condition (C3) is a standard regularity assumption from \cite{leroux1992maximum}, ensuring integrability of the log-likelihood. Condition (C4) guarantees the existence of the generalized Kullback–Leibler divergence between densities indexed by $\phi^*$ and $\phi$, where $\phi$ lies in a small neighborhood of $\phi^*$, also per \cite{leroux1992maximum}. Condition (C5) rules out pathological cases by ensuring that the likelihood ratios between state-dependent distributions remain finite with positive probability.

\begin{theorem}
    \label{thm_consistency}
    Under Conditions (C1)-(C5), the maximum likelihood estimator $\widehat\phi_m$ in (\ref{eq_MLE_definition}) is consistent; that is, $d(\widehat{\phi}_m,\phi^*)\to 0$ in probability as $m\to\infty$.
\end{theorem}

\subsection{Oracle FDR control}
%Before establishing the asymptotic FDR guarantee, we first present a widely used oracle FDR control result and its proof to provide intuition for our procedure. 
In the oracle case, we assume that $\phi^* = (\pi^*, A^*, f_1^*, f_2^*)$ is known. The following theorem shows that FDR can be controlled under the oracle case.
\begin{theorem}
    \label{thm_oracle_FDR_control}
    Under the oracle case where $\phi^*$ is known, denote ${\rm rLIS}_j = \mathbb P_{\phi^*}(s_j \in \{0, 1, 2\} \mid (y_{1j}, y_{2j})_{j=1}^m)$ for $j = 1,\ldots, m$. Order the test statistics ${\rm rLIS}_{(1)} \leq \cdots \leq {\rm rLIS}_{(m)}$ with the corresponding null hypotheses $H_{0(1)}, \ldots, H_{0(m)}$. For a pre-specified FDR level $q$, we have the following procedure
    \begin{align*}
        R =& \max\left\{r: \frac{1}{r}\sum_{j=1}^r {\rm rLIS}_{(j)} \leq q\right\}, \\
        & \text{and reject } H_{0(j)} \quad \text{ for }j = 1,\ldots, R.
    \end{align*}
    Then this procedure can control the FDR at level $q$.
\end{theorem}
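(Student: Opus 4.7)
The plan is to reformulate the step-up rule as a single-threshold rule and then combine a pointwise (deterministic) inequality with the identity
\[
E\{V(\lambda)\mid (y_{1j}, y_{2j})_{j=1}^m\} \;=\; \sum_{j=1}^m I(\mathrm{rLIS}_j\le \lambda)\,\mathrm{rLIS}_j,
\]
which is the conditional version of the computation leading to (\ref{eq_EV}). The non-adaptive part of the argument is entirely finite-sample and algebraic; the only genuinely probabilistic input is the tower property applied to a data-dependent threshold.

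First, I would set $\lambda_{\mathrm{OR}} := \mathrm{rLIS}_{(R)}$, so that the step-up procedure rejects precisely those $H_{0j}$ with $\mathrm{rLIS}_j \le \lambda_{\mathrm{OR}}$ and $R(\lambda_{\mathrm{OR}}) = R$. By the maximality in the definition of $R$, I obtain the pointwise (almost-sure) bound
\[
\sum_{j=1}^m I(\mathrm{rLIS}_j\le\lambda_{\mathrm{OR}})\,\mathrm{rLIS}_j \;=\; \sum_{j=1}^R \mathrm{rLIS}_{(j)} \;\le\; qR \;=\; q\bigl\{R(\lambda_{\mathrm{OR}})\vee 1\bigr\},
\]
so that the rLIS-weighted surrogate for FDP is bounded above by $q$ path by path.

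Next, because $\lambda_{\mathrm{OR}}$, $R(\lambda_{\mathrm{OR}})$, and each indicator $I(\mathrm{rLIS}_j\le\lambda_{\mathrm{OR}})$ are measurable with respect to the observed paired $p$-values, conditioning on $(y_{1j}, y_{2j})_{j=1}^m$ and using the definition $\mathrm{rLIS}_j = P_{\phi_0}(s_j\in\{0,1,2\}\mid (y_{1j}, y_{2j})_{j=1}^m)$ gives
\[
\mathrm{FDR} \;=\; E\!\left\{\frac{V(\lambda_{\mathrm{OR}})}{R(\lambda_{\mathrm{OR}})\vee 1}\right\} \;=\; E\!\left\{\frac{E[V(\lambda_{\mathrm{OR}})\mid (y_{1j}, y_{2j})_{j=1}^m]}{R(\lambda_{\mathrm{OR}})\vee 1}\right\} \;=\; E\!\left\{\frac{\sum_{j=1}^m I(\mathrm{rLIS}_j\le\lambda_{\mathrm{OR}})\,\mathrm{rLIS}_j}{R(\lambda_{\mathrm{OR}})\vee 1}\right\} \;\le\; q,
\]
where the final inequality is the pointwise bound from the previous paragraph, propagated through the expectation.

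The main subtle point — and essentially the only one — is justifying that the identity (\ref{eq_EV}), originally derived for a fixed $\lambda$, continues to hold at the random threshold $\lambda_{\mathrm{OR}}$. This is the step I would spell out carefully: under the conditioning on $(y_{1j}, y_{2j})_{j=1}^m$, the indicator $I(\mathrm{rLIS}_j \le \lambda_{\mathrm{OR}})$ becomes a measurable constant and can be factored out of the conditional expectation, reducing the problem to evaluating $E\{I(s_j\in\{0,1,2\})\mid (y_{1j}, y_{2j})_{j=1}^m\}$, which equals $\mathrm{rLIS}_j$ by definition. No large-sample tools are needed for the oracle statement; everything is a finite-sample measurability and tower-property argument, setting the stage for the data-driven analogue in Theorem \ref{thm_data_driven_FDR_control}, where consistency of $\hat\phi_m$ will be used to pass from $\widehat{\mathrm{rLIS}}_j$ back to $\mathrm{rLIS}_j$.
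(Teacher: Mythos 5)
Your proposal is correct and follows essentially the same route as the paper's proof: set the random threshold $\lambda_{\mathrm{OR}}=\mathrm{rLIS}_{(R)}$, condition on $(y_{1j},y_{2j})_{j=1}^m$ so that $R$ and the rejection indicators are measurable constants, use $E\{V\mid (y_{1j},y_{2j})_{j=1}^m\}=\sum_{j=1}^{R}\mathrm{rLIS}_{(j)}$, and invoke the defining inequality $\frac{1}{R}\sum_{j=1}^{R}\mathrm{rLIS}_{(j)}\le q$ pathwise. The extra care you take in justifying the identity at the data-dependent threshold is a welcome elaboration of a step the paper states only briefly, but it is not a different argument.
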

\begin{proof}
    Denote $R$ as the number of total rejections and $V$ as the number of false rejections. If we reject the replicability null hypothesis $H_{0j}$ if ${\rm rLIS}_j \leq \lambda$ for some threshold $\lambda$, then $\lambda$ satisfies
    \begin{align*}
        {\rm rLIS}_{(R)} \leq \lambda < {\rm rLIS}_{(R+1)}.
    \end{align*}
    Let $\lambda = {\rm rLIS}_{(R)}$ for simplicity.

    Therefore,
    \begin{align*}
        {\rm FDR} = & \mathbb E\left\{\frac{V}{R\vee 1}\right\}\\
        =&\mathbb E\left\{\mathbb E\left(\frac{V}{R\vee 1}\bigg| (y_{1j}, y_{2j})_{j=1}^m\right)\right\}\\
        =& \mathbb E\left\{\frac{1}{R\vee 1}\mathbb E\left(V\mid (y_{1j}, y_{2j})_{j=1}^m\right)\right\}.
    \end{align*}
    The last equality holds because $R$ is a function of $(y_{1j}, y_{2j})_{j=1}^m$.
    Since $V = \sum_{j=1}^m I({\rm rLIS}_j \leq \lambda, s_j \in \{0, 1, 2\}) = \sum_{j=1}^R I(s_{(j)} \in \{0, 1, 2\})$ and $(y_{1j}, y_{2j})_{j=1}^m$ does not contain all information for the hidden states $s_j$ for $j = 1, \ldots, m$, 
    we have
    \begin{align*}
        \mathbb E\left(V\mid (y_{1j}, y_{2j})_{j=1}^m\right) =& \sum_{j=1}^R\mathbb P(s_{(j)}\in \{0, 1, 2\}\mid (y_{1j}, y_{2j})_{j=1}^m)
        = \sum_{j=1}^R {\rm rLIS}_{(j)}.
    \end{align*}
    Consequently,
    \begin{align*}
        {\rm FDR} = \mathbb E\left\{\frac{1}{R\vee 1}\sum_{j=1}^R {\rm rLIS}_{(j)}\right\} \leq q.
    \end{align*}
\end{proof}

\subsection{Asymptotic FDR control}\label{subsec_asymp_fdr}
We now establish the asymptotic FDR control of Algorithm \ref{algo_2study}. %, which applies to the two-study setting.
\begin{theorem}
    \label{thm_data_driven_FDR_control}
   Suppose the paired $p$-values $(y_{1j}, y_{2j})_{j=1}^m$ follow a four-state HMM with true parameter $ \phi^* = (\pi^*, A^*, f_1^*, f_2^*)$. Conditional on the hidden states, the paired $p$-values $(y_{1j}, y_{2j})_{j=1}^m$ follow the mixture model specified in (\ref{eq_mixture}), and the null $p$-values are uniformly distributed. 
   If the monotonicity condition (\ref{monotone}) and Conditions (C1)-(C5) hold, then Algorithm \ref{algo_2study} asymptotically controls the FDR at level $q$.
\end{theorem}
Theorem \ref{thm_data_driven_FDR_control} builds on Theorem \ref{thm_consistency}, which guarantees consistent estimation of the stationary probabilities, transition probabilities, and non-null densities. Together, these results provide a self-contained framework for testing composite null hypotheses in high-dimensional settings while accounting for dependence. 

We now establish asymptotic FDR control under the multiple-study setting.
\begin{theorem}
    \label{thm_eBH_procedure}
    Let $(y_{ij})_{n\times m}$ denote the $p$-value matrix for $m$ features across $n$ studies. Suppose $(y_{ij})_{n\times m}$ follow a $2^n$-state HMM, and that the hidden states satisfy the following pairwise Markov property:
     \begin{align}
        \label{eq_markov_property}
         \mathbb P(\theta_{k,j+1}, \theta_{\ell,j+1}\mid \theta_{1j}, \ldots, \theta_{nj}) = \mathbb P(\theta_{k,j+1}, \theta_{\ell,j+1}\mid \theta_{kj}, \theta_{\ell j})
     \end{align} 
     for all $1 \le k< \ell \le n$ and $j = 1,\ldots,m-1$. Suppose the assumptions of Theorem \ref{thm_data_driven_FDR_control} hold for each study pair $(k,\ell)$, with Condition (C2) satisfied for some $q_- < q$. Then, Algorithm \ref{algo_n_study} controls the FDR asymptotically at level $q$.
\end{theorem}
Note that in Condition (C2), the threshold $c(\varepsilon_0, q)$ is a decreasing function of $q$. Therefore, if (C2) holds for any $q_- < q$, it also holds for $q$. For $n>2$, Theorem \ref{thm_eBH_procedure} establishes asymptotic FDR control for the e-BH procedure based on the e-values defined in Equations (\ref{eq_evalue}) and (\ref{eq_e_val_min}). Our procedure assumes that, for each pair $k< \ell$, the joint hidden states $(\theta_{kj}, \theta_{\ell j})_{j=1}^m$ follow a $4$-state Markov chain. Importantly, this assumption is strictly weaker than the following assumptions: 
\begin{enumerate}
\item[(i)] for each $i=1,\ldots,n$, the marginal sequence $(y_{ij})_{j=1}^m$ follows a $2$-state HMM. 
\item[(ii)] the $n$ studies are mutually independent. 
\end{enumerate}
Under Assumptions (i) and (ii), the full $p$-value array $(y_{ij})_{n \times m}$ necessarily follows a \(2^n\)-state HMM and satisfies the pairwise Markov property in (\ref{eq_markov_property}). However, the converse is not true in general. 

We illustrate this distinction by constructing a model that satisfies (\ref{eq_markov_property}) but violates mutual independence. Let $(\theta_{1j}, \theta_{2j}, \theta_{3j})_{j = 1}^m$ be a Markov chain on $\{0,1\}^3$, evolving as follows. At each time step $j$, draw a random vector $(A_{1j}, A_{2j}, A_{3j}) \in \{0,1\}^3$, and update:
\begin{align*}
    \theta_{i,j+1} = \theta_{i,j} \oplus A_{ij}, \quad i = 1, 2, 3,
\end{align*}
where $\oplus$ denotes addition modulo 2: $a \oplus b = I(a \ne b)$ for $a, b \in \{0, 1\}$. This construction allows each of the $2^3=8$ states to be visited and ensures that each pair $(\theta_{kj}, \theta_{\ell j})$ forms a valid 4-state Markov chain.

To verify this, consider the pair $(\theta_{1j}, \theta_{2j})$. At time $j+1$, we have $(\theta_{1,j+1}, \theta_{2,j+1}) = (\theta_{1j} \oplus A_{1j}, \theta_{2j} \oplus A_{2j})$. Hence,
\begin{align*}
    &\mathbb{P}(\theta_{1,j+1} = x', \theta_{2,j+1} = y' \mid \theta_{1j} = x, \theta_{2j} = y, \theta_{3j} = z, (\theta_{1j'}, \theta_{2j'}, \theta_{3j'})_{j'=1}^{j-1}) 
    \\
    =&\mathbb{P}(A_{1j} = x \oplus x', A_{2j} = y \oplus y'),
\end{align*}
which depends only on the current states $(x, y)$ and not on $\theta_{3j}$ or the full past. The same reasoning applies to other pairs $(\theta_{2j}, \theta_{3j})$ and $(\theta_{1j}, \theta_{3j})$. This satisfies the Markovian assumption in \eqref{eq_markov_property}. However, if the components of $A_{1j}, A_{2j}, A_{3j}$ are not mutually independent, for example, if $A_{3j} = A_{1j} A_{2j}$, then the marginal sequences $(\theta_{ij})$ are no longer mutually independent. This violates assumption (ii).%, even though the pairwise Markov property (\ref{eq_markov_property}) still holds. 

\section{Simulations}\label{sec_simu}
\subsection{Two studies}\label{subsec_two_study}
We conduct simulation studies to evaluate the finite-sample performance of the proposed method, CoHiM, in terms of FDR control and statistical power. We fix the number of hypotheses at $m=10,000$ and generate dependence among features through a Markov chain $(s_j)_{j=1}^m$ with stationary probability $\pi=(\pi_0, \pi_1, \pi_2, \pi_3)$. 
We consider two settings for the stationary probabilities:
\begin{itemize}
\item Setting 1: $\pi = (0.75, 0.1, 0.1, 0.05).$
\item Setting 2: $\pi = (0.65, 0.15, 0.15, 0.05).$
\end{itemize}

Because the transition matrix $A$ is linked to the stationary distribution via the stationarity condition $\pi A=\pi$, modifying $\pi$ alters both the signal composition and the local dependence encoded in $A$. The resulting transition matrices are given by
\begin{align*}
    A = (a_{k\ell})_{k,\ell=0}^3 = \begin{pmatrix}
        0.956 & 0.015 & 0.015 & 0.015\\
        0.111 & 0.667 & 0.111 & 0.111\\
        0.111 & 0.111 & 0.667 & 0.111\\
        0.222 & 0.222 & 0.222 & 0.333
    \end{pmatrix}
    \text{ and }
    \begin{pmatrix}
        0.949 & 0.017 & 0.017 & 0.017\\
        0.074 & 0.778 & 0.074 & 0.074\\
        0.074 & 0.074 & 0.778 & 0.074\\
        0.222 & 0.222 & 0.222 & 0.333\\
    \end{pmatrix}.
\end{align*}
The initial state $s_1$ is drawn from the stationary probability, i.e., $\mathbb P(s_1 = k) = \pi_k$ for $k = 0, 1, 2, 3$. The subsequent states are sampled via the Markov transition rule: $\mathbb P(s_j = \ell\mid s_{j-1}=k) = a_{k\ell}$ for $j = 2,\ldots, m$.
For $j = 1, \ldots, m$, each hidden state $s_j$ indicates the pair of binary latent states $(\theta_{1j}, \theta_{2j})$. Conditional on $\theta_{ij}$, we simulate z-scores as $$X_{ij}\mid\theta_{ij}\sim (1-\theta_{ij})\mathcal N(0, 1) + \theta_{ij}\mathcal N(\mu_i, 1),$$ where $\mu_i$ controls the signal strength in study $i$ for $i = 1, 2$. The one-sided $p$-values are then computed as $y_{ij} = \mathbb P(Z \geq X_{ij})$ for $Z \sim \mathcal N(0, 1)$. 
\begin{figure}[htbp]
    \centering{\includegraphics[width=\textwidth]{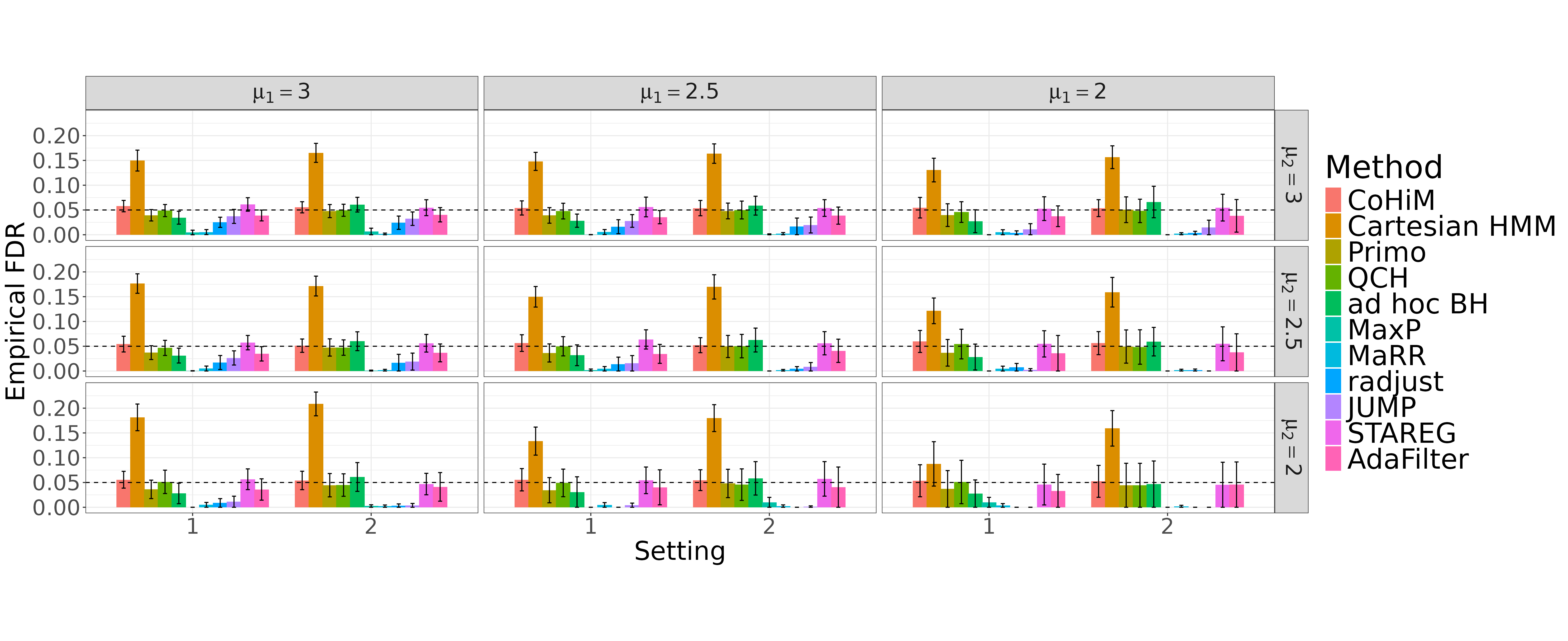}
    \includegraphics[width=\textwidth]{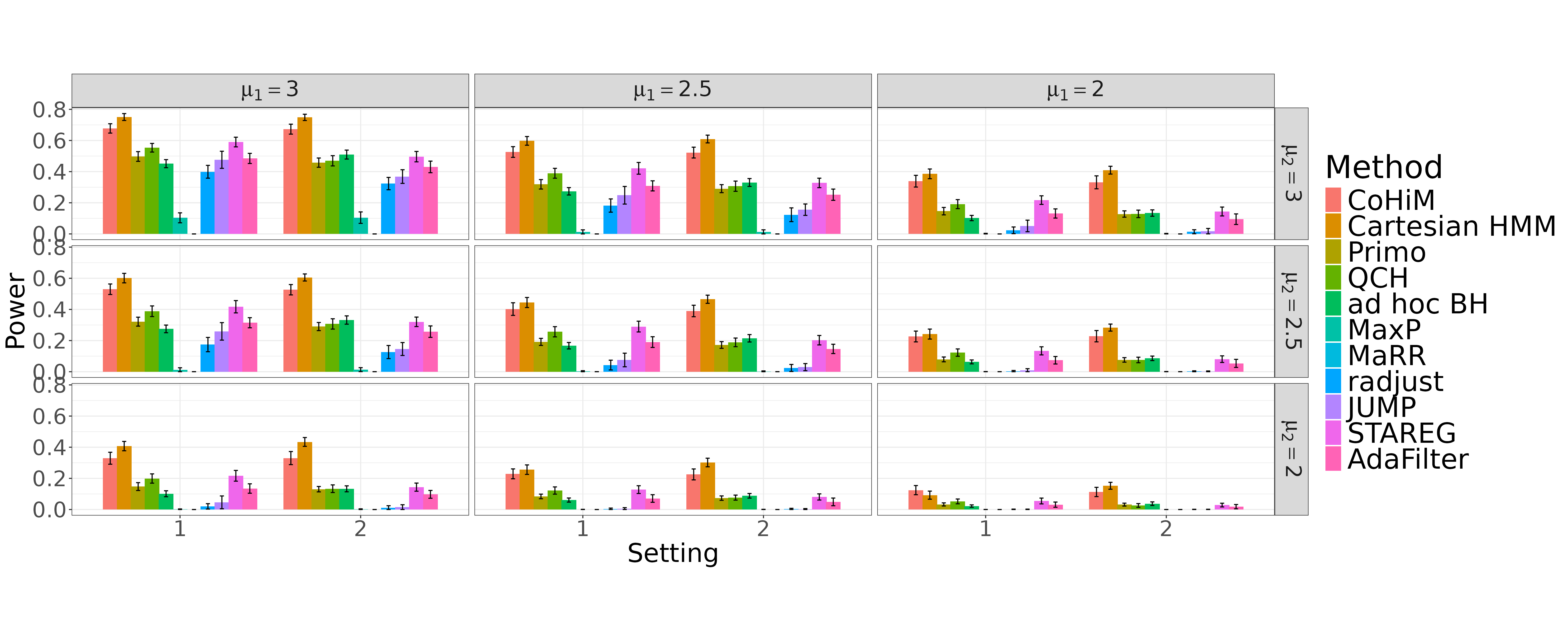}}
    \caption{Empirical FDR and power of different methods under different settings. The dashed horizontal line in the left panel indicates the nominal FDR level of 0.05. Error bars represent the mean $\pm$ one standard deviation over 100 replicates.}
    \label{fig_simu2_fdr_power}
\end{figure}

We compare CoHiM with the following methods:
\begin{itemize}
    \item Cartesian HMM \citep{wang2019replicability}.
    \item PLACO \citep{ray2020powerful}.
    \item Primo \citep{gleason2020primo}.
    \item QCH \citep{mary2022querying, de2025large}.
    \item {\it ad hoc} BH \citep{benjamini1995controlling},
    \item MaxP \citep{benjamini2009selective}, 
    \item MaRR \citep{philtron2018maximum}, 
    \item radjust \citep{bogomolov2018assessing},
    \item JUMP \citep{lyu2023jump}, 
    \item STAREG \citep{li2024stareg},
    \item AdaFilter \citep{wang2022detecting}. 
\end{itemize}
Detailed descriptions of these methods are provided in the Supplementary Materials. For each setting, we conduct $100$ replicates and compute the mean and standard deviation of the empirical FDR and statistical power, using a nominal FDR level of $q = 0.05$. Unless otherwise stated, all competing methods in the simulation studies are run using their default software settings, with the nominal FDR level set to the target level \(q\).

To explore performance under varying levels of signal strength, we vary  $\mu_1$ and $\mu_2$. The resulting empirical FDR and power are summarized in Figure \ref{fig_simu2_fdr_power}. Most methods achieve valid FDR control, although Cartesian HMM, {\it ad hoc} BH and STAREG occasionally exceed the target level or show large variability. In addition, PLACO, MaxP, radjust, and JUMP are overly conservative and consequently exhibit low power. CoHiM consistently achieves higher power than all competitors except Cartesian HMM, which fails to control FDR, especially under weak-signal scenarios. Power 
increases with stronger signals across all methods. 

We also assess FDR control and power across a range of nominal FDR levels from $0.001$ to $0.2$. In this setting, we fix $m = 10,000$, $\pi = (0.7, 0.05, 0.05, 0.2)$, $\mu_1 = \mu_2 = 2$, and use the following transition matrix 
\begin{align*}
    A = \begin{pmatrix}
        0.952 & 0.016 & 0.016 & 0.016\\
        0.222 & 0.333 & 0.222 & 0.222\\
        0.222 & 0.222 & 0.333 & 0.222\\
        0.222 & 0.222 & 0.222 & 0.333
    \end{pmatrix}.
\end{align*}
Figure \ref{fig_FDR_nominal} shows that CoHiM maintains valid FDR control across all thresholds, comparable to STAREG and AdaFilter. Meanwhile, MaxP, radjust, and JUMP remain conservative, and {\it ad hoc} BH exhibits inflated FDR at some thresholds. Notably, CoHiM achieves the highest power across all nominal FDR levels among the eleven methods evaluated. 
% {\color{red}This robustness analysis uses the original seven competing methods ({\it ad hoc} BH, MaxP, MaRR, radjust, JUMP, STAREG, AdaFilter); PLACO, Primo, and QCH are included in the FDR/power comparisons in Figures~\ref{fig_simu2_fdr_power} and \ref{fig_simu3_fdr_power} but are not shown here as the figure was designed to assess calibration behavior across nominal levels rather than to benchmark all competitors.}

\begin{figure}[htbp]
    \centering
    \includegraphics[width = 0.9\linewidth]{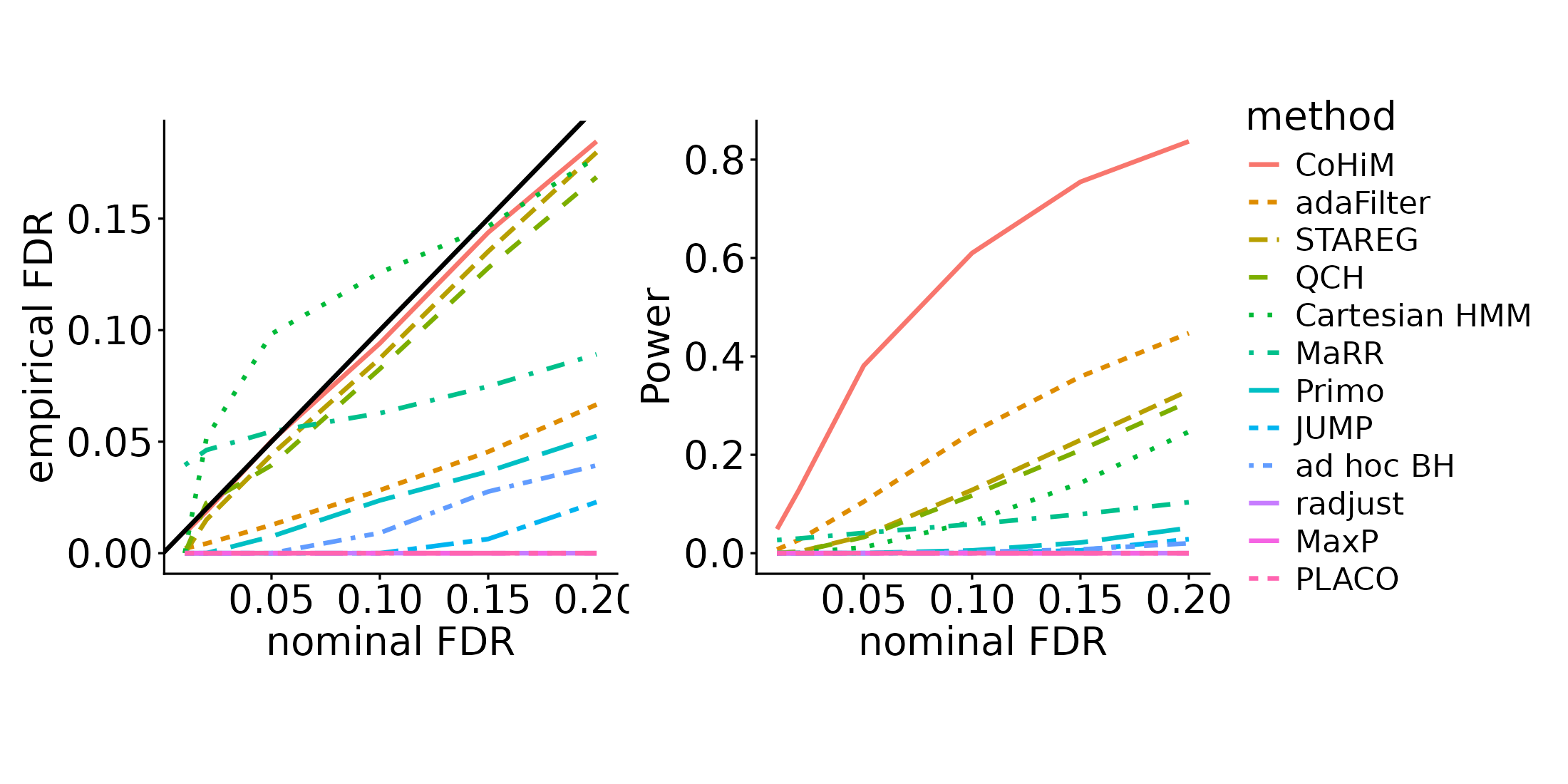}
    \caption{Empirical FDR (left) and power (right) of different methods across a range of nominal FDR levels. 
    The diagonal line (slope 1) in the left panel represents perfect FDR calibration.}
    \label{fig_FDR_nominal}
\end{figure}

\subsection{Three studies}
We next evaluate the performance of CoHiM in the setting of three studies. Let $s_j \in \{0, 1, \ldots, 7\}$ denote the joint signal configuration for feature $j$, corresponding to the eight possible combinations of binary states $(\theta_{1j}, \theta_{2j}, \theta_{3j}) \in \{0, 1\}^3.$ We define the replicability null hypothesis for feature $j$ as 
$$H_{0j}: s_j \in \{0, 1, \ldots, 6\} \quad \mbox{ for } \quad j = 1,\dots,m,$$ i.e., the signal is not consistently present in all three studies. Only $s_j = 7$ corresponds to a feature with replicable signals across all studies. We set the total number of hypotheses to $m = 10,000$. The hidden states $(s_j)_{j=1}^m$ are generated from a Markov chain with stationary probability $\pi$ and transition probability matrix $A$, satisfying the stationary condition $\pi A = \pi.$ We fix $\pi_7 = 0.1$ and set $\pi_1 = \cdots = \pi_6,$ considering values of  $\pi_1 = 0.01$ and $0.015$ to control the degree of sparsity. For instance, with $\pi_1 = 0.01$, the stationary probabilities and transition probabilities are 
\begin{align*}
    \pi =& \begin{pmatrix}
        0.840&0.010&0.010&0.010& 0.010&0.010&0.010&0.100
    \end{pmatrix},\quad \mbox{and}\\ 
    A =& \begin{pmatrix}
        0.972 & 0.004& 0.004& 0.004& 0.004& 0.004& 0.004& 0.004\\
        0.095&0.333&0.095&0.095&0.095&0.095&0.095&0.095\\
        0.095&0.095&0.333&0.095&0.095&0.095&0.095&0.095\\
        0.095&0.095&0.095&0.333&0.095&0.095&0.095&0.095\\
        0.095&0.095&0.095&0.095&0.333&0.095&0.095&0.095\\
        0.095&0.095&0.095&0.095&0.095&0.333&0.095&0.095\\
        0.095&0.095&0.095&0.095&0.095&0.095&0.333&0.095\\
        0.012&0.012&0.012&0.012&0.012&0.012&0.012&0.916
    \end{pmatrix}.
\end{align*}
\begin{figure}[htbp]
    \centering
    \includegraphics[width=\linewidth]{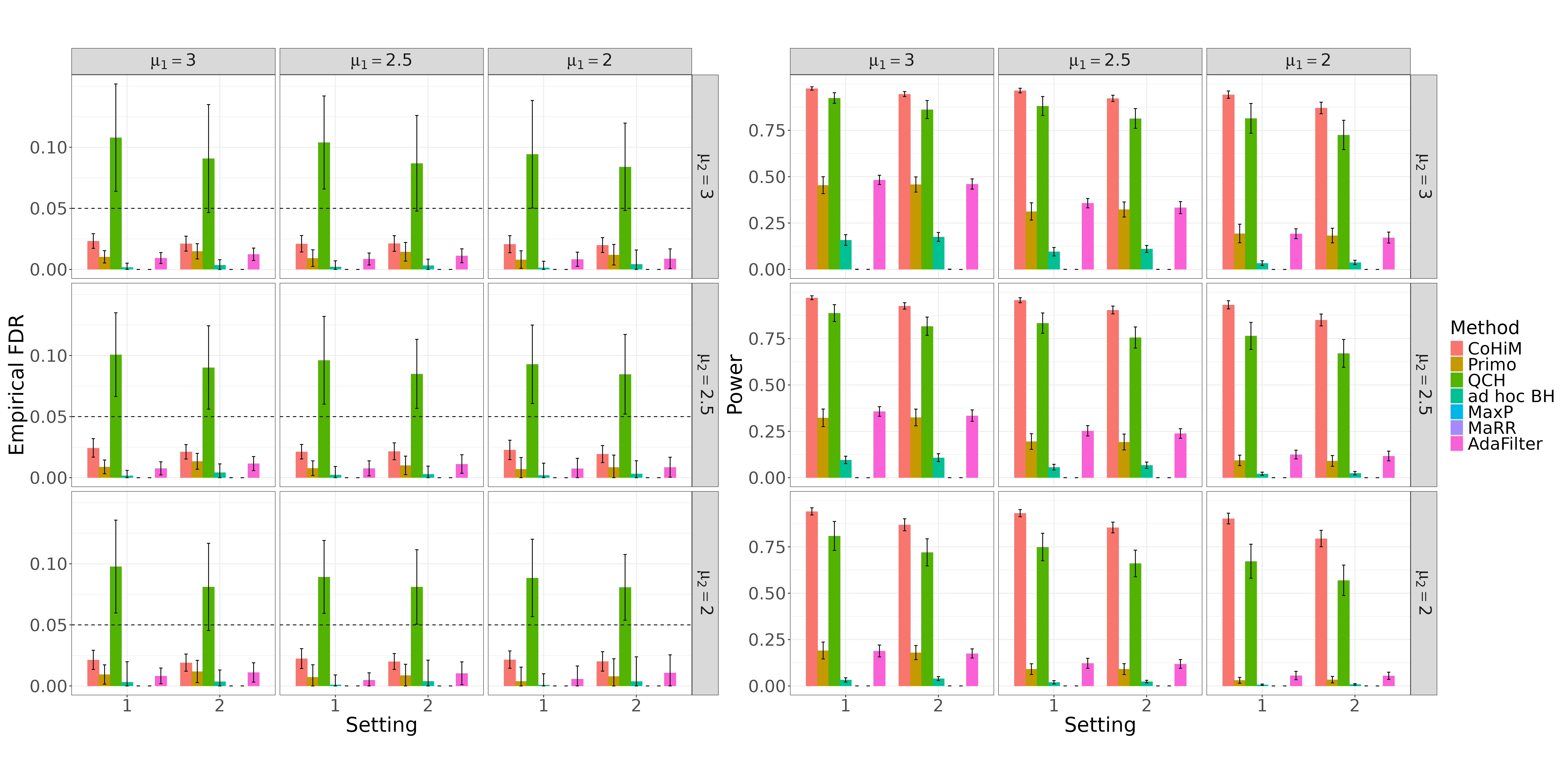}
    \caption{Empirical FDR and power of different methods for three studies under different settings. The dashed horizontal line in the left panel indicates the nominal level of 0.05. Error bars represent the mean $\pm$ one standard deviation over 100 replicates.
    }
    \label{fig_simu3_fdr_power}
\end{figure}

For each study $i$ and hypothesis $j$, we simulate z-scores from a two-component mixture distribution: 
$$X_{ij} \mid \theta_{ij} \sim (1-\theta_{ij})\mathcal N(0, 1) + \theta_{ij} \mathcal N(\mu_i, 1),$$ where $\mu_i$ denotes the signal strength in study $i$. The one-sided $p$-values are computed as $y_{ij} = \mathbb P(Z \geq X_{ij})$ for $Z \sim \mathcal N(0, 1)$. To introduce heterogeneity across studies, we fix $\mu_3 = 2.5$ and vary $\mu_1$ and $\mu_2$ across simulation settings.

We apply CoHiM using Algorithm \ref{algo_n_study}, which aggregates pairwise e-values across the three study pairs $\{ 1,2\}, \{1,3 \}$ and $\{2,3 \},$ and determines the final rejection set via the e-BH procedure. We compare CoHiM to six existing methods:
\begin{itemize}
    \item Primo \citep{gleason2020primo}.
    \item QCH \citep{mary2022querying, de2025large}.
\item {\it ad hoc} BH \citep{benjamini1995controlling}, 
\item MaxP \citep{benjamini2009selective}, 
\item MaRR \citep{philtron2018maximum}, 
% \item JUMP \citep{lyu2023jump}, 
\item AdaFilter \citep{wang2022detecting}. 
\end{itemize}
The results are summarized in Figure \ref{fig_simu3_fdr_power}. CoHiM effectively controls the FDR across all tested settings. In contrast, QCH fails to control the FDR and the other competing methods are generally conservative, with lower empirical FDR and limited power. MaxP and MaRR, in particular, yield very few discoveries across all signal strengths. AdaFilter achieves moderate power only under strong signal settings. Notably, the Markov chain $(s_j)_{j=1}^m$ used here does not satisfy the pairwise Markov property in Equation (\ref{eq_markov_property}), meaning the theoretical guarantees of Theorem \ref{thm_eBH_procedure} do not strictly apply. Nonetheless, CoHiM demonstrates strong empirical performance and robustness under this form of model misspecification, highlighting its practical utility in complex multiple-study scenarios. 

\subsection{Five studies}
\label{subsec:five-study}

We next evaluate CoHiM in the five-study setting. Since a single joint HMM over five studies would require $2^5 = 32$ latent states, rendering joint estimation computationally prohibitive, we adopt a block structure: studies~1 and~2 are governed by one four-state Markov chain, and studies~3, 4, and~5 by an independent eight-state Markov chain. Let $s_j^{(12)} \in \{0,1,2,3\}$ denote the joint signal configuration for the first pair, corresponding to $(\theta_{1j}, \theta_{2j}) \in \{0,1\}^2$, and let $s_j^{(345)} \in \{0,\ldots,7\}$ denote the configuration for the remaining triple, corresponding to $(\theta_{3j}, \theta_{4j}, \theta_{5j}) \in \{0,1\}^3$. The composite null hypothesis for feature $j$ is $H_{0j}\colon \theta_{1j}\theta_{2j}\theta_{3j}\theta_{4j}\theta_{5j} = 0$, so that $j$ is replicable only when all five study-specific indicators are simultaneously nonzero.

We set $m = 10{,}000$. The stationary distribution for Block~A (studies~1--2) is $\pi^{(12)} = (0.50,\,0.05,\,0.05,\,0.40)^\top$, where $\pi_3^{(12)} = 0.40$ is the probability of being non-null in both studies. The stationary distribution for Block~B (studies~3--5) is
$$\pi^{(345)} = (0.55,\,\underbrace{0.05/6,\ldots,0.05/6}_{6 \text{ times}},\,0.40)^\top,$$
where $\pi_7^{(345)} = 0.40$ is the probability of being non-null in all three studies and the six partial non-null states each carry probability $0.05/6$. The expected composite signal rate is $\pi_3^{(12)} \times \pi_7^{(345)} = 0.16$, yielding approximately $1{,}600$ true replicable features. For each block, the transition matrix is constructed to satisfy the stationarity condition and to capture local dependence among features, following the same approach as in the preceding simulations. For each study $i \in \{1,\ldots,5\}$ and hypothesis $j$, z-scores and $p$-values are generated as in the two-study case: $X_{ij} \mid \theta_{ij} \sim (1-\theta_{ij})\mathcal{N}(0,1) + \theta_{ij}\mathcal{N}(\mu_i,1)$ with $y_{ij} = \mathbb{P}(Z \geq X_{ij})$.

We consider three signal configurations with varying effect sizes; see Table~\ref{tab:lowsignal-cohim-win-settings} for the specific values. In all settings, the signal strengths satisfy $\mu_1 = \mu_2 \geq \mu_3 = \mu_4 = \mu_5$, reflecting stronger signals in studies~1--2 than in studies~3--5. We set the nominal FDR level to $q = 0.05$ and conduct $100$ independent replications per setting.

\begin{table}[htbp]
\centering
\caption{Five-study simulation settings ($m=10{,}000$, $100$ replications, $q=0.05$). In all three settings, $\pi^{(12)}=(0.50,\,0.05,\,0.05,\,0.40)^\top$ and $\pi^{(345)}=(0.55,\,0.05/6,\ldots,0.05/6,\,0.40)^\top$.}
\label{tab:lowsignal-cohim-win-settings}
\begin{tabular}{c c c c}
\hline
Setting & $\mu_1=\mu_2$ & $\mu_3=\mu_4=\mu_5$ & Non-null proportion \\
\hline
Setting~1 & 4.5 & 4.0 & 16\% \\
Setting~2 & 4.0 & 3.5 & 16\% \\
Setting~3 & 4.0 & 3.0 & 16\% \\
\hline
\end{tabular}
\end{table}

We compare CoHiM with six existing methods:
\begin{itemize}
    \item Primo \citep{gleason2020primo},
    \item QCH \citep{mary2022querying,de2025large},
    \item {\it ad hoc} BH \citep{benjamini1995controlling},
    \item MaRR \citep{philtron2018maximum},
    \item MaxP \citep{benjamini2009selective},
    \item AdaFilter \citep{wang2022detecting}.
\end{itemize}
Detailed descriptions of these methods are provided in the Supplementary Materials.

The results are summarized in Figure~\ref{fig_simu5_fdr_power}. Although QCH attains competitive power in some settings, it fails to control the FDR at the nominal level of $0.05$ across all three settings.
% Because a method that violates the error-rate constraint provides no statistical guarantee, its power figures are not a meaningful basis for comparison; we therefore focus on FDR-controlling competitors. 
Among the remaining approaches, CoHiM effectively controls FDR and achieves the highest power in all three settings---$0.962$, $0.877$, and $0.754$ in Settings~1--3, respectively. When signal strengths decrease, the power decreases accordingly, and the competing methods exhibit larger power losses. These results demonstrate that CoHiM maintains valid FDR control and achieves competitive power as the number of studies increases to five. The Markov chain used here does not satisfy the pairwise Markov property required by Theorem~\ref{thm_eBH_procedure}. Nonetheless, CoHiM exhibits robust empirical performance under this form of model misspecification.%, consistent with the three-study findings above.

\begin{figure}[htbp]
    \centering
    \includegraphics[width=\linewidth]{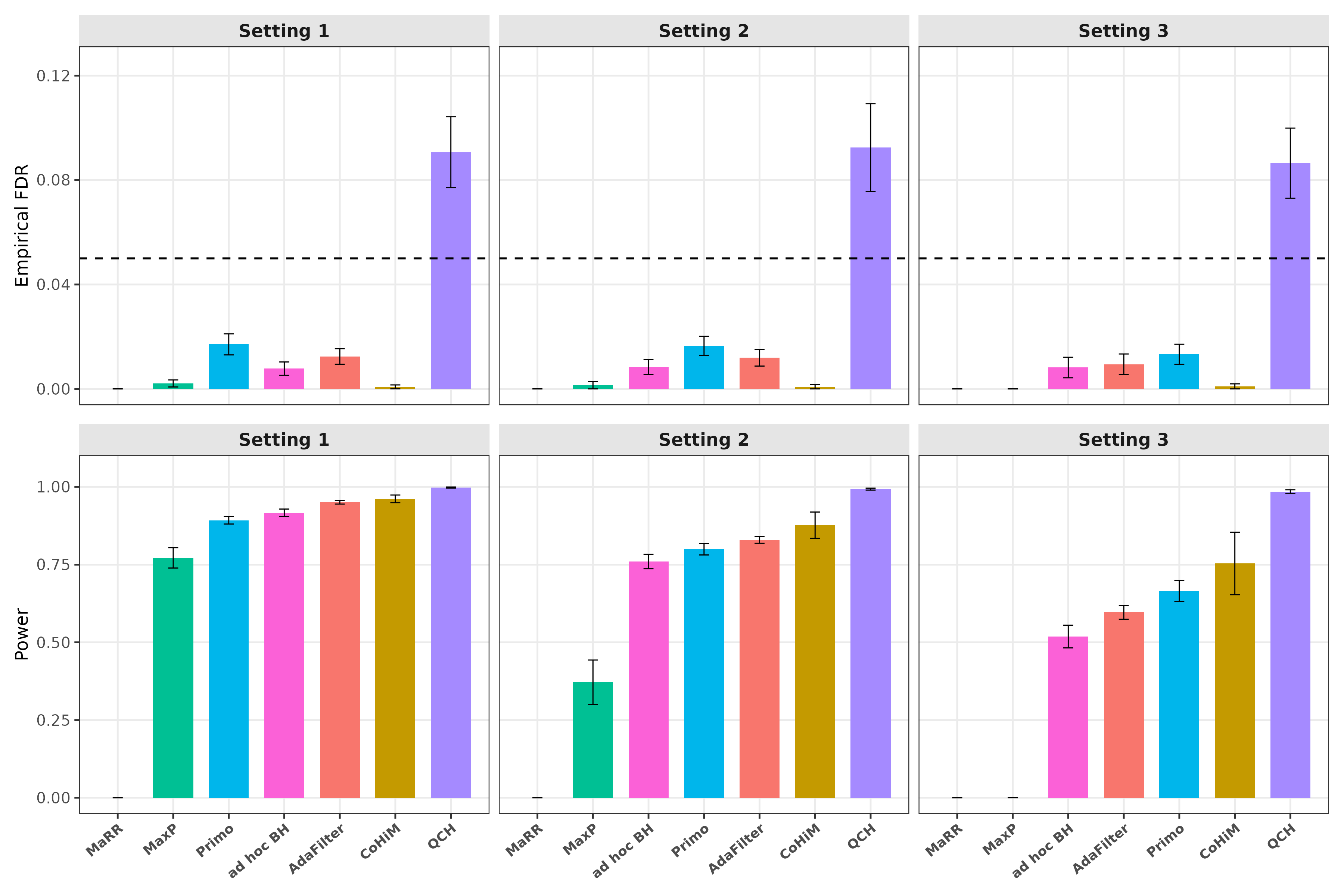}
    \caption{Empirical FDR and power of different methods for five studies across three settings. The dashed horizontal line in the FDR panel indicates the nominal level of $0.05$.  Error bars represent the mean $\pm$ one standard deviation over 100 replicates.}
    \label{fig_simu5_fdr_power}
\end{figure}

\subsection{Computational time comparison}
\begin{figure}[htbp]
    \centering
    \includegraphics[width=0.85\linewidth]{figures/runtime_comparison_with_joint.pdf}
    \caption{Computational time comparison between CoHiM and Overall HMM as the number of studies $n$ increases.}
    \label{fig:running time}
\end{figure}
We conducted an additional experiment to directly compare the computational time of the proposed pairwise-aggregation implementation (CoHiM) in Algorithm~\ref{algo_n_study} against a full joint HMM implementation (overall HMM) whose latent-state size is of order $2^n$.  We vary the number of studies $n\in\{2,3,5,8,10\}$, set the number of hypotheses as $m=10,000$, and target FDR level $q=0.05$.% To ensure feasibility of the full joint benchmark, we used the cap $K_{\max}=1025$ (i.e., up to $2^{10}=1024$ latent states).

The results show that the methods are similar for small $n$, but the overall HMM method becomes increasingly slower as $n$ grows. At $n=10$, overall HMM takes more than ten thousand seconds versus $100$ seconds for CoHiM. This effectively shows that our method is computationally scalable for large-scale biomedical data analysis.

\section{Data analysis}\label{sec_data}
\subsection{Type 2 diabetes: sex-stratified replicability}\label{subsec_t2d}
We illustrate the utility of CoHiM by analyzing two sex-stratified GWAS datasets from \cite{diabetes2012large}, which investigate associations between SNPs and type 2 diabetes. In such datasets, significant SNPs often exhibit clustering due to LD, making HMMs particularly suitable for modeling the local dependence structure. Type 2 diabetes is a metabolic disorder characterized by elevated blood glucose levels, affecting approximately 329 million individuals globally in 2015 \citep{lipton2016gbd}. Identifying replicable genetic associations is critical for advancing our understanding of the disease's biological mechanisms and guiding therapeutic development. 

The male dataset comprises $20,219$ cases and $54,604$ controls, while the female dataset includes $14,621$ cases and $60,377$ controls. Summary statistics were obtained from the DIAbetes Genetics Replication and Meta-analysis (DIAGRAM) Consortium (\url{https://www.diagram-consortium.org/downloads.html}). The male group includes summary statistics for
$123,535$ SNPs, while the female group includes summary statistics for $118,399$ SNPs. After matching the SNPs across datasets, we analyze $m=118,364$ SNPs common to both sexes, with \(y_{1j}\) and $y_{2j}$ for $j = 1,\ldots,m$ denoting the $p$-values in males and females, respectively.

We applied CoHiM via Algorithm \ref{algo_2study} to estimate the HMM parameters and identify replicable associations. The estimated transition matrix is 
\begin{align*}
    \widehat{A} =\begin{pmatrix}
        0.9840 & 0.0066 & 0.0040 & 0.0055\\
        0.0657 & 0.9271 & 0.0004 & 0.0069\\
        0.0546 & 0.0010 & 0.9379 & 0.0066\\
        0.0501 & 0.0045 & 0.0050 & 0.9403
    \end{pmatrix},
\end{align*} 
with corresponding stationary probability
\begin{align*}
    \widehat{\pi} = (0.779, 0.077, 0.057, 0.087).
\end{align*}
Figure \ref{fig_alternative_density} displays the estimated probability density functions of the non-null $p$-values, $\widehat f_1$ and $\widehat f_2$ for males and females, respectively, highlighting substantial heterogeneity between the two studies. 
\begin{figure}[htbp]
    \centering
    \includegraphics[width = 0.8\linewidth]{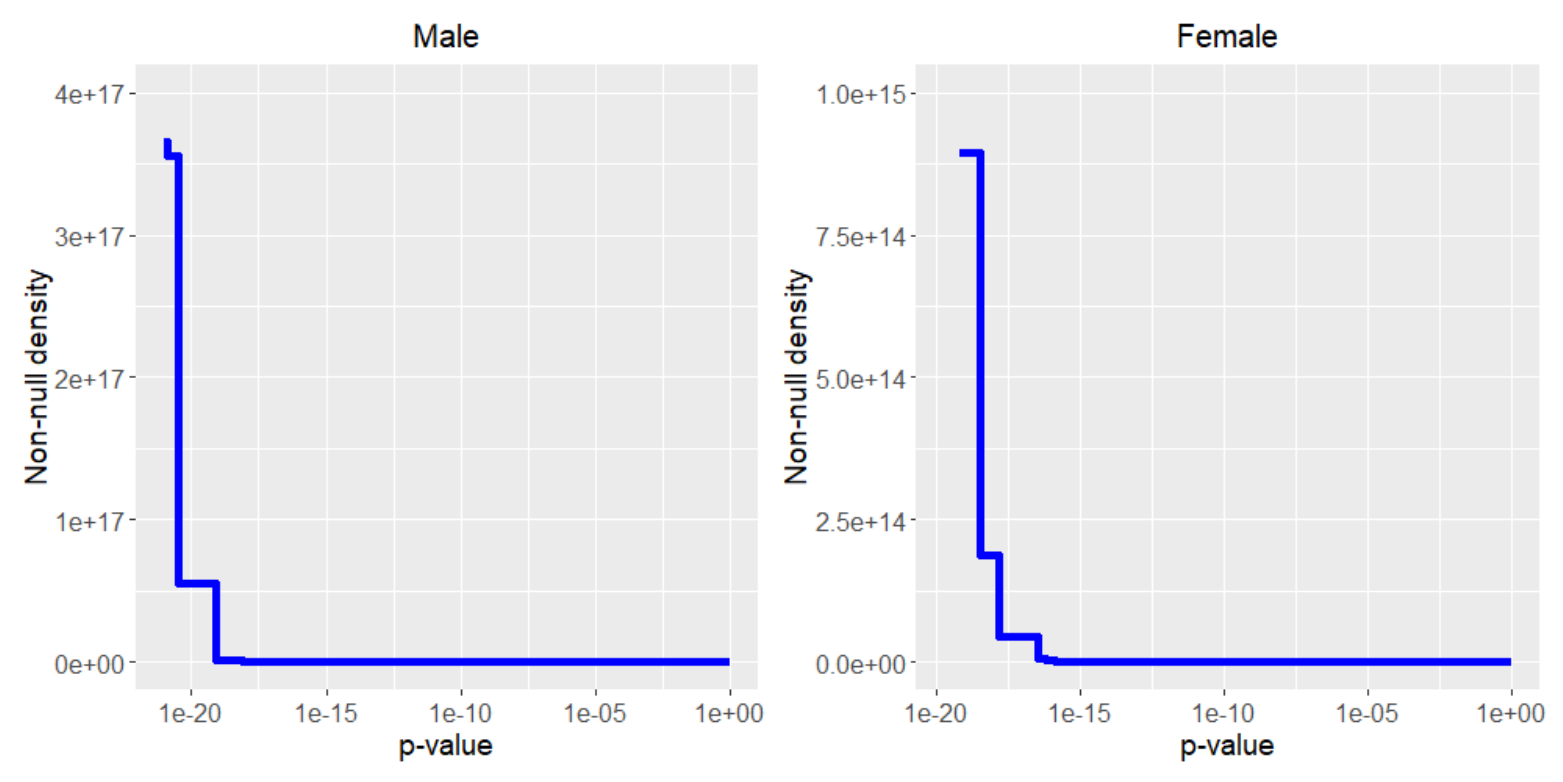}
    \caption{Estimated non-null $p$-value densities for the male and female type 2 diabetes studies.}
    \label{fig_alternative_density}
\end{figure}

We benchmark CoHiM against existing replicability analysis methods at a nominal FDR level of $q = 10^{-5}$. GWAS data contain many correlated markers due to linkage disequilibrium among nearby SNPs \citep{visscher2012five,li2012evaluating}, and stringent thresholding is standard in large-scale GWAS multiple testing to prioritize high-confidence signals \citep{dudbridge2006note,pe2008estimation}. Although $q=10^{-5}$ is more conservative than conventional FDR levels, our goal here is to illustrate that CoHiM can identify high-confidence replicable SNP-level candidates under a conservative screening threshold, not to maximize the number of discoveries. Figure \ref{fig_Discoveries_of_different_methods} summarizes the number of SNPs discovered by each method. MaxP is the most conservative, yielding $176$ findings, all of which are also identified by the other methods. In contrast, CoHiM identifies $1,604$ SNPs, including $646$ uniquely detected by our approach. 
\begin{figure}[htbp]
    \centering
    \includegraphics[width = 0.8\linewidth]{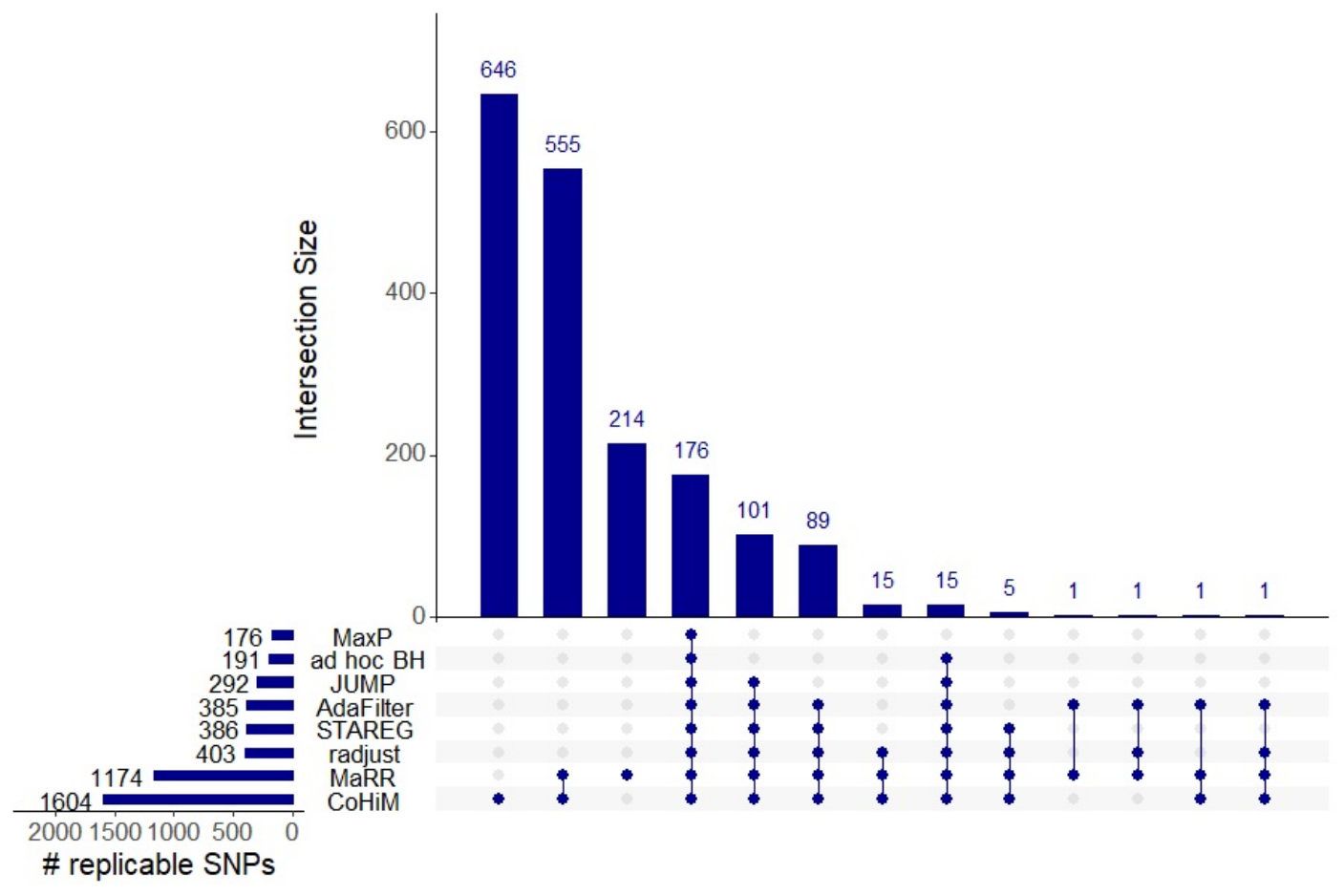}
    \caption{Number of SNPs identified as replicable by different methods in the type 2 diabetes data analysis.} 
    \label{fig_Discoveries_of_different_methods}
\end{figure}
Among the $646$ SNPs uniquely discovered by CoHiM, 30 are listed in the NHGRI-EBI GWAS Catalog (\url{https://www.ebi.ac.uk/gwas/}) as significantly associated with type 2 diabetes. To further validate the remaining SNPs, we mapped them to genes using the R package \texttt{snpGeneSets} \citep{mei2016snpgenesets}, resulting in $616$ SNPs mapped to $77$ genes. Many of these genes have been previously implicated in type 2 diabetes. For instance, genes such as {\it JAZF1}, {\it CDC123}, {\it THADA}, {\it ADAMTS9-AS2}, and {\it NOTCH2} have been reported to be associated with type 2 diabetes \citep{zeggini2008meta}. In particular, we highlight three genes with strong relevance to type 2 diabetes. In addition, SNPs in these gene regions can only be detected by other methods at less stringent FDR nominal levels, suggesting that CoHiM is able to identify weaker SNP-level signals under the same FDR level.
\begin{itemize}
\item {\it JAZF1}: a transcriptional regulator involved in ribosome biogenesis, protein synthesis, and insulin translation, with established links to diabetes risk \citep{kobiita2020diabetes}. CoHiM identifies $33$ unique SNPs mapped to this gene, including rs10245867 (rLIS: $2.64\times 10^{-6}$; male $p$-value: $1.03\times 10^{-8}$; female $p$-value: $6.64\times 10^{-5}$). The gene {\it JAZF1} is also detectable by STAREG at a less stringent FDR level with tagging SNPs. 
\item {\it ADAMTS9}: known to impair insulin sensitivity and increase diabetes risk \citep{graae2019adamts9}. CoHiM detects $25$ associated SNPs, including rs11914351 (rLIS: $3.71\times 10^{-5}$; male $p$-value: $8.53\times 10^{-4}$; female $p$-value: $5.70\times 10^{-2}$). Several tagging SNPs within the {\it ADAMTS9} gene region are detectable by STAREG at FDR levels $10^{-3}$ and $10^{-4}$, indicating that this gene locus is not missed at the gene level, but is identified through different SNPs and at less stringent FDR levels.
\item {\it NOTCH2}: implicated in poor glycemic control via elevated expression levels \citep{ghanem2020expression}. CoHiM links $9$ SNPs to this gene, including rs10127888 (rLIS: $4.44\times 10^{-5}$; male $p$-value: $2.82\times 10^{-2}$; female $p$-value: $1.52\times 10^{-2}$). The individual $p$-values for this SNP are modest, and the replicability signal should be interpreted with caution. CoHiM flags it based on evidence across both studies, but the evidence is weaker than that for the {\it JAZF1} or {\it ADAMTS9} loci.
\end{itemize}

Figure \ref{fig_manhattan} presents Manhattan plots for MaxP, STAREG, and CoHiM. The vertical axes display the $-\log_{10}$ transformations of each method's test statistics: $p_{\rm max}$ for MaxP, ${\rm Lfdr}$ for STAREG, and ${\rm rLIS}$ for CoHiM. Although the global patterns are similar, the methods differ substantially in thresholding behavior and interpretability. MaxP yields the fewest discoveries and lacks separation between rejected and non-rejected SNPs. STAREG detects more signals but with many borderline rejections. In contrast, CoHiM exhibits a sharper separation, with clear distinctions between high-confidence discoveries and nulls, enhancing the interpretability and reliability of the results.
\begin{figure}
    \centering
    \includegraphics[width = \linewidth]{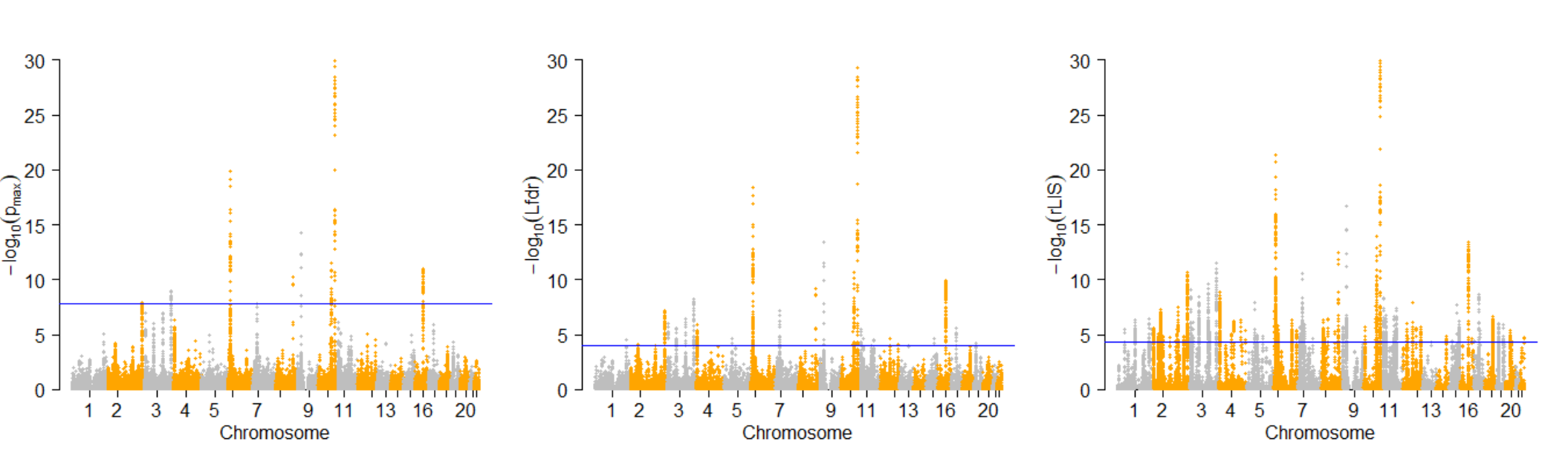}
    \caption{Manhattan plots of the type 2 diabetes GWAS data. The vertical axes show $-\log_{10}$ of each method's test statistic: $p_{\rm max}$ (the maximum $p$-value used by MaxP), ${\rm Lfdr}$ (the local false discovery rate statistic used by STAREG), and ${\rm rLIS}$ (replicability local index of significance used by CoHiM). Horizontal lines indicate the FDR threshold of $10^{-5}$.}
    \label{fig_manhattan}
\end{figure}

\subsection{Large-scale replicability analysis: type 2 diabetes and prostate cancer}\label{subsec_placo}

To further assess the scalability of CoHiM on large-scale data, we apply it to two publicly available GWAS datasets analyzed by \citet{ray2020powerful}: the type~2 diabetes (T2D) summary statistics from the DIAGRAM consortium (\url{https://cnsgenomics.com/data/t2d/}) and the prostate cancer (PCa) summary statistics (accession GCST006085) from the EBI GWAS Catalog (\url{https://ftp.ebi.ac.uk/pub/databases/gwas/summary_statistics/GCST006001-GCST007000/GCST006085/}).

\paragraph{Type 2 diabetes.}
After matching SNPs present in both T2D studies, the analysis retains $m = 118{,}364$ SNPs. Figure~\ref{fig_T2D_rejection_counts} reports the number of replicable SNPs identified by each method at the nominal FDR level $q=10^{-5}$. In this analysis, CoHiM detects approximately $1,600$ replicable SNPs, substantially more than any competing method. PLACO is the closest competitor, followed by QCH, radjust, STAREG, JUMP, and Cartesian HMM. Primo, MaxP, and \textit{ad~hoc}~BH are more conservative, whereas MaRR and AdaFilter report essentially no discoveries. The T2D dataset exhibits strong LD-induced local dependence and relatively large effect sizes. By explicitly modeling local dependence through an HMM, CoHiM can borrow information across neighboring SNPs, which helps explain its substantially higher power relative to methods that do not model the local dependence structure.

\begin{figure}[htbp]
\centering
\includegraphics[width=0.9\linewidth]{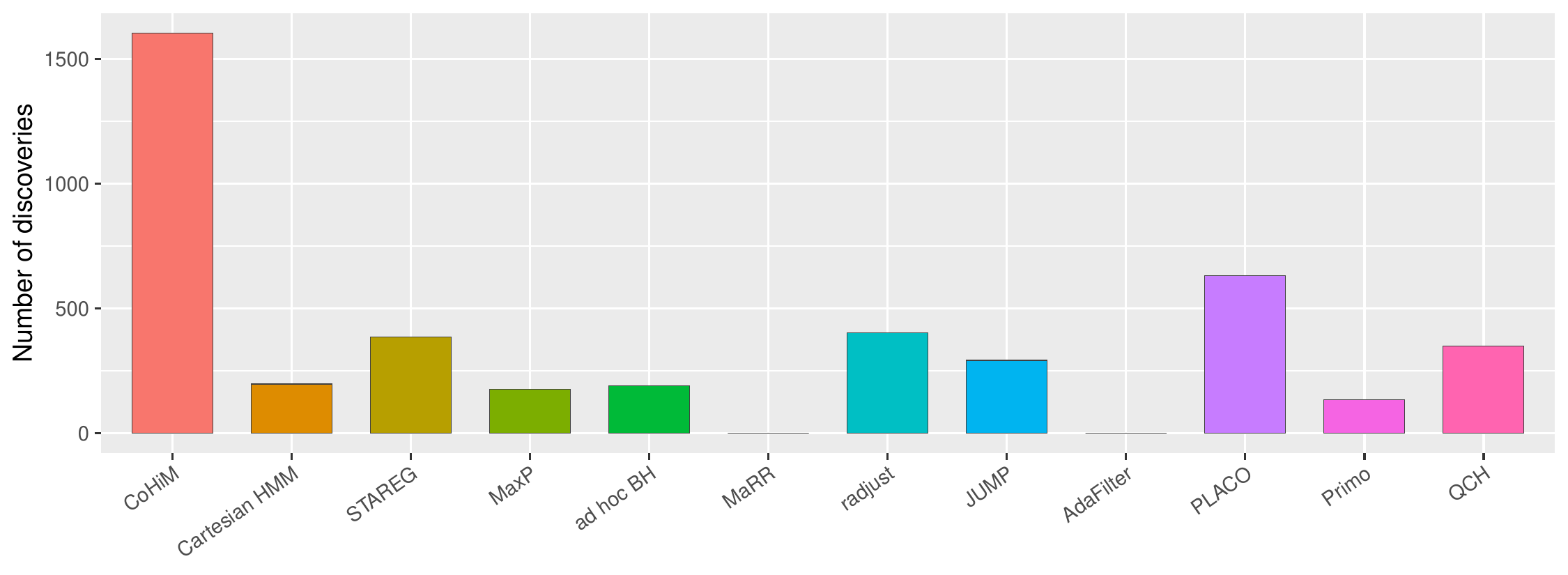}
\caption{Number of replicable SNPs identified by each method in the type~2 diabetes analysis at the nominal FDR level $q = 10^{-5}$.}
\label{fig_T2D_rejection_counts}
\end{figure}

% \begin{figure}[htbp]
% \centering
% \includegraphics[width=\linewidth]{figures/runtime_subsets_to_full.png}
% \caption{Runtime, in seconds, of each method in the prostate cancer analysis at the nominal FDR level $q = 10^{-5}$, across different subsample sizes and the full dataset. Both runtime and the number of SNPs are plotted on logarithmic scales.}
% \label{fig_PCa_runtime}
% \end{figure}

\begin{table}[htbp]
  \centering
  \caption{Computational time in seconds across different numbers of SNPs. Methods are listed in decreasing order of computational time on the one-million-SNP subset.}
  \label{tab:runtime_by_snp}
  \begin{tabular}{lrrrrr}
  % \toprule
            & 50K & 100K & 500K & 1M & 13.5M \\
  \midrule
  Primo     & 39.848   & 148.963  & 3,734.826  & 14,693.449 & -- \\
  Cartesian HMM & 306.752 & 677.083 & 4,577.498 & 10,823.967 & -- \\
  PLACO     & 200.917  & 381.496  & 2,171.799  & 3,833.786  & -- \\
  QCH       & 18.904   & 34.437   & 232.460   & 2,004.864  & -- \\
  AdaFilter & 4.116    & 8.539    & 47.528    & 87.492    & 793.828 \\
  CoHiM     & 0.523    & 1.982    & 6.586     & 10.994    & 390.596 \\
  JUMP      & 0.419    & 0.490    & 2.963     & 6.445     & 54.161 \\
  STAREG    & 0.260    & 0.238    & 3.556     & 2.993     & 46.766 \\
  \textit{ad hoc} BH & 0.012    & 0.058    & 0.310     & 2.049     & 9.869 \\
  MaxP      & 0.006    & 0.023    & 0.164     & 0.279     & 4.917 \\
  radjust   & 0.113    & 0.019    & 0.096     & 0.165     & 2.239 \\
  MaRR      & 0.032    & 0.001    & 0.005     & 0.007     & 0.071 \\
  % \bottomrule
  \end{tabular}
  \end{table}

\paragraph{Prostate cancer.}
The full PCa dataset contains $13{,}513{,}098$ overlapping SNPs. To enable a tractable comparison across all competing methods, we analyze the subsets by randomly subsampling $m = 50{,}000$, $100{,}000$, $500{,}000$ and $1{,}000{,}000$ SNPs, and also analyze the full dataset. Even the subsampled analyses are substantially larger than the simulation setting with $m=10{,}000$, thereby providing practical evidence of scalability. For each analyzed set, we apply CoHiM and the competing methods at the nominal FDR level $q=10^{-5}$, and record both the number of discoveries and the computational time.

Table~\ref{tab:runtime_by_snp} compares the computational time of each method across different numbers of SNPs. On the one-million-SNP subset, Primo, Cartesian HMM, PLACO, and QCH require $4.08$ hours ($14{,}693$ seconds), $3.01$ hours ($10,823$ seconds),  $1.06$ hours ($3{,}834$ seconds), and $0.56$ hours ($2{,}004$ seconds), respectively. Their computational times are substantially longer than those of AdaFilter ($87$ seconds) and CoHiM ($11$ seconds). To reduce computation and memory cost, we therefore omit Primo, Cartesian HMM, PLACO, and QCH from the full-data analysis. On the full dataset with $13.5$ million SNPs, CoHiM completes the analysis in $390$ seconds, whereas AdaFilter takes $793$ seconds. These results show that CoHiM is computationally efficient and scalable to large-scale GWAS datasets, even with tens of millions of hypotheses.

% \begin{figure}[htbp]
% \centering
% \includegraphics[width=\linewidth]{figures/rejections_subsets_to_full.png}
% \caption{Number of replicable SNPs identified by each method in the prostate cancer analysis at the nominal FDR level $q = 10^{-5}$, across different subsample sizes and the full dataset. The number of discoveries is plotted on a logarithmic scale.}
% \label{fig_PCa_rejection_counts}
% \end{figure}

\begin{table}[ht]
  \centering
  \caption{Number of discoveries across different numbers of SNPs. Methods are listed in decreasing order according to the number of discoveries on the one-million-SNP subset.}
  \label{tab:prostate_discoveries_by_size}
  \begin{tabular}{lrrrrr}
  % \toprule
  Method & 50K & 100K & 500K & 1M & 13.5M \\
  \midrule
  JUMP        & 203 & 401 & 1,856 & 3,669 & 50,006 \\
  PLACO       & 148 & 300 & 1,386 & 2,744 & -- \\
  QCH         & 138 & 280 & 1,296 & 2,239 & -- \\
  CoHiM       & 106 & 220 & 1,004 & 2,007 & 28,300 \\
  STAREG      & 105 & 218 & 995  & 1,973 & 27,386 \\
  radjust     & 101 & 211 & 982  & 1,957 & 27,276 \\
  AdaFilter   & 101 & 211 & 980  & 1,952 & 27,157 \\
  Cartesian HMM & 55 & 132 & 895& 1,654 & --      \\
  \textit{ad hoc} BH   & 39  & 88  & 395  & 797  & 11,024 \\
  MaxP        & 39  & 88  & 395  & 796  & 11,007 \\
  Primo       & 31  & 70  & 316  & 648  & -- \\
  MaRR        & 0   & 0   & 0    & 0    & 0 \\
  % \bottomrule
  \end{tabular}
  \end{table}

Table~\ref{tab:prostate_discoveries_by_size} displays the number of replicable SNPs identified by each method across different numbers of SNPs. Primo, Cartesian HMM, PLACO, and QCH are omitted from the full-data analysis due to their substantially higher computational cost. Among the remaining methods, JUMP reports the largest number of SNP-level discoveries. This result should be interpreted with caution. In the simulations, JUMP is generally conservative and has low power, whereas in the prostate cancer analysis it produces substantially more discoveries than the other methods. The larger SNP-level discovery count may reflect LD-induced correlations among nearby SNPs that are not explicitly modeled by JUMP, so that multiple correlated SNPs tagging the same underlying association signal are counted as separate discoveries.

In contrast, CoHiM, STAREG, radjust, and AdaFilter produce similar numbers of discoveries, suggesting broadly stable behavior among these methods. Cartesian HMM, although explicitly modeling feature dependence, yields fewer discoveries than CoHiM on the subsampled datasets, possibly reflecting the limitations of its parametric density assumptions. PLACO and QCH produce relatively large numbers of discoveries on the subsets, but their computational costs make full-data analysis less practical in this setting. The \textit{ad hoc} BH procedure, MaxP, and Primo are substantially more conservative, yielding discoveries on a smaller scale, while MaRR reports no discoveries, consistent with its conservative behavior in the simulations.

%Overall, we view the prostate cancer analysis primarily as evidence of computational scalability and empirical robustness, rather than as a definitive power comparison among methods. The results show that CoHiM remains computationally tractable on a dataset with more than $13.5$ million SNPs and produces discovery counts comparable to several well-calibrated competing procedures.

Taken together, these two analyses demonstrate that CoHiM remains computationally tractable for large-scale GWAS applications. CoHiM delivers substantially higher power in the T2D analysis, where the local dependence structure is pronounced, while producing discovery counts comparable to other well-calibrated competitors in the PCa analysis. These results support the scalability and practical utility of CoHiM beyond the simulation scale.

\section{Concluding remarks}\label{sec_conclusion} 
We have introduced CoHiM, a flexible and powerful framework for testing high-dimensional composite null hypotheses under dependence. Operating directly on $p$-values rather than raw data, CoHiM is practically advantageous in contexts where individual-level data are unavailable due to privacy or resource constraints. By modeling the local dependence structure through hidden Markov models and accommodating cross-study heterogeneity, CoHiM enables rigorous and scalable replicability analysis.

Our framework is first developed for the two-study case and then generalized to multiple studies via a novel e-value construction and aggregation strategy. Theoretically, we establish consistency of the maximum likelihood estimators and asymptotic FDR control. Empirically, CoHiM demonstrates favorable performance in simulations, maintaining valid FDR while achieving higher power than existing approaches. In our application to type 2 diabetes GWAS data, CoHiM uncovers novel SNP associations, including variants mapped to well-established diabetes-related genes not selected at the same FDR level by the competing procedures. Applications to larger-scale datasets demonstrate that CoHiM is computationally scalable and efficient.

Several important directions remain open. First, although $p$-values are readily available in most studies, they do not convey the direction of effects. Incorporating directional information into CoHiM could further improve replicability detection. Second, while we establish consistency of the MLE under the HMM framework, characterizing its convergence rates remains an open theoretical challenge. Lastly, HMMs capture local dependence effectively but may not adequately represent long-range or complex dependency structures such as spatial or network-based correlations. Extending CoHiM to such contexts while retaining scalability is a compelling avenue for future research.

\section*{Acknowledgement}
We thank Yan Li for her help with the simulation studies. This research is partially supported by NSF 2311249 and 2553817.

\bibliography{ref}
\newpage
\appendix

\section{Estimation Process}
\subsection{Estimation}\label{subsec_est}
% The estimation procedure 
Let $\phi = (\pi, A, f_1, f_2)$ denote the collection of unknown parameters and density functions, with the true parameter denoted by $\phi^* = (\pi^*, A^*, f_1^*, f_2^*)$. The likelihood function for the observed paired $p$-values $(y_{1j}, y_{2j})_{j=1}^m$ is given by 
\begin{equation*}
    p_m\left((y_{1j}, y_{2j})_{j=1}^m; \phi\right) = \sum_{\boldsymbol s}\left\{\pi_{s_1}(\phi)f^{(s_1)}(y_{11}, y_{21}; \phi) \prod_{j=2}^m a_{s_{j-1}, s_j}(\phi) f^{(s_j)}(y_{1j}, y_{2j}; \phi)\right\},
\end{equation*}
where the summation is over all possible latent state sequences $\boldsymbol s = (s_1, \dots, s_m),$ and $f^{(s_j)}(y_{1j}, y_{2j}; \phi)$ is the joint density of $(y_{1j}, y_{2j})$ given latent state $s_j$ under parameter $\phi.$
The maximum-likelihood estimator of $\phi^*$ is defined as
\begin{align}
    \label{eq_MLE_definition}
    \widehat{\phi}_m = \underset{\phi \in \Phi}{\arg\max} \: p_m\left((y_{1j}, y_{2j})_{j=1}^m; \phi\right),
\end{align}
where $\Phi$ denotes the parameter space. 

To solve the maximum likelihood problem in (\ref{eq_MLE_definition}), we employ the expectation-maximization (EM) algorithm \citep{dempster1977maximum}, utilizing the forward-backward procedure \citep{baum1970maximization} for efficient computation. Define the forward probability $\alpha_j(s_j) = \mathbb P_{\phi^*}((y_{1t},y_{2t})_{t=1}^j, s_j)$ and the backward probability $\beta_j(s_j) =\mathbb P_{\phi^*}((y_{1t},y_{2t})_{t=j+1}^m\mid s_j)$, initialized by $\alpha_1(s_1) = \pi_{s_1}f^{(s_1)}(y_{11},y_{21})$ and $\beta_m(s_m) = 1.$ By the Markov property, these quantities can be computed recursively: 
\begin{align*}
    \alpha_{j+1}(s_{j+1}) =& \sum_{s_j=0}^3\alpha_j(s_j)a_{s_js_{j+1}}f^{(s_{j+1})}(y_{1,j+1},y_{2,j+1}), \quad \text{ and }
    \\
    \beta_j(s_{j})=&\sum_{s_{j+1}=0}^3\beta_{j+1}(s_{j+1})f^{(s_{j+1})}(y_{1,j+1},y_{2,j+1}) a_{s_js_{j+1}}.
\end{align*}
We define the posterior single-state and bi-state probabilities as $\gamma_j(s_j)=\mathbb P_{\phi^*}(s_j\mid(y_{1j}, y_{2j})_{j=1}^m)$ and $\xi_{j}(s_j, s_{j+1})=\mathbb P_{\phi^*}(s_j,s_{j+1}\mid(y_{1j}, y_{2j})_{j=1}^m)$. They satisfy the marginalization condition $\gamma_j(s_j) = \sum_{s_{j+1}=0}^3 \xi_j (s_j, s_{j+1})$ and can be computed by
\begin{align}\label{eq_gamma}
\gamma_j(s_j)&=\frac{\alpha_j(s_j)\beta_j(s_j)}{\sum_{s_j'=0}^3\alpha_j(s_j')\beta_j(s_j')},\quad\text{ and}
\\
\label{eq_xi}
\xi_{j}(s_j,s_{j+1})
&= \frac{\alpha_{j}(s_{j})\beta_{j+1}(s_{j+1})a_{s_{j}s_{j+1}}f^{(s_{j+1})}(y_{1,j+1},y_{2,j+1})} {\sum_{s_j'=0}^3\sum_{s_{j+1}'=0}^3\alpha_{j}(s_{j}')\beta_{j+1}(s_{j+1}')a_{s_{j}'s_{j+1}'}f^{(s_{j+1}')}(y_{1,j+1},y_{2,j+1})}.
\end{align}
The complete-data likelihood for $(y_{1j}, y_{2j}, s_j)_{j=1}^m$ 
is given by
\begin{align*}
L\left(\phi;(y_{1j}, y_{2j}, s_j)_{j=1}^m\right)
=\pi_{s_1}\prod_{j=2}^ma_{s_{j-1}s_j}\cdot\prod_{j=1}^mf^{(s_j)}(y_{1j},y_{2j}).
\end{align*} 
With an appropriate initialization $\phi^{(0)} = (\pi^{(0)}, A^{(0)}, f_1^{(0)}, f_2^{(0)})$, the EM algorithm proceeds by iteratively implementing the E-step and M-step, as described below. 

\textbf{E-step:} Given the current parameter estimate $\phi^{(t)} = (\pi^{(t)}, A^{(t)}, f_1^{(t)}, f_2^{(t)})$, compute the forward and backward probabilities $(\alpha_j^{(t)}(s_j)$ and $\beta_{j}^{(t)}(s_j))$, and use them to calculate the posterior single-state probabilities $\gamma_j^{(t)}(s_j)$ and posterior bi-state probabilities $\xi_j^{(t)}(s_j,s_{j+1})$ via Equations (\ref{eq_gamma}) and (\ref{eq_xi}). The conditional expectation of the complete-data log-likelihood function is 
\begin{align*}
&D\left(\phi\mid \phi^{(t)}\right)\\
=&\sum_{\boldsymbol s}\mathbb P_{\phi^{(t)}}\left(\boldsymbol s\mid (y_{1j}, y_{2j})_{j=1}^m\right)\log L\left(\phi;(y_{1j}, y_{2j})_{j=1}^m, \boldsymbol s\right)\\
=& \sum_{\boldsymbol s}\left[\mathbb P_{\phi^{(t)}}\left(\boldsymbol s\mid (y_{1j}, y_{2j})_{j=1}^m\right)\left\{\log(\pi_{s_1})+\sum_{j=2}^m\log (a_{s_{j-1}s_j})+\sum_{j=1}^m \log f^{(s_j)}(y_{1j},y_{2j})\right\}\right].
\end{align*}

\textbf{M-step:} Given the posterior probabilities computed in the E-step, we update the parameter estimates by 
\begin{align*}
\phi^{(t+1)}&=\underset{\pi, A,f_1,f_2}{\arg\max}\:D\left(\pi,A, f_1, f_2\mid \phi^{(t)}\right).
\end{align*}
Using Lagrange multipliers to enforce the normalization constraints, the updates for the initial distribution $\pi^{(t+1)} = (\pi_0^{(t+1)}, \pi_1^{(t+1)}, \pi_2^{(t+1)}, \pi_3^{(t+1)})$ and the transition matrix $A^{(t+1)}=(a_{k\ell}^{(t+1)})_{k,\ell=0,1,2,3}$ are given by 
\begin{align*}
    \pi_k^{(t+1)}=&\gamma_1^{(t)}(k) \quad \text{ for } k = 0,1,2,3, \quad\text{ and}
    \\
    a_{k\ell}^{(t+1)} =& \frac{\sum_{j=2}^m\xi_{j-1}^{(t)}(k,\ell)}
    {\sum_{j=2}^m\sum_{\ell'=0}^3 \xi_{j-1}^{(t)}(k,\ell')} \quad \text{ for }k,\ell=0,1,2,3.
\end{align*}
To update the non-null density functions of $f_1$ and $f_2,$ we solve the following weighted maximum likelihood problems under a monotonicity constraint. Specifically, 
\begin{align}
f_1^{(t+1)} =& \underset{f_1\in \mathcal H}{\arg\max}\sum_{j=1}^m\left\{\left(\gamma_j^{(t)}(2)+\gamma_j^{(t)}(3)\right)\log f_1(y_{1j})\right\}, \quad\text{and}
\label{eq_update_f1}
\\
f_2^{(t+1)} =& \underset{f_2\in \mathcal H}{\arg\max}\sum_{j=1}^m\left\{\left(\gamma_j^{(t)}(1)+\gamma_j^{(t)}(3)\right)\log f_2(y_{2j})\right\}, \label{eq_update_f2}  
\end{align}
where $\mathcal H$ is the class of non-increasing density functions supported on the interval $[0, 1],$ subject to the regularity condition 
$$\lim_{\delta\to0^+}\sup_{f\in\mathcal H}\int_0^\delta f(y){\rm d}y = 0,$$ which ensures the absence of point mass near zero and guarantees integrability. We iterate between the E-step and M-step until convergence of the observed data log-likelihood or until parameter changes fall below a prespecified threshold.

Next, we provide the details for solving the optimization problem in (\ref{eq_update_f1}) using the pool-adjacent-violators algorithm (PAVA; \citealp{robertson1988order,cao2022optimal}). The PAVA-based update for the non-null densities detailed below follows the estimation strategy of STAREG \citep{li2024stareg}; here we adapt it to the four-state HMM required for the composite null hypothesis. Let $0 = y_{1(0)} \leq y_{1(1)} \leq \cdots \leq y_{1(m)} \leq y_{1(m+1)} = 1$ denote the ordered $p$-values from study 1, and define the weight $$\Gamma_j^{(t)} = \gamma_j^{(t)}(2) + \gamma_j^{(t)}(3), \quad \mbox{for}\quad j=1, \ldots, m.$$ Since the objective in (\ref{eq_update_f1}) depends only on the values of $f_1$ evaluated at these ordered points and $f_1$ is constrained to be non-increasing, the solution must be piecewise-constant. Without loss of generality, assume the solution $f_1^{(\dagger)}$ satisfies $$f_1^{(\dagger)}(y) = f_1^{(\dagger)}(y_{1(j)}) \quad \mbox{for} \quad y \in (y_{1(j-1)}, y_{1(j)}],  j = 1, \ldots, m+1$$ and $f_1^{(\dagger)}(1) = 0$. Since $f_1^{(\dagger)}$ is a density function, it must satisfy $$\int_0^1 f_1^{(\dagger)}(y){\rm d}y = \sum_{j=1}^m f_1^{(\dagger)}(y_{1(j)})(y_{1(j)} - y_{1(j-1)}) = 1.$$ Therefore, we only need to estimate the values of $f_1$ at the jump points $y_{1(j)}$ for $j = 1, \ldots, m$. 
Let $z_j = f_1(y_{1(j)})$ and define the feasible set $\mathcal{Q} = \{\boldsymbol{z}= (z_1, \ldots, z_m)\in\mathbb R^m: z_1\geq\cdots \geq z_m \}$. The goal is to solve the following constrained optimization: 
\begin{align*}
    \widehat{\boldsymbol{z}} = \underset{\boldsymbol{z}\in \mathcal{Q}}{\arg\max}\sum_{j=1}^m \left\{\Gamma_{(j)}^{(t)}\log z_j\right\},\quad\text{ subject to } \sum_{j=1}^m \{(y_{1(j)} - y_{1(j-1)}) z_j\} = 1. 
\end{align*} 
To solve this, we use the method of Lagrange multiplier. The Lagrangian is given by
\begin{align*}
    L(\boldsymbol{z}, \zeta) =& \sum_{j=1}^m \left\{\Gamma_{(j)}^{(t)}\log z_j \right\}+ \zeta \left[\sum_{j=1}^m \{(y_{1(j)} - y_{1(j-1)}) z_j \}- 1\right].
\end{align*}
Taking derivatives with respect to $\zeta$ and $z_j,$ we have
\begin{align*}
    \tilde \zeta = - \sum_{j=1}^m \Gamma_{(j)}^{(t)},\quad
    \tilde z_j =\frac{\Gamma_{(j)}^{(t)}}{\sum_{k=1}^m \Gamma_{k}^{(t)} }\cdot \frac{1}{y_{1(j)} - y_{1(j-1)}}\quad \text{ for } j = 1,\ldots,m.
\end{align*}
Plugging $\tilde{\zeta}$ into the Lagrangian, the constrained maximization reduces to the following monotonic regression problem: 
\begin{align*}
    \widehat{\boldsymbol{z}} =& \underset{\boldsymbol{z}\in\mathcal Q}{\arg\min} \left\{-L(\boldsymbol{z}, \tilde{\zeta})\right\}\\
    =& \underset{\boldsymbol{z}\in\mathcal Q}{\arg\min} \sum_{j=1}^m \left(\Gamma_{(j)}^{(t)}\left[-\log z_j - \frac{-\{\sum_{k=1}^m \Gamma_{(k)}^{(t)}\}\{y_{1(j)} - y_{1(j-1)}\}}{\Gamma^{(t)}_{(j)}}z_j\right]\right).
\end{align*}
Let $u_j = -1/z_j,$ and $\boldsymbol{u} = (u_1, \ldots, u_m)$. 
This is equivalent to the following weighted least squares isotonic regression problem:
\begin{align*}
    \widehat{\boldsymbol{u}} =& \underset{\boldsymbol{u}\in\mathcal Q}{\arg\min} \sum_{j=1}^m \left(\Gamma_{(j)}^{(t)}\left[u_j - \frac{-\{\sum_{k=1}^m \Gamma_{(k)}^{(t)}\}\{y_{1(j)} - y_{1(j-1)}\}}{\Gamma_{(j)}^{(t)}}\right]^2\right).
\end{align*}
The solution has a closed-form max-min representation:
\begin{align*}
    \widehat{u}_j = \max_{b\geq j}\min_{a\leq j} \frac{-\left\{\sum_{k=1}^m \Gamma_{(k)}^{(t)}\right\}\sum_{k=a}^b\{y_{1(k)} - y_{1(k-1)}\}}{\sum_{k=a}^b\Gamma_{(k)}^{(t)}},
\end{align*}
which can be efficiently computed via the pool-adjacent-violators algorithm (PAVA) \citep{barlow1972isotonic}. 
According to Theorem 3.1 of \cite{barlow1972isotonic}, the update of (\ref{eq_update_f1}) is given by $$f_1^{(t+1)}(y_{1(j)}) = \widehat{z}_j = -1/\widehat{u}_j\quad\text{ for } j=1,\ldots,m.$$ The update for $f_2^{(t+1)}$ in (\ref{eq_update_f2}) proceeds in exactly the same way by replacing $y_{1j}$ with $y_{2j}$ and $\Gamma_j^{(t)}$ with $\gamma_j^{(t)}(1) + \gamma_j^{(t)}(3).$ We omit the details. 

\paragraph{Estimation details.}
The EM algorithm was implemented with the following hyperparameters:
\begin{itemize}
  \item maximum number of iterations: \texttt{maxIter} $= 200$;
  \item convergence tolerance: \texttt{tol} $= 10^{-3}$, based on the relative change in the observed-data log-likelihood;
  \item input $p$-value floor: before model fitting, we replaced any zero or nonpositive $p$-value by $10^{-15}$ to avoid numerical instability caused by exact zeros;
  \item non-null density floor: during the EM algorithm, estimated non-null density values were truncated below at $10^{-15}$ to prevent undefined log-likelihood terms and unstable posterior-probability calculations.
\end{itemize}

We used a deterministic initialization scheme. For a given pair of studies with $p$-value vectors $(y_a,y_b)$, the marginal null proportions were first estimated separately for the two studies and truncated above at $0.999$. The initial four-state stationary probabilities were then set to
\[
\pi^{(0)} =
\bigl(\widehat\pi_{0,a}\widehat\pi_{0,b},\;
\widehat\pi_{0,a}(1-\widehat\pi_{0,b}),\;
(1-\widehat\pi_{0,a})\widehat\pi_{0,b},\;
(1-\widehat\pi_{0,a})(1-\widehat\pi_{0,b})\bigr),
\]
where $\widehat\pi_{0,a}$ and $\widehat\pi_{0,b}$ denote the estimated marginal null proportions for studies $a$ and $b$, respectively.
The transition matrix was initialized as
\[
A^{(0)} =
\begin{pmatrix}
0.90 & 0.04 & 0.04 & 0.02 \\
0.28 & 0.30 & 0.14 & 0.28 \\
0.28 & 0.14 & 0.30 & 0.28 \\
0.14 & 0.28 & 0.28 & 0.30
\end{pmatrix}.
\]
The initial non-null emission densities were initialized by decreasing functions based on the observed $p$-values, with $f_1(y_a)=1-y_a$ and $f_2(y_b)=1-y_b$, before applying the PAVA-based M-step updates.

In the two-study simulations, one pairwise EM fit was performed for each simulated replicate. In the $n$-study simulations, \texttt{CoHiM} fits all $\binom{n}{2}$ pairwise HMMs, with each pairwise fit using the same EM hyperparameters described above.

\subsection{Oracle procedure}\label{subsec_oracle_test}
Consider the oracle setting where the true parameter $\phi^* = (\pi^*, A^*, f_1^*, f_2^*)$ is known. Define the replicability Local Index of Significance (rLIS) for the $j$th hypothesis as the posterior probability that the hypothesis is not replicable, i.e., $s_j$ belongs to the non-replicable configuration set $\{0,1,2\}$, given all observed $p$-value pairs:
$$
{\rm rLIS}_j =  \mathbb P_{\phi^*}\left(s_j\in\{0,1,2\}\mid (y_{1j'}, y_{2j'})_{j'=1}^m\right)\quad \text{ for }j = 1, \ldots, m.
$$
Let $I(B)$ denote the indicator function for an event $B$, i.e., $I(B) = 1$ if $B$ is true and $0$ otherwise.
For a rejection threshold $\lambda$, we reject $H_{0j}$ if ${\rm rLIS}_j \leq \lambda$. The total number of rejections is
$$
R(\lambda) = \sum_{j=1}^{m}I({\rm rLIS}_j\le \lambda).
$$
The number of false rejections is
$$
V(\lambda) = \sum_{j=1}^{m}I\left({\rm rLIS}_j \le \lambda, s_j \in \{0, 1, 2 \}\right).
$$
By the law of total expectation,
\begin{align}
    \mathbb E\{V(\lambda)\} =& \mathbb E\left\{\sum_{j=1}^{m}I({\rm rLIS}_j \le \lambda, s_j \in \{0, 1, 2 \})\right\}\notag\\
    =& \mathbb E\left[\mathbb E\left\{\sum_{j=1}^{m}I({\rm rLIS}_j \le \lambda, s_j \in \{0, 1, 2 \})\mid (y_{1j}, y_{2j})_{j=1}^m\right\}\right]\notag\\
    =& \mathbb E\left\{\sum_{j=1}^{m}I({\rm rLIS}_j \le \lambda){\rm rLIS}_j\right\}. 
    \label{eq_EV}
\end{align}
To control the FDR at a pre-specified level $q,$
we define FDR and false discovery proportion (FDP) as
$$
{\rm FDR}(\lambda) = \mathbb E\left[\text{FDP}(\lambda)\right],\quad \text{FDP}(\lambda) =\frac{V(\lambda)}{R(\lambda)\vee 1}= \frac{\sum_{j=1}^{m}I( {\rm rLIS}_j \le \lambda, s_j \in \{0,1,2\})}{\left\{\sum_{j=1}^{m}I({\rm rLIS}_j \le \lambda )\right\}\vee 1}.
$$
Using (\ref{eq_EV}), we approximate the FDP by
$$
{\rm FDP}(\lambda) \approx \frac{\sum_{j=1}^{m}I({\rm rLIS}_j \le \lambda){\rm rLIS}_j}{\left\{\sum_{j=1}^{m}I({\rm rLIS}_j \le \lambda )\right\}\vee 1}.
$$
To control the FDR at the target level $q$ using the oracle posterior quantities, we choose the largest threshold $\lambda$ such that the estimated FDP does not exceed $q:$ 
\begin{align}
\begin{aligned}
    &\lambda_{\rm OR} = \sup\left\{\lambda \geq 0: \frac{\sum_{j=1}^{m}I({\rm rLIS}_j \le \lambda){\rm rLIS}_j}{\left\{\sum_{j=1}^{m}I({\rm rLIS}_j \le \lambda )\right\}\vee 1} \leq q\right\},\\
    &\text{and reject } H_{0j} \text{ if } {\rm rLIS}_j \leq \lambda_{\rm OR} \quad \text{ for }j =1,\ldots,m.
\end{aligned} \label{eq_oracle_test_1}
\end{align}

Let ${\rm rLIS}_{(1)} \leq \cdots \leq {\rm rLIS}_{(m)}$ be the ordered ${\rm rLIS}$ values and $H_{0(1)}, \ldots, H_{0(m)}$ be the corresponding hypotheses. Suppose that $\lambda_{\rm OR}$ yields $R$ rejections, i.e., ${\rm rLIS}_{(R)}\leq \lambda_{\rm OR} < {\rm rLIS}_{(R+1)}$. The rejection criterion (\ref{eq_oracle_test_1}) is equivalent to the following step-up procedure:
\begin{equation}
    \label{eq_oracle_test_2}
    \begin{aligned}
        &\text{Let } R = \max\left\{r: \frac{1}{r}\sum_{j=1}^r {\rm rLIS}_{(j)}\le q\right\};\\
        &\text{then reject all $H_{0(j)}$}\quad \text{ for }j=1,\dots,  R.
    \end{aligned}
    \notag
\end{equation}

\subsection{Data-driven procedure}\label{sec_numeric_test}
With the maximum likelihood estimator $\widehat{\phi}_m = (\widehat{\pi}, \widehat{A}, \widehat{f}_1, \widehat{f}_2)$ obtained from the EM algorithm in Section~\ref{subsec_est}, we compute the estimated forward and backward probabilities as follows:
\begin{align}
    \widehat{\alpha}_1(s_1) =& \widehat{\pi}_{s_1}\widehat{f}^{(s_1)}(y_{11}, y_{21}),
    \quad \quad \widehat{\beta}_m(s_m) = 1,
    \label{eq_alpha_beta_initial}
    \\
    \widehat{\alpha}_{j+1}(s_{j+1}) =& \sum_{s_j = 0}^3 \widehat{\alpha}_j(s_j) \widehat{a}_{s_j, s_{j+1}} \widehat{f}^{(s_{j+1})}(y_{1,j+1}, y_{2,j+1})\quad \text{ and }
    \label{eq_alpha_update}
    \\
    \widehat{\beta}_j(s_j) =& \sum_{s_{j+1} = 0}^3 \widehat{\beta}_{j+1}(s_{j+1}) \widehat{a}_{s_j, s_{j+1}} \widehat{f}^{(s_{j+1})}(y_{1,j+1}, y_{2,j+1}).
    \label{eq_beta_update}
\end{align}
Using these quantities, the estimated replicability Local Index of Significance is 
\begin{align}
    \widehat{\mathrm{rLIS}}_j =& \mathbb P_{\widehat{\phi}_m} \left(s_j\in\{0, 1, 2\}\mid (y_{1j'}, y_{2j'})_{j'=1}^m\right)= \frac{\sum_{s_j=0}^2 \widehat{\alpha}_j(s_j)\widehat{\beta}_j(s_j)}{\sum_{s_j=0}^3 \widehat{\alpha}_j(s_j)\widehat{\beta}_j(s_j)}.\label{eq_test stats}
\end{align}
To implement the data-driven step-up procedure, we first order the estimated replicability Local Index of Significance values $\widehat{\mathrm{rLIS}}_{(1)}\leq \cdots \leq \widehat{\mathrm{rLIS}}_{(m)}$ with the corresponding replicability null hypotheses denoted by $H_{0(1)}, \ldots, H_{0(m)}$. Given a target FDR level $q\in (0, 1)$, we have the step-up procedure
\begin{align}
    \begin{aligned}
        &\widehat{R} = \max\left\{r: \frac{1}{r}\sum_{j=1}^r \widehat{\mathrm{rLIS}}_{(j)} \leq q\right\},\\
        & \text{and reject } H_{0(j)} \quad \text{ for }j = 1,\ldots,\widehat R.
    \end{aligned}
\end{align}

\section{Proof of main results}
\subsection{Proof of Proposition \ref{prop_compact}}
\begin{proof}
    Since the spaces of the transition matrix $A$ and the stationary probability $\pi$ are bounded and closed with finite dimensions, they are compact. We just need to show the non-increasing density function space $\mathcal H$ with the constraint $\lim_{\delta\to0^+}\sup_{f\in\mathcal H}\int_0^\delta f(y){\rm d}y = 0$ is compact under the Hellinger distance $d_H(\cdot, \cdot)$.

    First, recall the definition of the Hellinger distance between two densities $g_1,g_2$ on $[0,1]$:
    \[
      d_H(g_1,g_2)
      = \Bigl(\tfrac12 \int_0^1 \bigl(g_1^{1/2}(y) - g_2^{1/2}(y)\bigr)^2 \,dy\Bigr)^{1/2}.
    \]
    This immediately gives
    \[
      d_H(g_1,g_2)
      = \tfrac1{\sqrt2} \bigl\|g_1^{1/2} - g_2^{1/2}\bigr\|_2,
    \]
    where $\|\cdot\|_2$ denotes the $L_2$ metric. Therefore, up to the constant factor $1/\sqrt2$, the Hellinger distance is exactly the $L^2$ distance on the space of square-root densities. Hence, compactness in one metric implies compactness in the other.

    Denote $\mathcal H^{1/2} = \{h: h^2\in\mathcal H\}$. Then $h$ satisfies $\|h\|_2^2 = \int_0^1h(y)^2{\rm d}y = 1$ as well as $\lim_{\delta\to0^+}\sup_{h\in\mathcal H^{1/2}}\int_0^\delta h(y)^2{\rm d}y = 0$. Thus $\mathcal H^{1/2} \subseteq L^2[0,1]$ and we just need to show that $\mathcal H^{1/2}$ is compact with respect to the $L_2$ norm.

    For any $\epsilon > 0$, there exists $\delta = \delta(\epsilon) > 0$ such that $\sup_{h\in\mathcal H^{1/2}}\int_0^\delta h(y)^2{\rm d}y < \epsilon$. Extend $h$ by zero outside of $[0, 1]$ and let $\zeta_\delta h(y) = h(y + \delta)$ as the $\delta$-shift of $h$. Then for any $h \in \mathcal H^{1/2}$,
    \begin{align*}
        \|\zeta_\delta h - h\|_2^2 =& \int_{-\infty}^\infty\left\{h(y+\delta) - h(y)\right\}^2{\rm d}y
        \\
        =& \int_0^{1-\delta}\left\{h(y+\delta) - h(y)\right\}^2{\rm d}y + \int_{-\delta}^0 h(y+\delta)^2{\rm d}y + \int_{1-\delta}^1 h(y)^2{\rm d}y 
        \\
        \leq& \int_0^{1-\delta}\{h(y+\delta)^2+h(y)^2\}{\rm d}y -2\int_0^{1-\delta}\{h(y+\delta)h(y)\}{\rm d}y + 2\int_0^\delta h(y)^2{\rm d}y 
        \\
        \leq& \int_0^12h(y)^2{\rm d}y -2\int_0^{1-\delta}h(y)^2{\rm d}y + 2\int_0^\delta h(y)^2{\rm d}y 
        \\
        =& 4\int_0^\delta h^2(y){\rm d}y < 4\epsilon,
    \end{align*}
    which means that $\mathcal H^{1/2}$ is equicontinuous. Additionally, since $h(y) = 0$ for $y \notin [0, 1]$, we have $\lim_{r\to\infty}\int_{|y|>r}h(y)^2{\rm d}y = 0$. In other words, $\mathcal H^{1/2}$ is equitight. By Fr\'echet–Kolmogorov theorem \citep{gerlach2010kolmogrov}, $\mathcal H^{1/2}$ is relative compact by the equicontinuity and equitightness. 

    To show the compactness of $\mathcal H^{1/2}$, we just need to show it is closed. For any $h_n\in\mathcal H^{1/2}$ satisfying $\|h_n - h\|_2 \to 0$ as $n\to\infty$ for some $h\in L^2[0,1]$, our goal is to show $h\in\mathcal H^{1/2}$. First, 
    \begin{align*}
        \|h\|_2^2 = \lim_{n\to\infty}\|h_n\|_2^2 = 1.
    \end{align*}
    Next, we show that $h$ is also non-increasing. For any $\varepsilon > 0$, denote 
    \begin{align}
        E_n(\varepsilon) = \{y: |h_n(y) - h(y)| > \varepsilon\}.
    \end{align}
    Denote $\mu(\cdot)$ as the Lebesgue measure.
    Thus 
    \begin{align*}
        \varepsilon \mu\{E_n(\varepsilon)\}^{1/2} =& \left(\int_{E_n(\varepsilon)} \varepsilon^{2} {\rm d}y\right)^{1/2}\\
        \leq & \left(\int_{E_n(\varepsilon)} |h_n(y) - h(y)|^2 {\rm d}y\right)^{1/2}\\
        \leq & \left(\int_{0}^1 |h_n(y) - h(y)|^2 {\rm d}y\right)^{1/2}\\
        = & \|h_n - h\|_2 \to 0 \text{ as } n\to \infty,
    \end{align*}
    which implies that $h_n$ converges to $h$ in measure, or equivalently, for any $\varepsilon > 0$, 
    \begin{align*}
        \lim_{n\to\infty}\mu\{E_n(\varepsilon)\} = 0.
    \end{align*}
    By the theorem of Riesz \citep{riesz1928convergence}, there exists a subsequence $\{h_{n_k}\}$ of $\{h_n\}$, such that $h_{n_k} \rightarrow h$ almost everywhere. Since $h_{n_k}$ are non-increasing, we could conclude that $h$ is also non-increasing. Finally, by the triangle inequality, we have
    \begin{align*}
        \left\{\int_0^\delta h^2(y){\rm d}y\right\}^{1/2} \leq & \left\{\int_0^\delta h_n(y)^2{\rm d}y\right\}^{1/2} + \left\{\int_0^\delta \{h(y) - h_n(y)\}^2{\rm d}y\right\}^{1/2}
        \\
        \leq & \left\{\int_0^\delta h_n(y)^2{\rm d}y\right\}^{1/2} + \|h - h_n\|_2 \to 0
    \end{align*}
    as $\delta\to0^+$ and $n\to\infty$. Therefore, we have $h\in\mathcal H^{1/2}$ and thus $\mathcal H^{1/2}$ is closed and compact. Consequently, we know $\mathcal H$ is compact with respect to the Hellinger distance.
\end{proof}
\subsection{Proof of Theorem \ref{thm_consistency}}
    For any $\phi\in\Phi$ with $d(\phi, \phi^*)$, define the conditional distribution of $(y_{1j}, y_{2j})_{j=1}^m$ given $s_1 = k$ for $ k = 0, 1, 2, 3$ as 
    \begin{align*}
        \ell_m(k; \phi) :
        =& f^{(k)}(y_{11}, y_{21}; \phi) \sum_{s_2}\dots\sum_{s_m} a_{k, s_2}(\phi)f^{(s_2)}(y_{12}, y_{22}; \phi) \prod_{j = 3}^m a_{s_{j-1}, s_j}(\phi)f^{(s_j)}(y_{1j}, y_{2j}; \phi),
    \end{align*}
    where $s_j$ denotes the hidden state of the $j$th gene for $j = 1,\ldots,m$. 
    Denote the largest $\ell_m(k;\phi)$ for $k = 0, 1, 2, 3$ as
    $$
    q_m(\phi) = 
    \max_{k = 0,1,2,3} \ell_m(k; \phi).
    $$
    Then the likelihood function $p_m(\phi) = p_m\left((y_{1j}, y_{2j})_{j=1}^m; \phi\right)$ satisfies 
    \begin{align}
        p_m(\phi) =& \sum_{k=0,1,2,3} \pi_k(\phi) \ell_m(k; \phi)
        \leq  q_m(\phi),
        \label{eq_p_m_over_q_m_upper_bound}
    \end{align}
    where $\pi_k(\phi) = P_\phi(s_j = k)$ for $j = 1, \ldots, m$, $k = 0, 1, 2, 3$, and it satisfies $\sum_{k=0}^3\pi_k(\phi) = 1$.
    
    In addition, assume $q_m(\phi) = \ell_m(k_0; \phi)$ for some $k_0 \in \{0, 1, 2, 3\}$. Then
    \begin{align}
        p_m(\phi) =& \sum_{k=0,1,2,3} \pi_k(\phi) \ell_m(k; \phi)
        \geq  \pi_{k_0}(\phi) \ell_m(k_0; \phi)
        \geq \varepsilon_0 q_m(\phi),
        \label{eq_p_m_over_q_m_lower_bound}
    \end{align}
    where (\ref{eq_p_m_over_q_m_lower_bound}) holds due to (C2): $\pi_k(\phi) \geq \varepsilon_0$ for $k = 0,1,2,3$.
    
    Therefore, combining (\ref{eq_p_m_over_q_m_upper_bound}) and (\ref{eq_p_m_over_q_m_lower_bound}) and taking the logarithm, we have
    \begin{equation}
        \label{eq_similar_limit}
        \log\left(\varepsilon_0\right) \leq \log \frac{p_m(\phi)}{q_m(\phi)} \leq 0.
    \end{equation}

    Dividing (\ref{eq_similar_limit}) by $m$, we have
    \begin{align}
        \label{eq_similar_limit_2}
        \frac{1}{m}\log\left(\varepsilon_0\right) \leq \frac{1}{m}\log p_m(\phi) - \frac{1}{m}\log q_m(\phi) \leq 0.
    \end{align}
    Letting $m\rightarrow \infty$, the lower bound of inequality (\ref{eq_similar_limit_2}) tends to $0$. Hence $m^{-1}\log q_m( \phi)$ and  $m^{-1}\log p_m( \phi)$ converges to the same limit in probability. Taking the expectation on all terms of inequality (\ref{eq_similar_limit_2}), we know $m^{-1}\mathbb E_{\phi^*}\log q_m( \phi)$ has the same limit as $m^{-1}\mathbb E_{\phi^*}\log p_m( \phi)$. By Theorem 2 in \cite{leroux1992maximum}, there exists some $H(\phi^*, \phi) < \infty$ satisfying
    \begin{align*}
        \lim_{m\rightarrow \infty}\frac{1}{m} \mathbb E_{\phi^*} \{\log p_m(\phi)\}  &= H(\phi^*, \phi), \text{ and }\\
        \lim_{m\rightarrow \infty}
        \frac{1}{m} \log p_m(\phi)  &= H(\phi^*, \phi) \text{ almost surely under } \phi^*.
    \end{align*} 
    We also have 
    \begin{align*}
        \lim_{m\rightarrow \infty}\frac{1}{m} \mathbb E_{\phi^*} \{\log q_m(\phi)\}  &= H(\phi^*, \phi), \text{ and }\\
        \lim_{m\rightarrow \infty}
        \frac{1}{m} \log q_m(\phi)  &= H(\phi^*, \phi) \text{ almost surely under } \phi^*.
    \end{align*}

    Replacing $\phi$ by $\phi^*$, we get the limit $H(\phi^*, \phi^*)$. Lemma 6 in \cite{leroux1992maximum} gives that $H(\phi^*, \phi) < H(\phi^*, \phi^*)$ for $\phi \neq \phi^*$.
    Letting $\varepsilon = \{H(\phi^*, \phi^*) - H(\phi^*, \phi)\}/2$, there exists $m_\varepsilon$ such that,
    \begin{align}
        & \frac{1}{m_\varepsilon} \mathbb E_{\phi^*}\{\log q_{m_\varepsilon}(\phi)\} < H(\phi^*, \phi) + \varepsilon =  H(\phi^*, \phi^*) - \varepsilon.
        \label{eq_neighbour_sup_bound}
    \end{align}

    Denote $O_{\phi, r} = \{\phi' \in \Phi: d(\phi', \phi) < r\}$ as a ball centered at $\phi$ with radius $r > 0$, where $d(\phi', \phi)$ is the distance between $\phi'$ and $\phi$ defined in (\ref{eq_distance_of_parameters}). 
    $\mathbb E_{\phi^*} [\{\log (\sup_{\phi' \in O_{\phi,r}} q_{m_\varepsilon}(\phi'))\}^+] < \infty$ by (C4).
    Therefore, $\mathbb E_{\phi^*} [\{\log (\sup_{\phi' \in O_{\phi, r}} q_{m_\varepsilon}(\phi'))\}^+]$ is a bounded monotone increasing function of $r$. Since $f_1(\phi), f_2(\phi)$ are continuous functions of $\phi$, $p_m(\phi)$ and $q_m(\phi)$ are also continuous. By the Monotone Convergence Theorem and the continuity of $q_{m_\varepsilon}(\phi)$, we have
    \begin{align*}
        \frac{1}{m_\varepsilon}\mathbb E_{\phi^*} \left\{\log \left(\sup_{\phi' \in O_{\phi, r}} q_{m_\varepsilon}(\phi')\right)\right\} \rightarrow \frac{1}{m_\varepsilon}\mathbb E_{\phi^*} \left\{\log  q_{m_\varepsilon}(\phi)\right\}\text{ as } 
        r \rightarrow 0.
    \end{align*}
    Then there exists $r_0>0$, such that 
    \begin{align}
        \frac{1}{m_\varepsilon}\mathbb E_{\phi^*} \left\{\log \left(\sup_{\phi' \in O_{\phi, r_0}} q_{m_\varepsilon}(\phi')\right)\right\} <&\frac{1}{m_\varepsilon} \mathbb E_{\phi^*}\{\log q_{m_\varepsilon}(\phi)\}+ \varepsilon/2\notag\\
        <& H(\phi^*, \phi^*) - \varepsilon /2,\label{eq_q_at_n_epsion_and_entropy}
    \end{align}
    where the second inequality holds due to (\ref{eq_neighbour_sup_bound}).

    Noting that $p_m(\phi)$ and $q_m(\phi)$ are continuous with respect to $\phi\in O_{\phi, r}$ for $r<\delta_0$ and $k = 0,1,2,3$. Thus we can extend (\ref{eq_p_m_over_q_m_upper_bound}) and (\ref{eq_p_m_over_q_m_lower_bound}) as follows: 
    \begin{align*}
        \varepsilon_0\sup_{\phi'\in O_{\phi, r}} q_m(\phi') \leq \sup_{\phi'\in O_{\phi, r}}p_m(\phi') \leq & \sup_{\phi'\in O_{\phi, r}} q_m(\phi').
    \end{align*}
    Taking the logarithm, we have
    $$
    \frac{1}{m}\log\left(\varepsilon_0\right) \leq \frac{1}{m}\log \left\{\sup_{\phi'\in O_{\phi, r}}p_m(\phi')\right\} - \frac{1}{m}\log\left\{ \sup_{\phi'\in O_{\phi, r}} q_m(\phi')\right\} \leq 0.
    $$
    Thus, $m^{-1}\log \{\sup_{\phi'\in O_{\phi,r}}p_m(\phi')\}$ and $m^{-1}\log \{\sup_{\phi'\in O_{\phi,r}}q_m(\phi')\}$ converge to the same limit in probability. Define
    \begin{equation*}
        J(\phi^*, \phi; O_{\phi, r}) = \lim_{m\rightarrow \infty} \frac{1}{m}\mathbb E_{\phi^*}\left\{\log \left(\sup_{\phi'\in O_{\phi, r}}q_m(\phi')\right)\right\}.
    \end{equation*}
    In addition, we have
    \begin{align}
        \frac{1}{m}\log \left\{\sup_{\phi'\in O_{\phi, r}}q_m(\phi')\right\} &\rightarrow J(\phi^*, \phi; O_{\phi, r})\text{ in probability, and }\notag\\
        \frac{1}{m}\log \left\{\sup_{\phi'\in O_{\phi, r}}p_m(\phi')\right\} &\rightarrow J(\phi^*, \phi; O_{\phi, r})\text{ in probability}.
        \label{eq_neighbour_limit_of_p_m}
    \end{align}

    By the construction of $q_m(\phi) = q_m((y_{1j}, y_{2j})_{j=1}^m; \phi)$, Lemma 3 of \cite{leroux1992maximum} shows that  $\log q_m((y_{1j}, y_{2j})_{j=1}^m; \phi)$ is subadditive, which means for any sequence $(y_{1j}, y_{2j})_{j=1}^m$,
    \begin{align*}
        \log q_{s+t}((y_{1j}, y_{2j})_{j=1}^{s+t}; \phi) \leq \log q_s((y_{1j}, y_{2j})_{j=1}^s; \phi) + \log q_t((y_{1j}, y_{2j})_{j=s+1}^{s+t}; \phi).
    \end{align*}
    By the property of subadditive processes \citep{fekete1923verteilung},
    \begin{equation*}
        J(\phi^*, \phi; O_{\phi, r}) = \inf_{m} \frac{1}{m}\mathbb E_{\phi^*}\left\{\log \left(\sup_{\phi'\in O_{\phi, r}}q_m(\phi')\right)\right\}
    \end{equation*}
    which implies that
    \begin{equation}
        \label{eq_neighbour_entropy_and_q_at_n_epsilon}
        J(\phi^*, \phi; O_{\phi, r}) \leq \frac{1}{m_\varepsilon}\mathbb E_{\phi^*}\left\{\log \left(\sup_{\phi'\in O_{\phi, r}}q_{m_\varepsilon}(\phi')\right)\right\}.
    \end{equation}
    Consequently, by (\ref{eq_neighbour_limit_of_p_m}), (\ref{eq_neighbour_entropy_and_q_at_n_epsilon}) and (\ref{eq_q_at_n_epsion_and_entropy}), we have as $m\rightarrow \infty$,
    \begin{align}
        \frac{1}{m}\log \left\{\sup_{\phi'\in O_{\phi, r}}p_m(\phi')\right\}\rightarrow & J(\phi^*, \phi; O_{\phi, r}) \text{ in probability, and }\notag\\
         J(\phi^*, \phi; O_{\phi, r}) \leq& \frac{1}{m_\varepsilon}\mathbb E_{\phi^*}\left\{\log \left(\sup_{\phi'\in O_{\phi, r}}q_{m_\varepsilon}(\phi')\right)\right\}\notag\\
         <& H(\phi^*, \phi^*) - \varepsilon/2.
         \label{eq_limit_log_likelihood_in_neighborhood}
    \end{align}

    Next, we use (\ref{eq_limit_log_likelihood_in_neighborhood}) to show the consistency of $\widehat{\phi}_m$. Let $C$ be any closed subset of $\Phi$, not containing $\phi^*$. Since $\Phi$ is compact, $C$ is also compact and is covered by the union of finite open sets $\bigcup_{h=1}^d O_{\phi_h, r}$, where $\{\phi_1, \ldots, \phi_d\}$ is a finite set in $C$.

    Therefore,
    \begin{align*}
        & \sup_{\phi \in C}\left\{\log p_m(\phi) - \log p_m(\phi^*)\right\}\\
        \leq&\max_{1\leq h \leq d} \left[m\left\{\frac{1}{m}\log \left(\sup_{\phi \in O_{\phi_h, r}} p_m(\phi)\right) - \frac{1}{m} \log p_m (\phi^*)\right\}\right]\\
         \rightarrow & -\infty \text{ in probability},
    \end{align*}
    where the limit in the last line holds due to  (\ref{eq_limit_log_likelihood_in_neighborhood}) and 
    that $m^{-1}\log p_m(\phi^*) \to H(\phi^*, \phi^*)$ almost surely as $m\to\infty$ by Birkhoff's ergodic theorem \citep{birkhoff1931proof}.
    Since $\widehat{\phi}_m$ is a maximum likelihood estimator, $\log p_m (\widehat{\phi}_m)\geq \log p_m (\phi^*)$. Therefore, $\widehat{\phi}_m$ cannot be in $C$.

    In other words, for any open set $O_{\phi, r}\subseteq \Phi$ containing $\phi^*$, $\widehat{\phi}_m$ must be in $O_{\phi, r}$ for large $m$. Letting $r \rightarrow 0$, we conclude that $\widehat{\phi}_m\rightarrow\phi^*$ in probability.

\subsection{Proof of Theorem \ref{thm_data_driven_FDR_control} }

    First, we introduce some notations used in the proof. Consider an infinite hidden Markov model with hidden states $\{S_j\}_{j=-\infty}^{\infty}$ and $p$-values $(y_{1j}, y_{2j})_{j=-\infty}^{\infty}$. Denote the following test statistics
    \begin{align*}
        T_j =& \mathbb P_{\phi^*}(s_j\in\{0, 1, 2\}\mid (y_{1j}, y_{2j})_{j=1}^m),\\
        \widehat{T}_j =& \mathbb P_{\widehat{\phi}_m}(s_j\in\{0, 1, 2\}\mid (y_{1j}, y_{2j})_{j=1}^m),\\
        T_j^\infty =& \mathbb P_{\phi^*}(s_j\in\{0, 1, 2\}\mid (y_{1j}, y_{2j})_{j=-\infty}^{\infty}),\\
        \widehat{T}_j^\infty =& \mathbb P_{\widehat\phi_m}(s_j\in\{0, 1, 2\}\mid (y_{1j}, y_{2j})_{j=-\infty}^{\infty})\quad \text{ for }j=1,\ldots,m.
    \end{align*}
    For any test statistics $\xi_j \in \{T_j, \widehat T_j, T_j^\infty, \widehat T_j^\infty\}$ corresponding to the null hypothesis $H_{0j}$, consider the testing procedure based on ordered $\xi_{(1)} \leq \cdots \leq \xi_{(m)}$ with corresponding null hypotheses $H_{0(1)}, \ldots, H_{0(m)}$. We have the number of rejections given by 
    \begin{align}
        R_0 = \max\left\{r: \frac{1}{r}\sum_{j=1}^r \xi_{(j)} \leq q\right\}.
        \label{eq_R_0}
    \end{align}
    We reject $H_{0(j)}$ for $j = 1,\ldots,R_0$. 
    An equivalent algorithm is 
    \begin{align}
        \lambda_0 = \sup\left\{\lambda\in(0,1): \frac{\sum_{j=1}^m \xi_jI(\xi_j\leq \lambda)}{\left\{\sum_{j=1}^m I(\xi_j \leq \lambda)\right\} \vee 1} \leq q\right\}.
        \label{eq_lambda_0}
    \end{align}
    The rejection threshold can be written as $\lambda_0 = \xi_{(R_0)}$. The total number of false rejections is $V_0 = \sum_{j=1}^m I(\xi_j \leq \lambda_0 \text{ and } s_j\in\{0, 1, 2\})$. Replacing $\xi_j$ by $T_j, \widehat{T}_j, T_j^\infty$ and $\widehat{T}_j^\infty$, the number of rejections and number of false rejections are denoted by $(R, V)$, $(\widehat{R}, \widehat{V})$, $(R^\infty, V^\infty)$ and $(\widehat{R}^\infty, \widehat{V}^\infty)$. Moreover, we define the corresponding rejection thresholds as $\widehat{\lambda}_{\mathrm{OR}}, \widehat{\lambda}_{\mathrm{rLIS}}, \widehat{\lambda}_{\mathrm{OR}}^{\infty}, \widehat{\lambda}_{\mathrm{rLIS}}^{\infty}$.    
    
    Next, we consider the distribution of $T_j^{\infty}$. Since $\{S_j\}_{j=-\infty}^{\infty}$ is stationary, irreducible, and aperiodic, the two-sided generalization of Theorem 6.1.3 in \cite{durrett2019probability} implies that $\{T_j^\infty\}$ is ergodic. Therefore,  $T_j^\infty$ are identically distributed. 
    Denote the cumulative distribution function of $T_j^\infty$ as
    \begin{align*}
        \mathbb P_{\phi^*}(T_j^\infty \leq t) =  G^\infty(t).
    \end{align*}
    Denote the conditional cumulative distribution function of $T_j^{\infty}$ given $s_j = k$ as
    \begin{align*}
        \mathbb P_{\phi^*}(T_j^\infty \leq t\mid s_j = k) = G_k^\infty(t) \quad\text{ for }k = 0, 1, 2, 3.
    \end{align*} 
    Thus for $\phi^* = (\pi^*, A^*, f_1^*, f_2^*),$
    $$
    G^\infty(t) = \pi_0^* G_0^\infty(t) + \pi_1^* G_1^\infty(t) + \pi_2^* G_2^\infty(t) + \pi_3^* G_3^\infty(t).
    $$
    Let 
    \begin{align}
        \alpha_* = \inf\{0\leq t\leq 1: G^\infty(t) = 1\}.
        \label{eq_alpha_*}
    \end{align}
    
    By the forward-backward algorithm \citep{baum1970maximization}, 
    \begin{align*}
        T_j^\infty = \frac{\sum_{s_j=0}^2 \alpha_j(s_j)\beta_j(s_j)}{\sum_{s_j=0}^3 \alpha_j(s_j)\beta_j(s_j)},
    \end{align*}
    where $\alpha_j(s_j) = \mathbb P_{\phi^*}((y_{1t}, y_{2t})_{t=-\infty}^j, s_j)$ and $\beta_j(s_j) = \mathbb P_{\phi^*}((y_{1t}, y_{2t})_{t=j+1}^\infty \mid s_j)$. $\alpha_j(\cdot)$ and $\beta_j(\cdot)$ can be derived recursively by $\alpha_{j+1}(s_{j+1}) = \sum_{s_j=0}^3 \alpha_j(s_j) a_{s_j,s_{j+1}} f^{(s_j+1)} (y_{1,j+1}, y_{2,j+1})$ and $\beta_j(s_j) = \sum_{s_{j+1}=0}^3 \beta_{j+1}(s_{j+1}) f^{(s_{j+1})} (y_{1,j+1}, y_{2,j+1})$. 
    Since the joint distribution of $(y_{1j}, y_{2j})_{j=-\infty}^\infty$ is continuous, and $T_j^\infty$ is a continuous map from $(y_{1j}, y_{2j})_{j=-\infty}^\infty$ to (0, 1), the probability density function of $T_j^\infty$ is positive and continuous on $(0, \alpha_*)$. 
    It suffices to show that $G^\infty$ is strictly increasing in $(0, \alpha_*)$, which is needed in the proof of Lemma \ref{lemma_consistency_when_lambda_less_than_alpha}. For some threshold $\lambda > 0$, define the number of rejections and false rejections as
    \begin{align*}
        R_\lambda^\infty =& \sum_{j=1}^m I(T_j^{\infty} \leq \lambda), \\
        V_\lambda^\infty =& \sum_{j=1}^m I(T_j^{\infty} \leq \lambda, s_j\in\{0, 1, 2\}).
    \end{align*}
    Thus, we have the expectations 
    \begin{align*}
        \mathbb E (R_\lambda^\infty) =& m G^\infty(\lambda),\\
        \mathbb E (V_\lambda^\infty) =& m(\pi_0 G_0^\infty(\lambda) + \pi_1 G_1^\infty(\lambda) + \pi_2 G_2^\infty(\lambda)).
    \end{align*} 
    Therefore, the marginal FDR  is 
    $$
    Q_{\text{OR}}^\infty(\lambda) = \mathbb E (V_\lambda^\infty)/\mathbb E (R_\lambda^\infty) = (\pi_0 G_0^\infty(\lambda) + \pi_1 G_1^\infty(\lambda) + \pi_2 G_2^\infty(\lambda))/G^\infty(\lambda).
    $$
    Theorem 1 of \cite{sun2009large} implies that $Q_{\text{OR}}^\infty(\lambda)$ is increasing in $\lambda$. Define the threshold based on the marginal FDR as 
    $$\lambda_{\text{OR}}^\infty = \sup\{\lambda: Q_{\text{OR}}^\infty(\lambda) \leq q\}.$$ 
    Since $G^{\infty}(t) = 1 $ is equivalent to the statement that $ G_s^\infty(t) = 1$ for $s = 0, 1, 2, 3$, we have
    \begin{align*}
        Q_{\mathrm{OR}}^\infty (\alpha_*) = \pi_0+\pi_1+\pi_2 > q
    \end{align*}
    under (C2) with $\pi_3 < 1-q$. Without loss of generality, we assume $\lambda_{\mathrm{OR}}^\infty < \alpha_*.$

    With the notations above, we will prove Theorem \ref{thm_data_driven_FDR_control} as follows. In Step 1, we show that the total number of rejections $R$ and $\widehat R$ approach infinity almost surely. In Step 2, we show that $\mathbb E|R/\widehat R - 1|  \rightarrow 0$ and $\mathbb E|V/\widehat V - 1| \rightarrow 0$ as $m\to \infty$. Finally, we show the asymptotic FDR control in Step 3.

    \textbf{Step 1.} Asymptotic behavior of rejection numbers. 

        Recall that $\widehat{\lambda}_{\mathrm{OR}}^{\infty}$ and $\widehat{\lambda}_{\mathrm{rLIS}}^{\infty}$ are the corresponding rejection threshold given by $\{T_j^\infty\}_{j=1}^m$ and $\{\widehat{T}_j^\infty\}_{j=1}^m$. First, we show $\widehat{\lambda}_{\mathrm{OR}}^{\infty} \rightarrow \lambda_{\mathrm{OR}}^{\infty}$ and $\widehat{\lambda}_{\mathrm{rLIS}}^{\infty} \rightarrow \lambda_{\mathrm{OR}}^{\infty}$ in probability by Lemma \ref{lemma_consistency_of_thresholds}.
        \begin{lemma}
            \label{lemma_consistency_of_thresholds}
            Assume (C1)-(C4) hold.
            $\widehat{\lambda}_{\mathrm{OR}}^{\infty} \rightarrow \lambda_{\mathrm{OR}}^{\infty}$ and $\widehat{\lambda}_{\mathrm{rLIS}}^{\infty} \rightarrow \lambda_{\mathrm{OR}}^{\infty}$ in probability.
        \end{lemma}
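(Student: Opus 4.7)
The plan is to handle the two threshold convergences in sequence. For $\hat\lambda_{\mathrm{OR}}^\infty$, ergodicity of the oracle statistics $\{T_j^\infty\}$ suffices; for $\hat\lambda_{\mathrm{rLIS}}^\infty$, we additionally need to control the perturbation from substituting $\hat\phi_m$ for $\phi_0$ inside the posterior, which ultimately rests on hidden Markov model forgetting.

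For $\hat\lambda_{\mathrm{OR}}^\infty$: since $(s_j)$ is stationary and ergodic under (C1) and $T_j^\infty$ is a measurable functional of the shifted infinite sequence, $\{T_j^\infty\}$ is itself stationary and ergodic. Birkhoff's ergodic theorem yields
\[
\frac{1}{m}\sum_{j=1}^m I(T_j^\infty\le\lambda) \xrightarrow{a.s.} G^\infty(\lambda),
\]
and, using the tower property together with the measurability of $I(T_j^\infty\le\lambda)$ with respect to $\sigma((y_{1t},y_{2t})_{-\infty}^\infty)$,
\[
\frac{1}{m}\sum_{j=1}^m T_j^\infty I(T_j^\infty\le\lambda) \xrightarrow{a.s.} E\bigl[I(s_1\in\{0,1,2\})\,I(T_1^\infty\le\lambda)\bigr] = \sum_{k=0}^2\pi_k G_k^\infty(\lambda).
\]
This gives pointwise convergence of the empirical FDP built from $\{T_j^\infty\}$ to $Q_{\mathrm{OR}}^\infty(\lambda)$. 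Monotonicity of both limits upgrades the convergence to uniform convergence on compact subintervals of $(0,\alpha_*)$ via a Glivenko-Cantelli argument, using continuity of $Q_{\mathrm{OR}}^\infty$ from the positive continuous density of $T_j^\infty$ established before the lemma. Because $Q_{\mathrm{OR}}^\infty$ is strictly increasing and $Q_{\mathrm{OR}}^\infty(\lambda_{\mathrm{OR}}^\infty)=q$ with $\lambda_{\mathrm{OR}}^\infty<\alpha_*$, the continuous-mapping principle for generalized inverses then gives $\hat\lambda_{\mathrm{OR}}^\infty \xrightarrow{p} \lambda_{\mathrm{OR}}^\infty$.

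For $\hat\lambda_{\mathrm{rLIS}}^\infty$, let $\hat Q^\infty$ and $\hat Q_{\mathrm{rLIS}}^\infty$ denote the empirical FDP approximations built from $\{T_j^\infty\}$ and $\{\hat T_j^\infty\}$, respectively, and decompose
\[
\hat Q_{\mathrm{rLIS}}^\infty(\lambda) - Q_{\mathrm{OR}}^\infty(\lambda) = \bigl(\hat Q_{\mathrm{rLIS}}^\infty(\lambda) - \hat Q^\infty(\lambda)\bigr) + \bigl(\hat Q^\infty(\lambda) - Q_{\mathrm{OR}}^\infty(\lambda)\bigr).
\]
The second term vanishes uniformly by the previous step. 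For the first, after bounding the denominator away from zero on a neighborhood of $\lambda_{\mathrm{OR}}^\infty$, it suffices to show $m^{-1}\sum_{j=1}^m |\hat T_j^\infty - T_j^\infty| \xrightarrow{p} 0$. I would derive this from the MLE consistency $\hat\phi_m \xrightarrow{p} \phi_0$ in Theorem \ref{thm_consistency} combined with continuity of the infinite-horizon posterior functional $\phi \mapsto P_\phi(s_j\in\{0,1,2\}\mid(y_{1t},y_{2t})_{-\infty}^\infty)$. Continuity in $\phi$ is delivered by exponential forgetting of the hidden Markov model: the lower bound $\varepsilon_0$ on transition probabilities in (C2) produces a uniform Doeblin-type mixing rate in a neighborhood of $\phi_0$, so the infinite-horizon posterior is the uniform limit of finite-window posteriors, each of which depends continuously on $\phi$.

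The main obstacle is the quantitative forgetting estimate
\[
\bigl|P_\phi(s_j\in\{0,1,2\}\mid(y_{1t},y_{2t})_{j-N}^{j+N}) - P_\phi(s_j\in\{0,1,2\}\mid(y_{1t},y_{2t})_{-\infty}^{\infty})\bigr| \le C\rho^N,
\]
uniform over $\phi$ in a small neighborhood of $\phi_0$ with $\rho<1$ depending only on $\varepsilon_0$. This is the single place where the HMM structure must be used in a nontrivial way. Given the estimate, a truncation-and-continuity argument together with stationarity transfers $d(\hat\phi_m,\phi_0)\xrightarrow{p}0$ into the required $L^1$ control on $\hat T_j^\infty - T_j^\infty$, thereby completing the proof of both convergences.
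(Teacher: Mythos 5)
Your treatment of the first convergence, $\hat{\lambda}_{\mathrm{OR}}^{\infty} \rightarrow \lambda_{\mathrm{OR}}^{\infty}$, follows the paper's skeleton: Birkhoff's ergodic theorem applied to $m^{-1}\sum_j I(T_j^\infty\le t)$ and $m^{-1}\sum_j T_j^\infty I(T_j^\infty\le t)$ (with exactly the same tower-property identification of the second limit as $\sum_{k=0}^2\pi_k G_k^\infty(t)$), followed by an inversion step exploiting that $Q_{\mathrm{OR}}^\infty$ exceeds $q$ strictly to the right of $\lambda_{\mathrm{OR}}^\infty$. You upgrade pointwise to uniform convergence by a P\'olya/Glivenko--Cantelli argument and invert; the paper instead first shows $\hat{\lambda}_{\mathrm{OR}}^{\infty}\ge\lambda_{\mathrm{OR}}^{\infty}$ a.s., then sandwiches $\hat{\lambda}_{\mathrm{OR}}^{\infty}$ below a threshold built from a strictly increasing piecewise-linear minorant $\hat{L}_{\mathrm{OR}}^{\infty}$ of the step function $\hat{Q}_{\mathrm{OR}}^{\infty}$ (whose gap is at most $1/R^\infty(t)$) and argues by contradiction. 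These are interchangeable devices for the same step.

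The second convergence is where you and the paper part ways, and where the issues lie. The paper disposes of $\hat{\lambda}_{\mathrm{rLIS}}^{\infty}$ with the single sentence that it ``can be shown in the same way''; you correctly observe that this cannot be literally true, since $\hat{T}_j^\infty$ depends on the data-dependent $\hat{\phi}_m$, so $\{\hat{T}_j^\infty\}$ is not a stationary ergodic sequence and the tower-property step fails for $\phi\ne\phi_0$. Your perturbation plan is in the spirit of what the paper actually does later (the bounds $E_{\phi_0}|P_\phi(\cdot\mid y_1^m)-P_\phi(\cdot\mid y_{-\infty}^\infty)|\le C_0\beta_0^L$ in the proofs of Lemmas A.2--A.3), but it has three concrete soft spots. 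First, the uniform-in-$\phi$ forgetting estimate you need is obtained in this paper from the $\tau_0$/$\rho_0$ construction, which rests on (C2) \emph{and} (C5); the lemma as stated assumes only (C1)--(C4), so either your argument needs (C5) or you need a forgetting bound that avoids the density-ratio condition. Second, the reduction ``it suffices to show $m^{-1}\sum_j|\hat{T}_j^\infty-T_j^\infty|\to 0$'' is not quite enough: the empirical FDP involves $I(\hat{T}_j^\infty\le\lambda)$ versus $I(T_j^\infty\le\lambda)$, and an $L^1$ bound on the difference of the statistics does not control the difference of the indicators without an additional anti-concentration step (the fraction of $j$ with $T_j^\infty$ within $\epsilon$ of $\lambda$ must vanish as $\epsilon\to 0$, which requires invoking the continuity of $G^\infty$ at that point). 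Third, the continuity of $\phi\mapsto P_\phi(s_j\in\{0,1,2\}\mid(y_{1t},y_{2t})_{-\infty}^\infty)$ that you lean on is asserted rather than established: consistency of $\hat{\phi}_m$ holds in the metric $d$ built from the Hellinger distance on $(f_1,f_2)$, and Hellinger convergence of densities does not immediately yield convergence of $\hat{f}_1(y_{1j})$, $\hat{f}_2(y_{2j})$ at the observed sample points, which is what actually enters the posterior. None of these is fatal, but each is a genuine gap between your outline and a complete proof.
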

        
    We next show $\widehat{R} \rightarrow \infty$ almost surely.
        For simplicity, denote $(y_{1j}, y_{2j})_{j=j_1}^{j_2}$ as $y_{j_1}^{j_2}$ for any $j_1 < j_2$.
        By (C2), $\varepsilon_0 \leq a_{lk}(\phi) \leq 1$ for all $l, k$ and (C5), for any $k,k'=0,1,2,3$, $f^{(k')}(y_{1, j+1}, y_{2,j+1}; \phi)/f^{(k)}(y_{1, j+1}, y_{2,j+1}; \phi) \leq \rho_0(y_{1,j+1}, y_{2,j+1})$. Then for any states $k,k'$ and $l = 0,1,2,3$, 
        \begin{align*}
            &\frac{\mathbb P_{\phi}(S_{j+1}=k'\mid S_j = l, y_1^m)}{\mathbb P_{\phi}(S_{j+1}=k\mid S_j = l, y_1^m)}\\
            =& \frac{\mathbb P_{\phi}(S_{j+1}=k', S_j = l, y_1^m)}{\mathbb P_{\phi}(S_{j+1}=k, S_j = l, y_1^m)}\\
            =& \frac{\mathbb P_{\phi}(S_{j+1}=k', y_1^m\mid S_j = l)}{\mathbb P_{\phi}(S_{j+1}=k, y_1^m\mid S_j = l)}\\
            =& \frac{\sum_{k_0=0}^3 \mathbb P_{\phi}(S_{j+1}=k', S_{j+2} = k_0, y_1^m\mid S_j = l)}{\sum_{k_0=0}^3 \mathbb P_{\phi}(S_{j+1}=k, S_{j+2} = k_0, y_1^m\mid S_j = l)}\\
            =& \frac{\mathbb P_{\phi}(y_1^j\mid S_{j} = l;\phi)\sum_{k_0=0}^3 a_{lk'}(\phi)a_{k'k_0}(\phi)f^{(k')}(y_{1, j+1}, y_{2,j+1};\phi)\mathbb P_{\phi}(y_{j+2}^m\mid S_{j+2} = k_0;\phi)}{\mathbb P_{\phi}(y_1^j\mid S_{j} = l;\phi)\sum_{k_0=0}^3 a_{lk}(\phi)a_{kk_0}(\phi)f^{(k)}(y_{1, j+1}, y_{2,j+1};\phi)\mathbb P_{\phi}(y_{j+2}^m\mid S_{j+2} = k_0;\phi)}\\
            =& \frac{f^{(k')}(y_{1, j+1}, y_{2,j+1};\phi)\sum_{k_0=0}^3 a_{lk'}(\phi)a_{k'k_0}(\phi)\mathbb P_{\phi}(y_{j+2}^m\mid S_{j+2} = k_0;\phi)}{f^{(k)}(y_{1, j+1}, y_{2,j+1};\phi)\sum_{k_0=0}^3 a_{lk}(\phi)a_{kk_0}(\phi)\mathbb P_{\phi}(y_{j+2}^m\mid S_{j+2} = k_0;\phi)}\\
            \leq &  \frac{f^{(k')}(y_{1, j+1}, y_{2,j+1};\phi)\sum_{k_0=0}^3 \mathbb P_{\phi}(y_{j+2}^m\mid S_{j+2} = k_0;\phi)}{f^{(k)}(y_{1, j+1}, y_{2,j+1};\phi)\sum_{k_0=0}^3 \varepsilon_0^2 \mathbb P_{\phi}(y_{j+2}^m\mid S_{j+2} = k_0;\phi)}\\
            =& \varepsilon_0^{-2}\frac{f^{(k')}(y_{1, j+1}, y_{2,j+1};\phi)}{ f^{(k)}(y_{1, j+1}, y_{2,j+1};\phi)}\\
            \leq& \varepsilon_0^{-2} \rho_0(y_{1,j+1}, y_{2,j+1}).
        \end{align*}

        Let $\tau_0(y_1, y_2) = (1+3\varepsilon_0^{-2}\rho_0(y_1, y_2))^{-1}$.
        Since $\sum_{k'=0}^3\mathbb P_{\phi}(S_{j+1}=k'\mid S_j = l, y_1^m) = 1$, we conclude that for all $k,l = 0,1,2,3$,
        \begin{align}
            \mathbb P_{\phi}(S_{j+1}=k\mid S_j = l, y_1^m) =& \frac{\mathbb P_{\phi}(S_{j+1}=k\mid S_j = l, y_1^m)}{\sum_{k'=0}^3\mathbb P_{\phi}(S_{j+1}=k'\mid S_j = l, y_1^m)}\notag\\ 
            =&\frac{1}{1+\sum_{k'\neq k}\frac{\mathbb P_{\phi}(S_{j+1}=k'\mid S_j = l, y_1^m)}{\mathbb P_{\phi}(S_{j+1}=k\mid S_j = l, y_1^m)}}\notag\\
            \geq& \{1+ 3\varepsilon_0^{-2}\rho_0(y_{1,j+1}, y_{2, j+1})\}^{-1}.\notag
        \end{align}
        Define 
        \begin{align*}
            \tau_0(y_{1,j+1}, y_{2, j+1}) = \{1+ 3\varepsilon_0^{-2}\rho_0(y_{1,j+1}, y_{2, j+1})\}^{-1}.
        \end{align*}
        Then 
        \begin{align}
            \mathbb P_{\phi}(S_{j+1}=k\mid S_j = l, y_1^m) \geq \tau_0(y_{1,j+1}, y_{2, j+1}).
            \label{eq_condition_of_lemma_A_3}
        \end{align}
        Then we apply Lemma \ref{lemma_R_to_infty} below to show $R\rightarrow \infty$ and $\widehat{R}\rightarrow \infty$ almost surely as $m\rightarrow\infty$. 
        \begin{lemma}
            \label{lemma_R_to_infty}
            If (\ref{eq_condition_of_lemma_A_3}) and (C1)-(C3) hold, then $R/m \geq G^\infty(q/2)>0$ and $\widehat{R}/m \geq G^\infty(q/2)>0$ almost surely. 
        \end{lemma}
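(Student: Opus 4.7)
The plan is to lower bound $R$ and $\hat R$ by a fixed auxiliary threshold and then invoke an ergodic theorem to identify the limiting proportion. The key identity driving everything is that for any $\lambda\in(0,1)$,
$$\frac{\sum_{j=1}^m T_j\,I(T_j\le\lambda)}{R(\lambda)\vee 1}\le\lambda,$$
because every $T_j$ appearing in the numerator is bounded by $\lambda$. Taking $\lambda=q/2$ gives an empirical FDR estimate of at most $q/2<q$, so $q/2$ lies in the supremum set defining $\lambda_{\mathrm{OR}}$ in (\ref{eq_lambda_0}) with $\xi_j=T_j$. Hence $\lambda_{\mathrm{OR}}\ge q/2$ and $R\ge \sum_{j=1}^m I(T_j\le q/2)$; the identical argument applied with $\hat T_j$ gives $\hat R\ge \sum_{j=1}^m I(\hat T_j\le q/2)$.

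Next I would handle the limiting statistic $T_j^\infty$. By Lemma 1 of \cite{leroux1992maximum} together with (C1), $(Y_{1j},Y_{2j})_{-\infty}^\infty$ is stationary and ergodic, and since $T_j^\infty$ is a measurable functional of this process evaluated at time $j$, $\{T_j^\infty\}$ is itself stationary and ergodic. Birkhoff's theorem then yields
$$\frac{1}{m}\sum_{j=1}^m I(T_j^\infty\le q/2)\to G^\infty(q/2)\quad\text{almost surely.}$$
As noted in the excerpt, $T_j^\infty$ has a positive continuous density on $(0,\alpha_*)$, so $G^\infty$ is continuous and strictly increasing at $q/2$; in particular $G^\infty(q/2)>0$.

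The final step is to transfer this limit from $T_j^\infty$ to $T_j$ and $\hat T_j$. The Doeblin-type lower bound (\ref{eq_condition_of_lemma_A_3}) on the one-step filter transitions, combined with the bounded likelihood-ratio control of (C5), is the precise hypothesis under which the standard HMM filter-stability arguments produce geometric forgetting: there exist a constant $\rho\in(0,1)$ and an a.s.~finite random coefficient $C(y)$, depending only on $\varepsilon_0$ and $\rho_0$, such that
$$|T_j-T_j^\infty|\le C(y)\,\rho^{\min(j-1,\,m-j)}\quad\text{almost surely.}$$
Averaging over $j$ gives $m^{-1}\sum_j|T_j-T_j^\infty|\to 0$ a.s., and continuity of $G^\infty$ at $q/2$ upgrades this to $m^{-1}\sum_j I(T_j\le q/2)\to G^\infty(q/2)$ a.s. For $\hat T_j$ one additionally replaces $\phi_0$ by the consistent $\hat\phi_m$ from Theorem \ref{thm_consistency} in the forward--backward recursion; continuity of this recursion in its parameters, combined with the compactness of $\Phi$ (Proposition \ref{prop_compact}), propagates the convergence uniformly enough in $j$. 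Combining all three steps then gives $\liminf R/m\ge G^\infty(q/2)>0$ and $\liminf \hat R/m\ge G^\infty(q/2)>0$, which is the claimed almost-sure lower bound.

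The main obstacle is the third step: proving the uniform-in-$j$ geometric-forgetting bound. Making it precise requires careful bookkeeping of ratios of forward and backward variables under (\ref{eq_condition_of_lemma_A_3}) and (C5), and, for the $\hat T_j$ version, coupling this filter-stability estimate with the global consistency $\hat\phi_m\to\phi_0$ so that the perturbation of the parameters inside the recursion does not destroy the exponential decay. The threshold argument in Step 1 and the Birkhoff application in Step 2 are then essentially routine.
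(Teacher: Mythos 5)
Your overall architecture is the same as the paper's: lower-bound $R$ by the count of $T_j$ falling below a fixed threshold (your observation that $\sum_j T_jI(T_j\le\lambda)\le\lambda R(\lambda)$, hence $\lambda_{\rm OR}\ge q/2$, is fine, and parallels the paper's use of $\hat\lambda_{\rm OR}\ge q$), apply Birkhoff's ergodic theorem to $\{T_j^\infty\}$, and use forgetting of the HMM filter to transfer the count from $T_j^\infty$ to $T_j$. The genuine gap is that you never prove $G^\infty(q/2)>0$, which is part of the lemma's conclusion and is precisely where the lower bound $\lim_{y\to 0}f_1(y)>c$, $\lim_{y\to 0}f_2(y)>c$ in (C2) — and the specific form of the constant $c$ — is consumed. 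You justify positivity by citing the surrounding remark that $T_j^\infty$ has a ``positive continuous density,'' but that remark is itself only heuristic; the second half of the paper's proof is a direct argument: using $\varepsilon_0\le a_{kl}\le 1$ to sandwich the relevant likelihood ratios between $\varepsilon_0^{2}$ and $\varepsilon_0^{-2}$, reducing $\{T_0^\infty\le q/2\}$ to the inequality $\varepsilon_0^2\tfrac{q/2}{1-q/2}f_1(y_{10})f_2(y_{20})-\tfrac{\pi_1}{\pi_3}f_1(y_{10})-\tfrac{\pi_2}{\pi_3}f_2(y_{20})-\tfrac{\pi_0}{\pi_3}\ge 0$, and then showing via the roots of the quadratic $g_2(x)=\varepsilon_0^2\tfrac{q/2}{1-q/2}x^2-\tfrac{\pi_1+\pi_2}{\pi_3}x-\tfrac{\pi_0}{\pi_3}$ that this holds on a corner $(0,u_1)\times(0,u_2)$ of positive probability. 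Without some version of this step, the claim that a nonvanishing fraction of hypotheses is rejected has no support.

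A second, smaller problem is the form of your forgetting bound. Under (\ref{eq_condition_of_lemma_A_3}) the one-step minorization constant $\tau_0(y_{1i},y_{2i})$ is random and not bounded away from zero, so a deterministic geometric rate $\rho^{\min(j-1,m-j)}$ is not available; the correct bound is the random product $\prod_i\{1-2\tau_0(y_{1i},y_{2i})\}$, whose expectation is controlled by $\beta_0^{L}$ with $\beta_0=\max_k E_{\phi_0}[\exp\{-2\tau_0(y_{11},y_{21})\}\mid S_1=k]<1$ — this is where (C5) enters, to guarantee $\tau_0>0$ a.s.\ conditionally on each state. Consequently your claim that $m^{-1}\sum_j|T_j-T_j^\infty|\to 0$ almost surely is not justified as stated; the paper instead works with $m^{-1}\sum_jI(|T_j-T_j^\infty|>q/2)$, controls it by Markov's inequality plus the ergodic theorem applied to the truncated-window quantities $d_j$, and combines this with the elementary inequality $I(T_j\le q)+I(|T_j-T_j^\infty|>q/2)\ge I(T_j^\infty\le q/2)$. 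Your treatment of $\hat T_j$ via consistency of $\hat\phi_m$ is at the same level of detail as the paper's (which omits it), so I do not count that against you.
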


        \textbf{Step 2.} Convergence of $R/\widehat R$ and $V/\widehat V$ in expectation.

        In this step, we show $\mathbb E|R/\widehat R - 1| \rightarrow 0$ and $\mathbb E|V/\widehat V - 1| \rightarrow 0$ by Lemma \ref{lemma_consistency_when_lambda_less_than_alpha}.
    \begin{lemma}\label{lemma_consistency_when_lambda_less_than_alpha}
        If $0< \lambda_{\text{OR}}^\infty < \alpha_*$, then $\mathbb E|R/\widehat{R}  - 1| \rightarrow 0$ and $E\mathbb |V/\widehat{V} - 1| \rightarrow 0$ as $m\to\infty$.
    \end{lemma}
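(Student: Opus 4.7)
The plan is to show that each of $R/m$, $\hat R/m$, $V/m$, and $\hat V/m$ converges in probability to an explicit positive constant, and then to upgrade this ratio-wise convergence to $L^1$ using the almost-sure lower bound in Lemma \ref{lemma_R_to_infty}. To produce the limits, I first introduce the infinite-window counterparts $R^\infty = \sum_{j=1}^m I(T_j^\infty \leq \lambda_{\rm OR}^\infty)$, $\hat R^\infty = \sum_{j=1}^m I(\hat T_j^\infty \leq \hat\lambda_{\rm rLIS}^\infty)$ and the corresponding $V^\infty$, $\hat V^\infty$ obtained by adding the indicator $I(s_j \in \{0,1,2\})$. Because $\{(T_j^\infty, s_j)\}$ is a measurable function of the stationary ergodic two-sided process $\{(y_{1j}, y_{2j}, s_j)\}_{-\infty}^\infty$, Birkhoff's theorem yields $R^\infty/m \to G^\infty(\lambda_{\rm OR}^\infty)$ and $V^\infty/m \to \sum_{k=0}^2 \pi_k G_k^\infty(\lambda_{\rm OR}^\infty)$ almost surely, both strictly positive under the standing assumption $0 < \lambda_{\rm OR}^\infty < \alpha_*$.

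The second step is to compare the finite-window and infinite-window statistics. The HMM forgetting property, extracted from (C2) and (C5) in the spirit of the bound (\ref{eq_condition_of_lemma_A_3}), gives $|T_j - T_j^\infty| \to 0$ at a geometric rate for every $j$ away from the endpoints; consistency of $\hat\phi_m$ from Theorem \ref{thm_consistency} combined with Lemma \ref{lemma_consistency_of_thresholds} ($\hat\lambda_{\rm rLIS}^\infty \to \lambda_{\rm OR}^\infty$ in probability) extends the same comparison to the hat versions. Since the density of $T_j^\infty$ is continuous and positive on $(0, \alpha_*)$, $\lambda_{\rm OR}^\infty$ is a continuity point of $G^\infty$, so for any $\varepsilon > 0$ the number of indices with $|T_j^\infty - \lambda_{\rm OR}^\infty| < \varepsilon$ is at most $m[G^\infty(\lambda_{\rm OR}^\infty + \varepsilon) - G^\infty(\lambda_{\rm OR}^\infty - \varepsilon)] + o(m)$, which can be made arbitrarily small by choosing $\varepsilon$ small. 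Thus the set of indices at which the finite- and infinite-window decisions disagree has size $o_p(m)$, which gives $R/m, \hat R/m \to G^\infty(\lambda_{\rm OR}^\infty)$ and $V/m, \hat V/m \to \sum_{k=0}^2 \pi_k G_k^\infty(\lambda_{\rm OR}^\infty)$ in probability.

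Finally, I upgrade these in-probability limits to $L^1$. Since $R \leq m$ and Lemma \ref{lemma_R_to_infty} supplies $\hat R/m \geq G^\infty(q/2)$ almost surely, the ratio $R/\hat R$ is uniformly bounded by $1/G^\infty(q/2)$, so bounded convergence yields $E|R/\hat R - 1| \to 0$. For $V/\hat V$ one needs a matching almost-sure lower bound on $\hat V/m$, which I expect to obtain by repeating the argument of Lemma \ref{lemma_R_to_infty} restricted to indices with $s_j \in \{0,1,2\}$ and invoking positivity of $\pi_0 + \pi_1 + \pi_2$; alternatively one can split $|V/\hat V - 1|$ over the events $\{\hat V/m \geq c/2\}$ and its complement with $c = \sum_{k=0}^2 \pi_k G_k^\infty(\lambda_{\rm OR}^\infty)$, using boundedness on the former and the crude bound $V/\hat V \leq m$ times a vanishing probability on the latter. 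The main obstacle is making the forgetting bound uniform in a neighborhood of $\phi_0$ so that the same exponential-mixing step controls both $T_j - T_j^\infty$ and $\hat T_j - \hat T_j^\infty$; this uniformity is delivered by (C5), which forces $\rho_0$ to be finite with positive probability under every hidden state and thereby produces geometric ergodicity of the filter throughout a neighborhood of $\phi_0$.
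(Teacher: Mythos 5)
Your route is genuinely different from the paper's and is, in its main line, sound. The paper works with the running means $\tfrac{1}{R}\sum_{j\le R}T_{(j)}$ and $\tfrac{1}{\hat R}\sum_{j\le \hat R}\hat T_{(j)}$: it first shows both are within $o(1)$ of $q$ in $L^1$ (a consequence of the step-up stopping rule), and then argues by contradiction that any non-negligible discrepancy between $R$ and $\hat R$ would force these two averages apart by a fixed amount --- using $T_{(R+1)}\ge \lambda_{\rm OR}^{\infty}>q$ in the case $\hat R>R$ and the quantity $\nu_0>0$ built from the strict monotonicity of $G^\infty$ in the case $R>\hat R$. You instead compute the limits of $R/m$, $\hat R/m$, $V/m$, $\hat V/m$ directly and finish by bounded convergence. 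Interestingly, the paper's own proof already derives your key limits as intermediate facts (its displays showing $m^{-1}\sum_j I(T_j\le\hat\lambda_{\rm OR})\to G^\infty(\lambda_{\rm OR}^\infty)$ and $m^{-1}\sum_j I(\hat T_j\le\hat\lambda_{\rm rLIS})\to G^\infty(\lambda_{\rm OR}^\infty)$), so your argument is essentially a shorter reassembly of the same ingredients: threshold consistency from Lemma \ref{lemma_consistency_of_thresholds} plus the forgetting bound, Birkhoff's theorem, continuity and strict monotonicity of $G^\infty$ on $(0,\alpha_*)$, and the lower bound of Lemma \ref{lemma_R_to_infty}. What the paper's contradiction argument buys is that it never needs the limits of $V/m$ and $\hat V/m$ explicitly; what yours buys is transparency and brevity for the $R$ part.

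Two points need repair. First, your ``alternative'' treatment of $E|V/\hat V-1|$ --- bounding $V/\hat V$ by $m$ on the event $\{\hat V/m<c/2\}$ and multiplying by a vanishing probability --- does not close: you would need $P(\hat V<mc/2)=o(1/m)$, and convergence in probability supplies no rate. You must take your primary route, i.e.\ prove an almost-sure (eventual) lower bound $\hat V/m\ge c'>0$ by repeating the argument of Lemma \ref{lemma_R_to_infty} on the null indices, which works because the limit $\sum_{k=0}^2\pi_k G_k^\infty(\lambda_{\rm OR}^\infty)=Q_{\rm OR}^\infty(\lambda_{\rm OR}^\infty)\,G^\infty(\lambda_{\rm OR}^\infty)=q\,G^\infty(\lambda_{\rm OR}^\infty)>0$; you should state this positivity explicitly since your whole scheme collapses if the common limit of $V/m$ and $\hat V/m$ were zero. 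Second, your appeal to Birkhoff for $\hat R^\infty/m$ is not a direct application: $\hat T_j^\infty$ is computed under the data-dependent $\hat\phi_m$, so you need (as the paper does) to pass through the consistency $\hat\phi_m\to\phi_0$ and the continuity of the smoothing probabilities in $\phi$ before the ergodic theorem applies; this is where the real work hides, and your proposal should acknowledge that this step is inherited from, not replaced by, the paper's machinery.
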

    When $\lambda_{\rm OR} \geq \alpha_*$, $R/m$ tends to $1$ as $m\to\infty$, which means all the null hypotheses will be rejected. This is not a feasible case.
    
    \textbf{Step 3.} Asymptotic FDR control.
    
    We have
    \begin{align*}
        \frac{\widehat V}{\widehat R\vee 1} - \frac{V}{R\vee 1} \leq & \frac{\widehat V }{\widehat R \vee 1}\left(1 - \frac{V}{\widehat V \vee 1}\right) + \frac{V}{R \vee 1}\left(\frac{R\vee 1}{\widehat R \vee 1} - 1\right).
    \end{align*}
    Since $\widehat V/(\widehat R\vee 1) \leq 1$, we have
    \begin{align*}
        0 \leq \bigg|\mathbb E\left\{\frac{\widehat V}{\widehat R\vee 1}\left(1 - \frac{V}{\widehat V \vee 1}\right)\right\}\bigg|\leq \mathbb E\bigg|\frac{\widehat V}{\widehat R\vee 1}\left(1 - \frac{V}{\widehat V \vee 1}\right)\bigg| \leq \mathbb E\bigg|1 - \frac{V}{\widehat V \vee 1}\bigg|.
    \end{align*}
    By Lemma \ref{lemma_consistency_when_lambda_less_than_alpha}, we have 
    \begin{align*}
        \mathbb E\left\{\frac{\widehat V}{\widehat R\vee 1}\left(1 - \frac{V}{\widehat V \vee 1}\right)\right\} \rightarrow 0 \text{ as } m\to\infty.
    \end{align*}
    Similarly, we also have
    \begin{align*}
        \mathbb E\left\{\frac{V}{ R\vee 1}\left(\frac{R}{\widehat R \vee 1} - 1\right)\right\} \rightarrow 0 \text{ as } m\to\infty.
    \end{align*}
    Therefore,
    \begin{align*}
        {\rm FDR} - {\rm FDR}_{\rm OR} = \mathbb E\left(\frac{\widehat V}{\widehat R\vee 1}\right) - \mathbb E\left(\frac{V}{R\vee 1}\right) \to 0 \text{ as } m\rightarrow \infty.
    \end{align*}
    Since ${\rm FDR}_{\rm OR} \leq q$ by Theorem \ref{thm_oracle_FDR_control}, we know ${\rm FDR}$ is asymptotically controlled.
    
\subsection{Proof of Theorem \ref{thm_eBH_procedure}}
\label{subsec_eBH_proof}
    \begin{proof}
        We prove this theorem via 3 steps. We first give the convergence of the pairwise FDR level $\widehat q_m$. Then we show the convergence of the pairwise rLIS quantities. Finally, we establish the asymptotic performance of the e-values and the eBH procedure.

        Without loss of generality, consider the case that $\widehat R(\widehat q_m) > 0$. Since the discovery set for the eBH procedure is the subset of the intersection of pairwise discovery sets with pairwise FDR level $\widehat q_m$, in this case, we have $\widehat R^{k\ell}(\widehat q_m) > 0$ for any $1 \leq k<\ell \leq n$.
        
        \paragraph{Step 1. Convergence of $\widehat q_m$.} First,
        we show that there exists a constant $q_*\in[q_-, q]$ such that $\widehat q_m \to q_*$ almost surely as $m\to\infty$.
        
        For simplicity, denote $\widehat T_j^{k\ell} = \widehat{\rm rLIS}_j^{k\ell}$ for studies $k$ and $\ell$. Without loss of generality, we use $\widehat{\lambda}_{\rm rLIS}^{k\ell}(\widehat q_m), \widehat\alpha_j^{k\ell} $ and $ \widehat\beta_j^{k\ell}$ to denote the rejection threshold from procedure (\ref{eq_rej_procedure}) with pairwise FDR level $\widehat q_m$, the forward probability and the backward probability. Denote $\widehat R^{k\ell}(\widehat q_m)$ as the number of rejections for pair $k<\ell$ with pairwise FDR $\widehat q_m$ and $\widehat R(\widehat q_m)$ as the number of rejections of the following eBH procedure.
        Note that 
    \begin{align}
        \widehat e_{j}^{k\ell}(\widehat q_m)=& 
        \frac{I(\widehat T_j^{k\ell}\leq \widehat \lambda_{\rm rLIS}^{k\ell}(\widehat q_m))}{m^{-1}\sum_{j'=1}^m I(\widehat T_{j'}^{k\ell}\leq \widehat \lambda_{\rm rLIS}^{k\ell}(\widehat q_m))\widehat T_{j'}^{k\ell}}. 
        \label{eq_eval_supp}
    \end{align}

    By the step-up procedure, the non-zero e-values satisfy
    \begin{align}
         \label{eq_qm_solution}
        \frac{m/\widehat R^{k\ell}(\widehat q_m)}{(1/\widehat R^{k\ell}(\widehat q_m))\sum_{j=1}^{\widehat R^{k\ell}(\widehat q_m)}\widehat T_{(j)}^{k\ell} } \geq \frac{m/\widehat R^{k\ell}(\widehat q_m)}{\widehat q_m}.
    \end{align}
    Note that $\widehat q_m$ is the largest value satisfying
    \begin{align}
        \label{eq_choice_of_qm}
        \min_{k<\ell}\left\{\frac{(\widehat\pi_0^{k\ell} + \widehat\pi_1^{k\ell} + \widehat\pi_2^{k\ell})\widehat R(\widehat q_m)}{\widehat R^{k\ell}(\widehat q_m)}\right\}
        \geq\frac{\widehat q_m}{q},
    \end{align}
    which means
    \begin{align*}
        \frac{m/\widehat R(\widehat q_m)}{q} 
        \leq& \min_{k<\ell}\left\{\frac{(\widehat\pi_0^{k\ell} + \widehat\pi_1^{k\ell} + \widehat\pi_2^{k\ell})m/\widehat R^{k\ell}(\widehat q_m)}{\widehat q_m}\right\}
        \\
        \leq& \min_{k<\ell}\left\{\frac{(\widehat\pi_0^{k\ell} + \widehat\pi_1^{k\ell} + \widehat\pi_2^{k\ell})m/\widehat R^{k\ell}(\widehat q_m)}{(1/\widehat R^{k\ell}(\widehat q_m))\sum_{j=1}^{\widehat R^{k\ell}(\widehat q_m)}\widehat T_{(j)}^{k\ell} }\right\}.
    \end{align*}
    Thus $m/(\widehat R(\widehat q_m)\cdot q) \leq \min_{k<\ell}\{(\widehat\pi_0^{k\ell} + \widehat\pi_1^{k\ell} + \widehat\pi_2^{k\ell})e_{(1)}^{k\ell}(\widehat q_m)\} = \widehat e_{(1)}(\widehat q_m)$. By the eBH procedure, we have $\widehat R(\widehat q_m) > 0$. If no solution exists, then for any $\widehat q_m\in[q_-, q]$, the eBH procedure returns an empty discovery set. In this case, let $\widehat q_m = q$.

    By replacing $q$ by $q_-$ in (C2), since $\widehat q_m \geq q_-$, Lemma \ref{lemma_consistency_of_thresholds} and Lemma \ref{lemma_R_to_infty} show that $\widehat R^{k\ell}(\widehat q_m)/m$ is lower bounded by some constant. Thus following the derivation of (\ref{eq_average_of_T_hat_as}), we have $|1/\widehat R^{k\ell}(\widehat q_m)\cdot\sum_{j=1}^{\widehat R^{k\ell}(\widehat q_m)} \widehat T_{(j)}^{k\ell} - \widehat q_m| \to 0$ almost surely as $m\to\infty$. Moreover, by Theorem \ref{thm_consistency}, $\widehat\pi_0^{k\ell},\widehat\pi_1^{k\ell} $and $\widehat\pi_2^{k\ell}$ are consistent estimators of $ \pi_0^{k\ell}, \pi_1^{k\ell}$ and $\pi_2^{k\ell}$, respectively, and their summation is bounded by $1$.  Thus, the difference between the LHS and RHS of (\ref{eq_qm_solution}) satisfies
    \begin{align*}
        \left|\frac{(\widehat\pi_0^{k\ell} + \widehat\pi_1^{k\ell} + \widehat\pi_2^{k\ell})m/\widehat R^{k\ell}(\widehat q_m)}{(1/\widehat R^{k\ell}(\widehat q_m))\sum_{j=1}^{\widehat R^{k\ell}(\widehat q_m)}\widehat T_{(j)}^{k\ell} } - \frac{(\widehat\pi_0^{k\ell} + \widehat\pi_1^{k\ell} + \widehat\pi_2^{k\ell})m/\widehat R^{k\ell}(\widehat q_m)}{\widehat q_m}\right| \to 0 \text{ as } m\to\infty.
    \end{align*}
    Then $\widehat R^{k\ell}(\widehat q_m)$ is maximized when $\widehat R(\widehat q_m)$ is positive. Note that $\widehat R^{k\ell}(\widehat q_m)$ and $\widehat R(\widehat q_m)$ increase as $\widehat q_m$ increases. Therefore, $\widehat R(\widehat q_m)$ is maximized by choosing $\widehat q_m$ as the largest value satisfying (\ref{eq_choice_of_qm}).
    
    By Lemma \ref{lemma_consistency_when_lambda_less_than_alpha}, for any $t\in(q_-, q)$, $\mathbb E|\widehat R^{k\ell}(t)/R^{k\ell}(t) - 1|\to 0$ and $\mathbb E|\widehat R(t)/R(t) - 1|\to0$ as $m\to\infty$. Thus for any $1\leq k < \ell \leq n$, we have $\widehat R^{k\ell}(t) = R^{k\ell}(t) (1 + \varepsilon^{k\ell})$, where $\mathbb E(\varepsilon^{k\ell}) \to 0$ as $m\to\infty$. Then for any pair $1 \leq k < \ell \leq n$, we have $\widehat R^{k\ell}(t) \leq \max_{k<\ell}\{R^{k\ell}(t)\}(1 + \varepsilon^{k\ell})$, which means that $\max_{k<\ell}\widehat R^{k\ell}(t) \leq \max_{k<\ell}\{R^{k\ell}(t)\}(1 + \max_{k<\ell}\varepsilon^{k\ell})$. Since $n$ is fixed, we have $\mathbb E|\max_{k<\ell}\varepsilon^{k\ell}| \to 0$ as $m\to\infty$. Additionally, we also have $\max_{k<\ell}\widehat R^{k\ell}(t) \geq \max_{k<\ell}\{R^{k\ell}(t)\}(1 + \min_{k<\ell}\varepsilon^{k\ell})$ with $\mathbb E|\min_{k<\ell}\varepsilon^{k\ell}| \to 0$ as $m\to\infty$.
    Therefore, we have 
    \begin{align*}
        \mathbb E\left|\frac{\max_{k<\ell}\widehat R^{k\ell}(t)}{\max_{k<\ell} R^{k\ell}(t)} - 1\right| \to 0 \text{ as } m\to\infty.
    \end{align*}
    Note that
    \begin{align*}
        &\mathbb E\left|\frac{\widehat{R}(t)}{\max_{k<\ell}\widehat R^{k\ell}(t)} - \frac{R(t)}{\max_{k<\ell}R^{k\ell}(t)}\right| 
        \\
        \leq & \mathbb E\left|\frac{\widehat{R}(t)}{\max_{k<\ell}\widehat R^{k\ell}(t)} - \frac{\widehat R(t)}{\max_{k<\ell} R^{k\ell}(t)}\right| +\mathbb E \left|\frac{\widehat R(t)}{\max_{k<\ell} R^{k\ell}(t)} - \frac{R(t)}{\max_{k<\ell}R^{k\ell}(t)}\right|
        \\
        =& \mathbb E\left\{\frac{\widehat R(t)}{\max_{k<\ell}\widehat R^{k\ell}(t)}\cdot\left|1 - \frac{\max_{k<\ell}\widehat R^{k\ell}(t)}{\max_{k<\ell} R^{k\ell}(t)}\right|\right\} + \mathbb E\left\{\frac{ R(t)}{\max_{k<\ell}R^{k\ell}(t)}\cdot\left|\frac{\widehat R(t)}{R(t)} - 1\right|\right\}
        \\
        \to& 0 \quad \text{ as } m\to \infty.
    \end{align*}
    Then uniformly over $t\in[q_-, q]$, $\widehat R(t)/\max_{k<\ell}\widehat R^{k\ell}(t)$ converges to the almost-everywhere continuous function $R(t)/\max_{k<\ell}R^{k\ell}(t)$ almost surely as $m\to\infty$. Furthermore, we also have $\min_{k<\ell}\{(\widehat\pi_0^{k\ell} + \widehat\pi_1^{k\ell} + \widehat\pi_2^{k\ell})\widehat R(t)/\widehat R^{k\ell}(t)\}$ converges to almost-everywhere continuous function $\min_{k<\ell}\{(\pi_0^{k\ell} + \pi_1^{k\ell} + \pi_2^{k\ell})R(t)/R^{k\ell}(t)\}$ uniformly over $t\in[q_-, q]$ almost surely as $m\to\infty$.
    Let $\widehat f_m(t) =\min_{k<\ell}\{(\widehat\pi_0^{k\ell} + \widehat\pi_1^{k\ell} + \widehat\pi_2^{k\ell})\widehat R(t)/\widehat R^{k\ell}(t)\} - t / q$ and $f_*(t) = \mathbb E[\min_{k<\ell}\{(\pi_0^{k\ell} + \pi_1^{k\ell} + \pi_2^{k\ell})R(t)/R^{k\ell}(t)\}] - t / q$. Then $|\widehat f_m(t) - f_*(t)| \to 0$ almost surely uniformly over $t\in[q_-, q]$. Denote $\widehat q_m$ as the largest value in $[q_-, q]$ satisfying $\widehat f_m(t) \geq 0$ and $q_*$ as the largest value in $[q_-, q]$ satisfying $f_*(t)\geq 0$. We first consider the case that $q_*\in [q_-, q]$ exists. The uniform convergence implies $\mathbb E|\widehat f_m(q_*) - f_*(q_*)|\to 0$, which means $\widehat f_m(\widehat q_m) \geq 0$ and thus $\widehat q_m \geq q_*$ almost surely. If $q_* = q$, then $\widehat q_m\to q$ as $m\to\infty.$ If $q_*<q$, then $f_*(q_*) \geq 0$ and for any $\epsilon \in (0, q-q_*]$, $f_*(q_*+\epsilon) < 0$. The uniform convergence implies $\mathbb E|\widehat f_m(q_*+\epsilon) - f_*(q_*+\epsilon)|\to 0$, which means $\widehat f_m(q_*+\epsilon) < 0$ for any $\epsilon\in(0, q-q_*]$ and thus $\widehat q_m \leq q_*$ almost surely. Thus, we know that $\widehat q_m \to q_*$ almost surely. 

    Next, we consider the case that $f_*(t)<0$ for any $t\in[q_-, q]$, then $q_*$ does not exist. In this case $\mathbb E|\widehat f_m(t) - f_*(t)| \to 0$ as $m\to\infty$, which means that when $m$ is large enough, (\ref{eq_choice_of_qm}) is not satisfied for any choice of $\widehat q_m$ and there is no rejection as discussed above.
    
     \paragraph{Step 2.  Consistency of pairwise rLIS quantities.} By Lemma \ref{lemma_consistency_of_thresholds}, $\widehat\lambda_{\rm rLIS}^{k\ell}(q_*) \to \lambda_{\rm OR}^{\infty,k\ell}(q_*)$ in probability as $m\to \infty$. By the ergodic stationary distribution of $T_j^{\infty, k\ell}$, we have $\widehat\alpha_j^{k\ell}(s_j)\widehat\beta_j^{k\ell}(s_j) \to \alpha_j^{k\ell}(s_j)\beta_j^{k\ell}(s_j)$ almost surely for $s_j = 0,1,2,3$ as $m\to\infty$. By the continuous mapping theorem, we have $\widehat T_j^{k\ell}\to T_j^{\infty,k\ell}$ almost surely. Noting that $T_j^{\infty,k\ell}$ has positive and continuous density function on its support $(0, \alpha_*^{k\ell})$, we have $\mathbb P(T_j^{\infty,k\ell} = \lambda_{\rm OR}^{\infty,k\ell}(q_*)) = 0.$ Therefore, we have $I(\widehat T_j^{k\ell}\leq \widehat \lambda_{\rm rLIS}^{k\ell}(q_*)) - I(T_j^{\infty, k\ell}\leq \lambda_{\rm OR}^{\infty,k\ell}(q_*)) \to 0$ in probability and $I(\widehat T_j^{k\ell}\leq \widehat \lambda_{\rm rLIS}^{k\ell}(q_*))\widehat T_j^{k\ell} - I(T_j^{\infty, k\ell}\leq \lambda_{\rm OR}^{\infty,k\ell}(q_*))T_j^{\infty, k\ell}\to 0$ in probability. 
    Considering the composite global null space $\mathcal H_0 = \{j: \prod_{i=1}^n \theta_{ij} = 0\}$ and the pairwise null space $\mathcal H_0^{k\ell} = \{j: \theta_{kj}\theta_{\ell j} = 0\}$, we have
    \begin{align}
        &\mathbb E\left\{I(\widehat T_j^{k\ell}\leq \widehat \lambda_{\rm rLIS}^{k\ell}(q_*))\mid j\in\mathcal H_0^{k\ell}\right\} \notag
        \\
        \to&\mathbb E\left\{I(T_j^{\infty,k\ell}\leq \lambda_{\rm OR}^{\infty,k\ell}(q_*))\mid j\in\mathcal H_0^{k\ell}\right\} \notag
        \\
        =& \mathbb P(j\in\mathcal H_0^{k\ell})^{-1}\int I(T_{j}^{\infty, k\ell}\leq \lambda_{\rm OR}^{\infty, k\ell}(q_*))\cdot \left\{\sum_{s_j=0}^2\alpha_j^{k\ell}(s_j)\beta_j^{k\ell}(s_j)\right\} \prod_{j'=1}^m({\rm d}y_{kj'}{\rm d}y_{\ell j'})%\quad\text{ almost surely.} 
        \label{eq_evalue_numerator_limit}
    \end{align}
    almost surely. Additionally, Birkhoff's ergodic theorem \citep{birkhoff1931proof} shows that the denominator in (\ref{eq_eval_supp}) satisfies
    \begin{align}
        &m^{-1}\sum_{j'=1}^m I(\widehat T_{j'}^{k\ell}\leq \widehat \lambda_{\rm rLIS}^{k\ell}(q_*))\widehat T_{j'}^{k\ell}\notag
        \\
        \to& \mathbb E\{I(T_j^{\infty,k\ell} \leq \lambda_{OR}^{\infty,k\ell}(q_*))T_j^{\infty,k\ell}\}\notag
        \\
        =&\int I(T_{j}^{\infty,k\ell}\leq\lambda_{\rm OR}^{\infty,k\ell}(q_*))\cdot\frac{\sum_{s_j=0}^2\alpha_j^{k\ell}(s_j)\beta_j^{k\ell}(s_j)}{\sum_{s_j=0}^3\alpha_j^{k\ell}(s_j)\beta_j^{k\ell}(s_j)}\left\{\sum_{s_j=0}^3\alpha_j^{k\ell}(s_j)\beta_j^{k\ell}(s_j)\right\} \prod_{j'=1}^m({\rm d}y_{kj'}{\rm d}y_{\ell j'})\notag
        \\
        =& \int I(T_{j}^{\infty, k\ell}\leq \lambda_{\rm OR}^{\infty,k\ell}(q_*))\cdot \left\{\sum_{s_j=0}^2\alpha_j^{k\ell}(s_j)\beta_j^{k\ell}(s_j)\right\} \prod_{j'=1}^m({\rm d}y_{kj'}{\rm d}y_{\ell j'})
        \quad\text{ almost surely.}\label{eq_evalue_denominator_limit}
    \end{align}
    Furthermore, by (\ref{eq_average_of_T_hat_as}), $1/\widehat R^{k\ell(q_*)}\sum_{j=1}^{\widehat R^{k\ell}(q_*)}\widehat T_{(j)}^{k\ell}\to q_*$ almost surely and by Lemma \ref{lemma_R_to_infty}, $\widehat R^{k\ell}(q_*)/m\geq G^{\infty}(q_*/2) > 0$ almost surely. Noting that $\sum_{j=1}^{\widehat R^{k\ell}(q_*)}\widehat T_{(j)}^{k\ell} = \sum_{j=1}^m I(\widehat T_j^{k\ell}\leq \widehat\lambda_{\rm rLIS}^{k\ell}(q_*))\widehat T_j^{k\ell}$, we have
    \begin{align*}
        &m^{-1}\sum_{j=1}^m I(\widehat T_j^{k\ell}\leq \widehat\lambda_{\rm rLIS}^{k\ell}(q_*))\widehat T_j^{k\ell}\\
        =& \frac{\widehat R^{k\ell}(q_*)}{m}\cdot \frac{1}{\widehat R^{k\ell}(q_*)}\sum_{j=1}^{\widehat R^{k\ell}(q_*)}\widehat T_{(j)}^{k\ell}\\
        \geq& G^\infty(q_*/2)\cdot q_* \text{ almost surely.} 
    \end{align*}
    This implies that $\{\widehat e_j^{k\ell}(q_*)\}_{j=1}^m$ is upper bounded by $1/ \{G^\infty(q_*/2)q_*\}$ almost surely, and then it is uniformly integrable. Therefore, combining (\ref{eq_evalue_numerator_limit}) and (\ref{eq_evalue_denominator_limit}), we have 
    \begin{align}
        \mathbb E[\widehat e_j^{k\ell}(q_*)\mid j\in\mathcal H_0^{k\ell}] \to 1/\mathbb P(j\in\mathcal H_0^{k\ell}) \text{ as }m\to\infty.
        \label{eq_conditional_E_evalue}
    \end{align} 
    
    \paragraph{Step 3. The constructed e-values satisfy the conditions of the eBH procedure.} Next,
    we show that $m^{-1}\sum_{j\in\mathcal H_0}\mathbb E[\widehat e_j] \leq 1$ almost surely.
    For each $ j \in \mathcal H_0 $, denote 
\begin{align*}
\mathcal R(j) = \{ (k, \ell) : k < \ell, j\in\mathcal H_0^{k\ell}\}.
\end{align*}
Choose a unique pair $(k_j, \ell_j)$ as follows: Randomly choose a pair $(k, \ell)\in\mathcal R(j)$ with probability $1/|\mathcal R(j)|$.
% Set $(k_j, \ell_j) = (k, \ell)$ with a success probability
% \begin{align*}
% \mathbb P((k_j, \ell_j) = (k, \ell)) = \mathbb P(j\in\mathcal H_0^{k\ell}\mid j\in\mathcal H_0).
% \end{align*}
% If it fails, repeat the process of choosing $(k,\ell)$ and setting $(k, \ell)$ until success.

For any $ k < \ell $, define 
\begin{align*}
\mathcal A_{k\ell} = \{ j \in \mathcal H_0 : (k_j, \ell_j) = (k, \ell) \}.
\end{align*}
Thus, $ \{ \mathcal A_{k\ell} \}_{k < \ell} $ is a partition of $\mathcal H_0$.
Furthermore, for any $(\omega_1,\omega_2)\in\{(0,0), (0,1),(1,0)\}$, we have
\begin{align}
&\mathbb P((\theta_{kj}, \theta_{\ell j}) = (\omega_1, \omega_2))\mid j\in \mathcal A_{k\ell})\notag
\\
=&\frac{\mathbb P((\theta_{kj}, \theta_{\ell j}) = (\omega_1,\omega_2), j\in\mathcal A_{k\ell})}{\mathbb P(j\in \mathcal A_{k\ell})}.\label{eq_theta_Akl}
\end{align}
The numerator in (\ref{eq_theta_Akl}) is 
\begin{align*}
    &\mathbb P((\theta_{kj}, \theta_{\ell j}) = (\omega_1,\omega_2), j\in\mathcal A_{k\ell}) \\
    =& \mathbb P(j\in\mathcal A_{k\ell}\mid (\theta_{kj}, \theta_{\ell j}) = (\omega_1,\omega_2))\cdot \mathbb P((\theta_{kj}, \theta_{\ell j}) = (\omega_1,\omega_2))
    \\
    =& \mathbb P(j\in\mathcal A_{k\ell}\mid j\in\mathcal H_0^{k\ell})\cdot \mathbb P((\theta_{kj}, \theta_{\ell j}) = (\omega_1,\omega_2)).
\end{align*}
The denominator in (\ref{eq_theta_Akl}) is 
\begin{align*}
    \mathbb P(j\in \mathcal A_{k\ell}) =& \mathbb P(j\in \mathcal A_{k\ell}\mid j\in\mathcal H_0^{k\ell}) \cdot\mathbb P(j\in\mathcal H_0^{k\ell}).
\end{align*}
Therefore, we have
\begin{align*}
    \mathbb P((\theta_{kj}, \theta_{\ell j}) = (\omega_1,\omega_2)\mid j\in\mathcal A_{k\ell}) =& \frac{\mathbb P((\theta_{kj}, \theta_{\ell j}) = (\omega_1,\omega_2))}{\mathbb P(j\in\mathcal H_0^{k\ell})} 
    \\
    =& \mathbb P((\theta_{kj}, \theta_{\ell j}) = (\omega_1,\omega_2)\mid j\in\mathcal H_0^{k\ell}),
\end{align*}
which means that $(\theta_{kj}, \theta_{\ell j})$ has the same conditional distribution for $j\in\mathcal A_{k\ell}$ and $j\in\mathcal H_0^{k\ell}$. By Birkhoff's ergodic theorem \citep{birkhoff1931proof}, we have
\begin{align*}
&\left| \frac{1}{|\mathcal A_{k\ell}|} \sum_{j \in \mathcal A_{k\ell}} \mathbb E(\widehat e_j^{k\ell}(q_*)) - \mathbb E(\widehat e_j^{k \ell}(q_*) \mid j \in \mathcal H_0^{k \ell}) \right| \to 0 \text{ almost surely as } m\to\infty.
\end{align*}
Thus by (\ref{eq_conditional_E_evalue}), we have $|\mathcal A_{k\ell}|^{-1} \sum_{j \in \mathcal A_{k\ell}} \mathbb E(\widehat e_j^{k\ell}(q_*)) \to 1/\mathbb P(j\in\mathcal H_0^{k\ell})$ almost surely. Noting that $\mathcal A_{k\ell} $ is a partition of $\mathcal H_0$, we have $\sum_{k<\ell}|A_{k\ell}|/m \to \mathbb P(j\in\mathcal H_0)$ almost surely. Moreover, $\widehat\pi_0^{k\ell}, \widehat \pi_1^{k\ell}$ and $\widehat\pi_2^{k\ell}$ are consistent estimators of $\pi_0^{k\ell}, \pi_1^{k\ell}$ and $\pi_2^{k\ell}$, respectively, and their summation is bounded by $1$. By Lemma \ref{lemma_consistency_of_thresholds} and Lemma \ref{lemma_R_to_infty}, the non-zero e-values $\widehat e_j^{k\ell}(q_*)$ are upper bounded by some constant. Therefore, by the dominated convergence theorem,
\begin{align*}
\frac{1}{m} \sum_{j \in \mathcal H_0} \mathbb E(\widehat e_j(q_*)) 
=& \frac{1}{m} \sum_{k < \ell} |\mathcal A_{k\ell}| \left( \frac{1}{|\mathcal A_{k\ell}|} \sum_{j \in \mathcal A_{k\ell}} \mathbb E(\widehat e_j(q_*))\right)
\\
\leq &\frac{1}{m} \sum_{k < \ell} |\mathcal A_{k\ell}| \left( \frac{1}{|\mathcal A_{k\ell}|} \sum_{j \in \mathcal A_{k\ell}} \mathbb E((\widehat\pi^{k\ell}_0+\widehat\pi^{k\ell}_1+\widehat\pi^{k\ell}_2)\widehat e_j^{k \ell}(q_*)) \right)
\\
\leq &\frac{1}{m} \sum_{k < \ell} |\mathcal A_{k\ell}|\cdot 1
\\
= & \frac{|\mathcal H_0|}{m}\leq 1, \quad \text{ as } m \to \infty,
\end{align*}
which is the asymptotic version of the condition in Theorem 2 of \cite{wang2022false}.
    
    Then eBH procedure based on $(\widehat e_j(q_*))_{j=1}^m$ controls the FDR by the asymptotic version of Theorem 2 in \cite{wang2022false}.
\end{proof}

\section{Proof of lemmas}
\subsection{Proof of Lemma \ref{lemma_consistency_of_thresholds}}
\begin{proof}
    Recall that
        \begin{align*}
            \widehat{\lambda}_{\mathrm{OR}}^{\infty} =& \sup\left\{t: \widehat{Q}_{\mathrm{OR}}^{\infty}(t) \leq q\right\},\\
            \lambda_{\mathrm{OR}}^{\infty} =& \sup\left\{t: Q_{\mathrm{OR}}^{\infty}(t) \leq q\right\},
        \end{align*}
        where
        \begin{align}
            \widehat{Q}_{\mathrm{OR}}^{\infty}(t) =& \frac{\sum_{j=1}^m I(T_j^{\infty} \leq t) T_j^{\infty}}{\sum_{j=1}^m I(T_j^{\infty} \leq t)},\label{eq_Q_hat_OR}\\
            Q_{\mathrm{OR}}^{\infty}(t) =& \frac{\pi_0 G_0^{\infty}(t) + \pi_1 G_1^{\infty}(t) + \pi_2 G_2^{\infty}(t)}{G^{\infty}(t)}.\notag
        \end{align}
        Since $T_1^\infty = \mathbb P(s_1\in\{0, 1, 2\}\mid (y_{1j}, y_{2j})_{j=-\infty}^\infty)$, it is a function of $(y_{1j}, y_{2j})_{j=-\infty}^\infty$. Thus
        \begin{align*}
            \mathbb E\left\{I(T_1^{\infty} \leq t)T_1^{\infty}\right\} =& \mathbb E\left\{I(T_1^{\infty} \leq t) \mathbb E\left[I(s_1\in\{0,1,2\})\mid (y_{1j}, y_{2j})_{j=-\infty}^{\infty}\right]\right\}\\
            =& \mathbb E\left\{\mathbb E[I(T_1^{\infty} \leq t, s_1\in\{0, 1, 2\})\mid (y_{1j}, y_{2j})_{j=-\infty}^{\infty}]\right\}\\
            =& \mathbb P\left(T_1^{\infty} \leq t, s_1 \in \{0, 1, 2\}\right)\\
            =& \pi_0 G_0^{\infty}(t) + \pi_1 G_1^{\infty}(t) + \pi_2 G_2^{\infty}(t),\text{ and }\\
            \mathbb E\left\{I(T_1^{\infty} \leq t)\right\} =& G^{\infty}(t).
        \end{align*}
        Birkhoff's ergodic theorem \citep{birkhoff1931proof} gives 
        \begin{align}
            \frac{1}{m}\sum_{j=1}^m I(T_j^{\infty} \leq t) T_j^{\infty}\rightarrow& \pi_0 G_0^{\infty}(t) + \pi_1 G_1^{\infty}(t) + \pi_2 G_2^{\infty}(t) \text{ almost surely}\quad (0\leq t \leq 1),\notag\\
            \frac{1}{m}\sum_{j=1}^m I(T_j^{\infty} \leq t) \rightarrow & G^{\infty}(t) \text{ almost surely}\quad (0\leq t \leq 1).\notag
       \end{align}
        Consequently, 
        \begin{align}
            \widehat{Q}_{\mathrm{OR}}^{\infty}(t) \rightarrow& Q_{\mathrm{OR}}^{\infty}(t) \text{ almost surely for any }t \text{ such that } G^\infty(t)>0.\label{eq_convergence_of_Q_hat}
        \end{align}
        In addition, $G^\infty(\lambda^\infty_{\rm OR}) > 0$. Therefore,
        \begin{align*}
            \mathbb P\bigg(\lim_{m\rightarrow\infty}\widehat{Q}_{\mathrm{OR}}^{\infty}(\lambda_{\mathrm{OR}}^{\infty}) \leq q\bigg)=& \mathbb P\bigg(\lim_{m\rightarrow\infty}\big(\widehat{Q}_{\mathrm{OR}}^{\infty}(\lambda_{\mathrm{OR}}^{\infty}) - Q_{\mathrm{OR}}^{\infty}(\lambda_{\mathrm{OR}}^{\infty})\big) + Q_{\mathrm{OR}}^{\infty}(\lambda_{\mathrm{OR}}^{\infty}) \leq q \bigg) = 1,
        \end{align*}
        which implies that $\mathbb P(\lim_{m\rightarrow\infty}\widehat{\lambda}_{\mathrm{OR}}^{\infty} \geq \lambda_{\mathrm{OR}}^{\infty}) = 1$, or equivalently, 
        \begin{align}
            \widehat{\lambda}_{\mathrm{OR}}^{\infty} \geq \lambda_{\mathrm{OR}}^{\infty}\text{ almost surely.}
            \label{eq_lambda_hat_greater_than_lambda_OR}
        \end{align}

        By  construction, $\widehat{Q}_{\mathrm{OR}}^{\infty}(t)$ is an increasing step function with jump at $T_{(j)}^{\infty}$. For $T_{(j)}^{\infty} \leq t < T_{(j+1)}^{\infty}$, construct the lower bound of $\widehat Q_{\rm OR}^\infty(t)$ as
        \begin{align*}
            \widehat{L}_{\mathrm{OR}}^{\infty}(t) =& \frac{T_{(j+1)}^{\infty} - t}{T_{(j+1)}^{\infty} - T_{(j)}^{\infty}} \widehat{Q}_{\mathrm{OR}}^{\infty} (T_{(j-1)}^{\infty}) + \frac{t - T_{(j)}^{\infty}}{T_{(j+1)}^{\infty} - T_{(j)}^{\infty}} \widehat{Q}_{\mathrm{OR}}^{\infty} (T_{(j)}^{\infty}).
        \end{align*}
        Then $\widehat{L}_{\mathrm{OR}}^{\infty}(t)$ is strictly increasing in $t$. We also have
        \begin{align*}
            0 \leq \widehat{Q}_{\mathrm{OR}}^{\infty}(t) - \widehat{L}_{\mathrm{OR}}^{\infty}(t) \leq& \widehat{Q}_{\mathrm{OR}}^{\infty}(T_{(j)}^{\infty}) - \widehat{Q}_{\mathrm{OR}}^{\infty}(T_{(j-1)}^{\infty})\\
            =& \frac{(j-1)T_{(j)}^{\infty} - \sum_{k=1}^{j-1}T_{(k)}^{\infty}}{j(j-1)}\\
            \leq & \frac{1}{j}\\
            =& \frac{1}{R^{\infty}(t)},
        \end{align*}
        where $R^\infty (t) = \sum_{k=1}^m 1(T_k^\infty \leq t)$ denotes the number of rejections yielded by threshold $t$, satisfying $R^\infty (t) = j$ if $T_{(j)}^{\infty} \leq t < T_{(j+1)}^{\infty}$.
        By Birkhoff's ergodic theorem \citep{birkhoff1931proof}, $R^\infty(t) / m \to G^\infty(t)$ almost surely as $m \to \infty$. Then we have
        \begin{align*}
            \widehat{Q}_{\mathrm{OR}}^{\infty}(t) - \widehat{L}_{\mathrm{OR}}^{\infty}(t) \rightarrow 0\text{ almost surely}\quad (0\leq t \leq 1).
        \end{align*}
        By (\ref{eq_convergence_of_Q_hat}), $\widehat{L}_{\mathrm{OR}}^{\infty}(t) \rightarrow Q_{\mathrm{OR}}^{\infty}(t)$ almost surely for $0\leq t\leq 1$.

        Denote
        \begin{align*}
            \widehat{\lambda}_{\mathrm{L,OR}}^{\infty} = \sup\{t\in (0, 1): \widehat{L}_{\mathrm{OR}}^{\infty}(t) \leq q\}.
        \end{align*}
        As $\widehat{Q}_{\mathrm{OR}}^{\infty}(t) \geq \widehat{L}_{\mathrm{OR}}^{\infty}(t)$ with probability $1$, we have 
        \begin{align}
            \widehat{\lambda}_{\mathrm{OR}}^{\infty} \leq \widehat{\lambda}_{\mathrm{L,OR}}^{\infty} \text{ with probability }1.
            \label{eq_lambda_OR_less_than_lambda_L_OR}
        \end{align}
        By (\ref{eq_lambda_hat_greater_than_lambda_OR}), we also have $\widehat\lambda_{\rm L,OR}^\infty \geq \lambda_{\rm OR}^\infty$ almost surely.

        We claim that $\widehat{\lambda}_{\mathrm{L,OR}}^{\infty} \rightarrow \lambda_{\mathrm{OR}}^{\infty}$ in probability. If not, there exist $\varepsilon_2 > 0$ and $\eta_0 >0$ such that for any $M>0$, there exists $m_1 \geq M$ satisfying
        \begin{align*}
            \mathbb P(K_{m_1}^1) \geq 2\eta_0,
        \end{align*}
        where $K_{m_1}^1$ denotes the event that $\widehat{\lambda}_{\mathrm{L,OR}}^{\infty} - \lambda_{\mathrm{OR}}^{\infty} > \varepsilon_2$. 

        Let \begin{align*}
            2\delta_1 = Q_{\mathrm{OR}}^{\infty}(\lambda_{\mathrm{OR}}^{\infty} + \varepsilon_2) - q > 0.
        \end{align*}
        Since $\widehat{L}_{\mathrm{OR}}^{\infty}(t) \rightarrow Q_{\mathrm{OR}}^{\infty}(t)$ in probability for any $t\in[0, 1]$, there exists $M_2>0$, such that for any $m_2 \geq M_2$,
        \begin{align*}
            \mathbb P(K_{m_2}^2) \geq 1-\eta_0,
        \end{align*}
        where $K_{m_2}^2$ denotes the event that 
        $|\widehat{L}_{\mathrm{OR}}^{\infty}(\lambda_{\mathrm{OR}}^{\infty} + \varepsilon_2) - Q_{\mathrm{OR}}^{\infty}(\lambda_{\mathrm{OR}}^{\infty} + \varepsilon_2)| < \delta_1$.

        Without loss of generality, assume $m_1 = m_2 = m$. Letting $K_m = K_m^1 \bigcap K_m^2$, we have
        \begin{align*}
            \mathbb P(K_m) =& 1- \mathbb P((K_m^1)^c \cup (K_m^2)^c)\\
            \geq & 1- \{(1-2\eta_0) + \eta_0\}\\
            =& \eta_0.
        \end{align*}
        Thus $K_m$ has positive probability.

        Additionally, $\widehat{L}_{\mathrm{OR}}^{\infty}(t)$ is strictly increasing  over $t$ with probability $1$. On $K_m$, we have
        \begin{align*}
            q =& \widehat{L}_{\mathrm{OR}}^{\infty}(\widehat{\lambda}_{\mathrm{L,OR}}^{\infty})\\
            >& \widehat{L}_{\mathrm{OR}}^{\infty}(\lambda_{\mathrm{OR}}^{\infty} + \varepsilon_2)\\
            >& Q_{\mathrm{OR}}^{\infty} (\lambda_{\mathrm{OR}}^{\infty} + \varepsilon_2) - \delta_1\\
            =& q + \delta_1,
        \end{align*}
        which is a contradiction. Thus we must have $\widehat{\lambda}_{\mathrm{L,OR}}^{\infty} \rightarrow \lambda_{\mathrm{OR}}^{\infty}$ in probability. Furthermore, by (\ref{eq_lambda_hat_greater_than_lambda_OR}) and (\ref{eq_lambda_OR_less_than_lambda_L_OR}),
        \begin{align*}
            \widehat{\lambda}_{\mathrm{OR}}^{\infty} \rightarrow \lambda_{\mathrm{OR}}^{\infty}\text{ in probability as } m\rightarrow\infty.
        \end{align*}
        $\widehat \lambda_{\rm rLIS}^\infty \to \lambda_{\rm OR}^\infty$ in probability can be shown in the same way and we omit the details.
\end{proof}

\subsection{Proof of Lemma \ref{lemma_R_to_infty}}
\begin{proof}
    Define $M_d^+(j,\phi) = \max_{k,l=0,1,2,3}\mathbb P_{\phi}(S_j = k\mid y_1^m, S_{j-d} = l)$. Similarly, define $M_d^-(j,\phi) = \min_{k,l=0,1,2,3}\mathbb P_{\phi}(S_j = k\mid y_1^m, S_{j-d} = l)$. We first show that 
    \begin{align}
        \label{eq_M_d_plus_M_d_minus_upper_bound}
        |M_d^+(j,\phi) - M_d^-(j,\phi)| \leq \prod_{i=j-d+1}^{j-1} \{1-2\tau_0(y_{1i}, y_{2i})\}.
    \end{align}

    Since $\varepsilon_0 \leq 1/4$ and $\rho_0(y_{1i}, y_{2i}) \geq 1$ with probability $1$ by definition, we have $\tau_0(y_{1i}, y_{2i}) = \{1 + \varepsilon_0^{-2}\rho_0(y_{1i}, y_{2i})\}^{-1} \leq 1/17$ with probability $1$. Thus $1-2\tau_0(y_{1i}, y_{2i}) > 0$ with probability $1$ for any $i = 1,\ldots,m.$
    We have
    \begin{align*}
        \mathbb P_{\phi}(S_j = k\mid y_1^m, S_{j-d} = l) =& \sum_{k'=0}^3 \mathbb P_{\phi}(S_j = k, S_{j-d+1} = k'\mid y_1^m, S_{j-d} = l)\\
        =& \sum_{k'=0}^3 \mathbb P_{\phi}(S_j = k\mid y_1^m, S_{j-d+1} = k')\mathbb P_{\phi}(S_{j-d+1} = k'\mid y_1^m, S_{j-d} = l).
    \end{align*}
    Since $\mathbb P_{\phi}(S_{j-d+1}=k'\mid y_1^m, S_{j-d}=l) \geq \tau_0(y_{1,j-d+1}, y_{2,j-d+1})$, we have
    \begin{align*}
        M_d^+(j, \phi) \leq& \{1-\tau_0(y_{1,j-d+1}, y_{2,j-d+1})\} M_{d-1}^+(j, \phi) + \tau_0(y_{1,j-d+1}, y_{2,j-d+1}) M_{d-1}^-(j, \phi),
    \end{align*}
    and similarly,
    \begin{align*}
        M_d^-(j, \phi) \geq& \{1-\tau_0(y_{1,j-d+1}, y_{2,j-d+1})\} M_{d-1}^-(j, \phi) + \tau_0(y_{1,j-d+1}, y_{2,j-d+1}) M_{d-1}^+(j, \phi).
    \end{align*}
    Therefore,
    \begin{align*}
        M_d^+(j, \phi) - M_d^-(j, \phi) \leq& \{1-2\tau_0(y_{1,j-d+1}, y_{2,j-d+1})\} \{M_{d-1}^+(j, \phi) - M_{d-1}^-(j, \phi)\}\\
        \leq& \prod_{i=j-d+1}^{j-1} \{1-2\tau_0(y_{1i}, y_{2i})\} \{M_1^+(j,\phi) - M_1^-(j, \phi)\}.
    \end{align*}
    Since  $M_{1}^+(j, \phi) - M_{1}^-(j, \phi) \leq 1$, we know (\ref{eq_M_d_plus_M_d_minus_upper_bound}) is true.
    We have the similar definitions $N_d^+(j,\phi) = \max_{k,l=0,1,2,3}\mathbb P_{\phi}(S_j = k\mid y_1^m, S_{j+d} = l)$ and $N_d^{-}(j,\phi) = \min_{k,l=0,1,2,3}\mathbb P_{\phi}(S_j = k\mid y_1^m, S_{j+d} = l)$. We also have 
    \begin{align}
        \label{eq_N_d_plus_N_d_minus_upper_bound}
        |N_d^{+}(j,\phi) - N_d^{-}(j,\phi)| \leq \prod_{i=j+1}^{j+d-1} \{1-2\tau_0(y_{1i}, y_{2i})\}.
    \end{align}

    We move to the second step. Let $L<m/2$. For any $j$, let $L_1 = 1 \vee (j-L)$ and $L_2 = m \wedge (j+L)$. We claim that when $L_1 > 1$ and $L_2 <m$,
    \begin{align}
        \label{eq_difference_between_m_and_all_observations}
        &|\mathbb P_{\phi} (S_j\in\{0, 1, 2\}\mid y_1^m) - \mathbb P_{\phi} (S_j\in\{0, 1, 2\}\mid y_{-\infty}^\infty)|\notag\\ <& 3\prod_{i=L_1+1}^{j-1}\exp\{-2\tau_0(y_{1i}, y_{2i})\} + 3\prod_{i=j+1}^{L_2-1}\exp\{-2\tau_0(y_{1i}, y_{2i})\}.
    \end{align}
    We have 
    \begin{align*}
        &|\mathbb P_{\phi}(S_j\in\{0,1,2\}\mid y_1^m) - \mathbb P_{\phi}(S_j\in\{0,1,2\}\mid y_{-\infty}^\infty)|\\
        \leq &|\mathbb P_{\phi}(S_j\in\{0,1,2\}\mid y_1^m) - \mathbb P_{\phi}(S_j\in\{0,1,2\}\mid y_{-\infty}^m)| \\
        &+ |\mathbb P_{\phi}(S_j\in\{0,1,2\}\mid y_{-\infty}^m) - \mathbb P_{\phi}(S_j\in\{0,1,2\}\mid y_{-\infty}^\infty)|.
    \end{align*}
    We just need to show
    \begin{align*}
        |\mathbb P_{\phi}(S_j\in\{0,1,2\}\mid y_1^m) - \mathbb P_{\phi}(S_j\in\{0,1,2\}\mid y_{-\infty}^m)| 
        \leq& 3\prod_{i=L_1+1}^{j-1}\exp\{-2\tau_0(y_{1i}, y_{2i})\}
    \end{align*}
    and 
    \begin{align*}
        |\mathbb P_{\phi}(S_j\in\{0,1,2\}\mid y_{-\infty}^m) - \mathbb P_{\phi}(S_j\in\{0,1,2\}\mid y_{-\infty}^\infty)| \leq&  3\prod_{i=j+1}^{L_2-1}\exp\{-2\tau_0(y_{1i}, y_{2i})\}.
    \end{align*}
    We have for $k = 0,1,2$,
    \begin{align*}
        &|\mathbb P_{\phi}(S_j = k\mid y_1^m) - \mathbb P_{\phi}(S_j = k\mid y_{-\infty}^m)| \\
        =& \bigg|\sum_{l=0}^3 \mathbb P_{\phi}(S_j  = k\mid  S_{j-L}=l, y_1^m)\mathbb P_{\phi}(S_{j-L}=l \mid y_1^m)\\
        &- \sum_{l'=0}^3  \mathbb P_{\phi}(S_j  = k\mid  S_{j-L}=l', y_1^m)\mathbb P_{\phi}(S_{j-L}=l' \mid y_{-\infty}^m)\bigg|\\
        \leq & \max_{l,l'=0,1,2,3} |\mathbb P_{\phi}(S_j  = k\mid  S_{j-L}=l, y_1^m) -   \mathbb P_{\phi}(S_j  = k\mid  S_{j-L}=l', y_1^m)|\\
        \leq & M_L^+(j, \phi) - M_L^-(j, \phi)\\
        \leq & \prod_{i=L_1+1}^{j-1}\{1-2\tau_0(y_{1i}, y_{2i})\}\\
        \leq & \prod_{i=L_1+1}^{j-1}\exp\{-2\tau_0(y_{1i}, y_{2i})\}.
    \end{align*}
    Then 
    \begin{align*}
        &|\mathbb P_\phi(S_j\in\{0,1,2\}\mid y_1^m) - \mathbb P_\phi(S_j\in\{0,1,2\}\mid y_{-\infty}^m)|\\
        \leq& \sum_{k=0}^2 |\mathbb P_\phi(S_j = k\mid y_1^m) - \mathbb P_\phi(S_j = k \mid y_{-\infty}^m)| \\
        \leq& 3\prod_{i=L_1+1}^{j-1}\exp\{-2\tau_0(y_{1i}, y_{2i})\}.
    \end{align*}
    Similarly, we also have
    \begin{align*}
        |\mathbb P_\phi(S_j\in\{0,1,2\}\mid y_{-\infty}^m) - \mathbb P_\phi(S_j\in\{0,1,2\}\mid y_{-\infty}^\infty)| \leq&  3\prod_{i=j+1}^{L_2-1}\exp\{-2\tau_0(y_{1i}, y_{2i})\}.
    \end{align*}
    Therefore, (\ref{eq_difference_between_m_and_all_observations}) is true.
    Then we consider the expectations. 
    \begin{align*}
        &\mathbb E_{\phi^*}|\mathbb P_\phi (S_j\in\{0, 1, 2\}\mid y_1^m) - \mathbb P_\phi (S_j\in\{0, 1, 2\}\mid y_{-\infty}^\infty)|\\
        \leq& \mathbb E_{\phi^*} \left[3\prod_{i=L_1+1}^{j-1}\exp\{-2\tau_0(y_{1i}, y_{2i})\} + 3\prod_{i=j+1}^{L_2-1}\exp\{-2\tau_0(y_{1i}, y_{2i})\}\right]\\
        =& \mathbb E_{\phi^*}\left\{\mathbb E_{\phi^*} \left[3\prod_{i=L_1+1}^{j-1}\exp\{-2\tau_0(y_{1i}, y_{2i})\} + 3\prod_{i=j+1}^{L_2-1}\exp\{-2\tau_0(y_{1i}, y_{2i})\}\bigg|S_1,\ldots, S_m\right]\right\}\\
        =& \mathbb E_{\phi^*}\left\{3\prod_{i=L_1+1}^{j-1}\mathbb E_{\phi^*} \left[\exp\{-2\tau_0(y_{1i}, y_{2i})\}\mid S_i\right] + 3\prod_{i=j+1}^{L_2-1}\mathbb E_{\phi^*} \left[\exp\{-2\tau_0(y_{1i}, y_{2i})\}\mid S_i\right]\right\}.
    \end{align*}
    By (C5) and the construction of $\tau_0(Y_{1j}, Y_{2j})$, we have $\mathbb P_{\phi^*}(\tau_0(Y_{1j}, Y_{2j}) > 0\mid S_j = k) = 1$ for $k = 0,1,2,3$. Let 
    \begin{align}
        \beta_0 = \max_{k=0,1,2,3} \mathbb E_{\phi^*} \left[\exp\{-2\tau_0(y_{11}, y_{21})\}\mid S_1 = k\right],
        \label{eq_beta_0}
    \end{align}
    then we have $\beta_0 < 1$. Therefore, for some $C_0 > 0$, 
    \begin{align}
        \mathbb E_{\phi^*}|\mathbb P_\phi (S_j\in\{0, 1, 2\}\mid y_1^m) - \mathbb P_\phi (S_j\in\{0, 1, 2\}\mid y_{-\infty}^\infty)| \leq C_0 \beta_0^L.
        \label{eq_upper_bound_of_Ed}
    \end{align}

    By L\'{e}vy's upward theorem \citep{williams1991probability}, $\mathbb P_\phi(S_0 \in\{0,1,2\}\mid (y_{1j}, y_{2j})_{-N}^N) \rightarrow T_0^\infty$ almost surely as $N\rightarrow \infty$. 
    Next, we show that $G^\infty(q/2) = \mathbb P_\phi (T_0^\infty \leq q/2) > 0$. Note that
    \begin{align*}
        &\mathbb P_{\phi} (T_0^\infty \leq q/2)\\ =& \mathbb P_\phi \left(\frac{\mathbb P_\phi(S_j\in\{0,1,2\}\mid (y_{1j}, y_{2j})_{j=-\infty}^\infty)}{\mathbb P_\phi(S_j = 3\mid (y_{1j}, y_{2j})_{j=-\infty}^\infty)} \leq \frac{q/2}{1-q/2}\right)\\
        =& \mathbb P_\phi \left(\frac{\sum_{k=0}^2 \pi_k\mathbb P_\phi((y_{1j}, y_{2j})_{j=-\infty}^\infty\mid S_j = k)}{\pi_3\mathbb P_\phi((y_{1j}, y_{2j})_{j=-\infty}^\infty \mid S_j = 3)} \leq \frac{q/2}{1-q/2}\right)\\
        =&  \mathbb P_\phi \bigg(\sum_{k=0}^2 \frac{\sum_{l_1,l_2=0,1,2,3}\mathbb P_\phi((y_{1j}, y_{2j})_{j=-\infty}^{-1}\mid S_{-1} = l_1)a_{l_1k}a_{kl_2} \mathbb P_\phi((y_{1j}, y_{2j})_1^{\infty}\mid S_{1} = l_2)}{\sum_{l_1,l_2=0,1,2,3}\mathbb P_\phi((y_{1j}, y_{2j})_{j=-\infty}^{-1}\mid S_{-1} = l_1)a_{l_13}a_{3l_2} \mathbb P_\phi((y_{1j}, y_{2j})_1^{\infty}\mid S_{1} = l_2)}\\
        & \quad\cdot \frac{\pi_k f^{(k)}(y_{10}, y_{20})}{\pi_3f^{(3)}(y_{10}, y_{20})}\leq \frac{q/2}{1-q/2}\bigg).
    \end{align*}
    By (C2), $\varepsilon_0\leq a_{kl} \leq 1$ for $k,l=0,1,2,3$. Then we have
    \begin{align*}
        \varepsilon_0^2 \leq \frac{\sum_{l_1,l_2=0,1,2,3}\mathbb P_\phi((y_{1j}, y_{2j})_{j=-\infty}^{-1}\mid S_{-1} = l_1)a_{l_1k}a_{kl_2} \mathbb P_\phi((y_{1j}, y_{2j})_1^{\infty}\mid S_{1} = l_2)}{\sum_{l_1,l_2=0,1,2,3}\mathbb P_\phi((y_{1j}, y_{2j})_{j=-\infty}^{-1}\mid S_{-1} = l_1)a_{l_13}a_{3l_2} \mathbb P_\phi((y_{1j}, y_{2j})_1^{\infty}\mid S_{1} = l_2)} \leq \varepsilon_0^{-2}.   
    \end{align*}
    Consequently,
    \begin{align*}
        \mathbb P_{\phi} (T_0^\infty \leq q/2) \geq& \mathbb P_\phi \left\{\varepsilon_0^{-2}\left(\frac{\pi_0}{\pi_3} \frac{1}{f_1(y_{10})f_2(y_{20})} + \frac{\pi_1}{\pi_3} \frac{1}{f_2(y_{20})} + \frac{\pi_2}{\pi_3} \frac{1}{f_1(y_{10})}\right) \leq \frac{q/2}{1-q/2}\right\}\\
        =& \mathbb P_\phi \left\{\varepsilon_0^2 \frac{q/2}{1-q/2} f_1(y_{10})f_2(y_{20}) - \frac{\pi_1}{\pi_3}f_1(y_{10}) - \frac{\pi_2}{\pi_3}f_2(y_{20}) - \frac{\pi_0}{\pi_3} \geq 0\right\}.
    \end{align*}
    By (C2), $\lim_{y\rightarrow 0}f_1(y) > c$ and $\lim_{y\rightarrow 0} f_2(y)  > c$.
    Moreover, two roots of the quadratic equation
    \begin{align*}
        g_2(x) = \varepsilon_0^2 \frac{q/2}{1-q/2} x^2 - \frac{\pi_1 + \pi_2}{\pi_3}x  - \frac{\pi_0}{\pi_3} = 0
    \end{align*}
    are
    \begin{align*}
        x_0^{(1)} =& \frac{\pi_1+\pi_2 - \{(\pi_1 + \pi_2)^2 + 4 \varepsilon_0^2\pi_0\pi_3q/(2-q)\}^{1/2}}{2\varepsilon_0^2\pi_3q/(2-q)}<0 \text{ and }\\
        x_0^{(2)} =&\frac{\pi_1+\pi_2 + \{(\pi_1 + \pi_2)^2 + 4 \varepsilon_0^2\pi_0\pi_3q/(2-q)\}^{1/2}}{2\varepsilon_0^2\pi_3q/(2-q)}>0.
    \end{align*}
    By (C2), $\varepsilon_0 \leq \pi_k \leq 1-3\varepsilon_0$ for $k = 0,1,2,3$. Thus
    \begin{align*}
        x_0^{(2)} \leq \frac{1-2\varepsilon_0 + \{(1-2\varepsilon_0)^2 + 4\varepsilon_0^3(1-3\varepsilon_0)q/(2-q)\}^{1/2}}{2\varepsilon_0^3q/(2-q)} = c,
    \end{align*}
    where $c$ is defined in (C2). Thus $g_2(c) > 0$ and
    \begin{align*}
        \varepsilon_0^2 \frac{q/2}{1-q/2} c - \frac{\pi_1}{\pi_3} >& \frac{1}{c}\left(\frac{\pi_2}{\pi_3}c + \frac{\pi_0}{\pi_3}\right) > 0, \text{ and }\\
        \varepsilon_0^2 \frac{q/2}{1-q/2} c - \frac{\pi_2}{\pi_3} >& \frac{1}{c}\left(\frac{\pi_1}{\pi_3}c + \frac{\pi_0}{\pi_3}\right) > 0.
    \end{align*}
    By (C2), $\lim_{x_1\to 0}f_1(x_1) > c$ and $\lim_{x_2\to 0}f_2(x_2) > c$. Since $f_1, f_2$ are continuous, there exist $u_1, u_2 \in (0, 1)$ such that $f_1(x_1) > c$ and $f_2(x_2) > c$ whenever $0<x_1<u_1$ and $0<x_2<u_2$. Consequently, for $0<x_1<u_1$ and $0<x_2<u_2$
    \begin{align*}
        \varepsilon_0^2 \frac{q/2}{1-q/2} f_2(x_2)- \frac{\pi_1}{\pi_3} > \varepsilon_0^2 \frac{q/2}{1-q/2} c- \frac{\pi_1}{\pi_3} > 0.
    \end{align*}
    Therefore,
    \begin{align*}
        &\varepsilon_0^2 \frac{q/2}{1-q/2} f_1(x_1)f_2(x_2) - \frac{\pi_1}{\pi_3}f_1(x_1) - \frac{\pi_2}{\pi_3}f_2(x_2) - \frac{\pi_0}{\pi_3}\\
        =& \left\{\varepsilon_0^2\frac{q/2}{1-q/2}f_2(x_2) - \frac{\pi_1}{\pi_3}\right\}f_1(x_1) - \frac{\pi_2}{\pi_3} f_2(x_2) - \frac{\pi_0}{\pi_3}\\
        \geq & \left\{\varepsilon_0^2\frac{q/2}{1-q/2}f_2(x_2) - \frac{\pi_1}{\pi_3}\right\}c - \frac{\pi_2}{\pi_3} f_2(x_2) - \frac{\pi_0}{\pi_3}\\
        =&  \left\{\varepsilon_0^2\frac{q/2}{1-q/2}c - \frac{\pi_2}{\pi_3}\right\}f_2(x_2) - \frac{\pi_1}{\pi_3}c - \frac{\pi_0}{\pi_3}\\
        \geq& \varepsilon_0^2 \frac{q/2}{1-q/2} c^2 - \frac{\pi_1 + \pi_2}{\pi_3}c  - \frac{\pi_0}{\pi_3} = g_2(c) > 0.
    \end{align*}
    Therefore, we have
    \begin{align*}
        \mathbb P_{\phi} (T_0^\infty \leq q/2) \geq& \mathbb P_\phi \left\{Y_{10} \in (0, u_1), Y_{20} \in (0, u_2)\right\} > 0.
    \end{align*}
    and thus we can conclude that $\mathbb P_{\phi} (T_0^\infty \leq q/2) > 0$.

    Finally, we show that $R/m\geq G^\infty(q/2)$ and $\widehat{R}/m \geq G^\infty(q/2)$ almost surely as $m\to\infty$. We consider the case that not all hypotheses are rejected. Recall (\ref{eq_lambda_0}). The threshold $\widehat\lambda_{\rm OR}$ satisfies $\widehat\lambda_{\rm OR} \geq q$ with probability $1$. 
    It suffices to show that $m^{-1}\sum_{j=1}^m I(T_j \leq q) \geq G^\infty(q/2)$ almost surely as $m\to\infty$. Take $L_m = m^\kappa$, with $\kappa \in (0, 1)$. $L_m$ satisfies $L_m < m/2$ when $m$ is large enough. For any $j$ satisfying $L_m+1 < j < m-L_m-1$, by (\ref{eq_difference_between_m_and_all_observations}), we have
    \begin{align*}
        |T_j - T_j^\infty| =& |\mathbb P_{\phi^*}(s_j\in\{0,1,2\}\mid y_1^m) - \mathbb P_{\phi^*}(s_j\in\{0,1,2\}\mid y_{-\infty}^\infty)|
        \\
        <& 3 \prod_{i=j-L_m+1}^{j-1} \exp\{-2\tau_0(y_{1i},y_{2i})\} + 3 \prod_{i=j+1}^{j+L_m-1} \exp\{-2\tau_0(y_{1i},y_{2i})\} 
    \end{align*}
    with probability $1$. 
    
    Define $d_j((y_{1i}, y_{2i})_1^m) = 3 \prod_{i=j-L_m+1}^{j-1} \exp\{-2\tau_0(y_{1i},y_{2i})\} + 3 \prod_{i=j+1}^{j+L_m-1} \exp\{-2\tau_0(y_{1i},y_{2i})\}$. Then $d_j((y_{1i}, y_{2i})_1^m)$ is ergodic. Thus Birkhoff's ergodic theorem \citep{birkhoff1931proof} gives that 
    \begin{align*}
        \frac{1}{m-2L_m-1} \sum_{j=L_m+1}^{m-L_m-1} I(d_j > q/2)\rightarrow \mathbb P_{\phi^*} (d_1 > q/2)\text{ in probability}. 
    \end{align*}
    Moreover, $\mathbb E_{\phi^*}[d_j] < C_0\beta_0^{L_m}$ by the construction of $\beta_0$ in (\ref{eq_beta_0}). Then Markov's inequality gives 
    \begin{align*}
        \mathbb P_{\phi^*} (d_1 > q/2) \leq 
        \frac{\mathbb E_{\phi^*}[d_j]}{q/2} \rightarrow 0 \text{ as } L_m = m^\kappa \rightarrow \infty.
    \end{align*}
    Thus
    \begin{align*}
        \frac{1}{m}\sum_{j=1}^m I\left(|T_j - T_j^\infty| > q/2\right) \leq \frac{2L_m+1}{m} + \frac{1}{m}\sum_{j=L_m+1}^{m-L_m-1} I(d_j > q/2) \rightarrow 0\text{ in probability}
    \end{align*}
    as $m \to \infty$.
    We use the property that $I(T_j \leq q) + I(|T_j - T_j^\infty| > q/2) \geq I(T_j^\infty \leq q/2)$. Then
    \begin{align*}
        \frac{1}{m}\sum_{j=1}^mI(T_j \leq q) + \frac{1}{m}\sum_{j=1}^mI(|T_j - T_j^\infty| > q/2) \geq \frac{1}{m}\sum_{j=1}^m I(T_j^\infty \leq q/2).
    \end{align*}
    By Birkhoff's ergodic theorem, we have
    \begin{align*}
        \frac{1}{m}\sum_{j=1}^m I(T_j^\infty \leq q/2) \rightarrow G^{\infty}(q/2) \text{ almost surely as } m\to\infty.
    \end{align*}
    Then we have $m^{-1}\sum_{j=1}^mI(T_j \leq q) \geq G^\infty(q/2)$ almost surely. We have shown that $G^\infty(t) > 0$ for $0<t <1$. Therefore, $m^{-1}\sum_{j=1}^mI(T_j \leq q) \geq G^\infty(q/2) > 0$ almost surely, which means $R/m \geq G^\infty(q/2)$ almost surely. We can use a similar argument to show that $\widehat{R}/m \geq G^\infty(q/2)$ almost surely. The details are omitted.
\end{proof}

\subsection{Proof of Lemma \ref{lemma_consistency_when_lambda_less_than_alpha}}
        \begin{proof}
        We prove the lemma via 2 steps. We first show that the difference between the averages of rejected true and estimated test statistics is small in expectation as $m\to\infty$. Then we use a contradiction argument to show the result of the lemma.

        \paragraph{Step 1. The difference between the averages of rejected true and estimated test statistics is small in expectation as $m\to\infty$.} Note that
        $R \rightarrow \infty$ almost surely as $m\to\infty$ as shown in Lemma \ref{lemma_R_to_infty}. The rejection criteria in (\ref{eq_oracle_test_2}) implies that 
        $$
        \frac{1}{R} \sum_{j=1}^{R} T_{(j)} \leq q <\frac{1}{R+1} \sum_{j=1}^{R+1} T_{(j)}.$$
        Note that as $m \rightarrow \infty$,
        \begin{align*}
            \mathbb E\left|\frac{1}{R} \sum_{j=1}^{R} T_{(j)}-\frac{1}{R+1} \sum_{j=1}^{R+1} T_{(j)}\right|=\mathbb E\left|\frac{\sum_{j=1}^{R}\left(T_{(j)}-T_{(R+1)}\right)}{R(R+1)}\right| \leq \mathbb E\left|\frac{1}{R+1}\right| \rightarrow 0.
        \end{align*}
        Since 
        \begin{align*}
            0\leq&\left|\frac{1}{R} \sum_{j=1}^{R} T_{(j)}-q\right| \leq \left|\frac{1}{R} \sum_{j=1}^{R} T_{(j)}-\frac{1}{R+1} \sum_{j=1}^{R+1} T_{(j)}\right|,
        \end{align*}
        we have
        \begin{align}
            \mathbb E\left|\frac{1}{R} \sum_{j=1}^{R} T_{(j)}-q\right| \rightarrow& 0 \text{ as }m\rightarrow\infty.\label{eq_average_of_T}
        \end{align}
        We can use the same approach to show
        \begin{align}
            \mathbb E\left|\frac{1}{\widehat R}\sum_{j=1}^{\widehat R} \widehat T_{(j)}-q\right| \rightarrow& 0 \text{ as }m\rightarrow\infty.\label{eq_average_of_T_hat}
        \end{align}
        Moreover, since $1/\widehat R\to 0$ almost surely as $m\to\infty$ as Lemma \ref{lemma_R_to_infty} shows, we can also show that 
        \begin{align}
            \label{eq_average_of_T_hat_as}
            \frac{1}{\widehat R}\sum_{j=1}^{\widehat R} \widehat T_{(j)} \to q \text{ almost surely as }m\to\infty.
        \end{align}
        Combining (\ref{eq_average_of_T}) and (\ref{eq_average_of_T_hat}), we have
        \begin{equation}
            \mathbb E\left|\frac{1}{\widehat{R}} \sum_{j=1}^{\widehat{R}} \widehat{T}_{(j)}-\frac{1}{R} \sum_{j=1}^{R} T_{(j)}\right| \to 0 \text{ as }m\rightarrow\infty.
            \label{eq_consistency_of_avg_T}
        \end{equation}
        
        \paragraph{Step 2. Contradiction argument to show the result of the lemma.} We    
        finish the proof by contradiction. Assume that $\lim_{m\rightarrow\infty}\mathbb E|R / \widehat{R} - 1| = 0$ does not hold, where $R$ is the total number of rejections by (\ref{eq_oracle_test_2}) when the total number of hypotheses is $m$ and $\widehat R$ is the total number of rejections by (\ref{eq_rej_procedure}) when the total number of hypotheses is $m$. Then there is $\varepsilon_1>0$ such that, for any $M>0$, there exists some $m\geq M$ satisfying $\mathbb E|R / \widehat{R}-1|>\varepsilon_1$. Since 
        \begin{align*}
            \mathbb E|R/\widehat R - 1| =& \mathbb E\{(1-R/\widehat R)I(\widehat R > R)\} + \mathbb E\{(R/\widehat R - 1)I(R > \widehat R)\},
        \end{align*}
        $\mathbb E|R / \widehat{R}-1|>\varepsilon_1$ implies that either (i) $\mathbb E\{(1-R/\widehat R)I(\widehat R > R)\} > \varepsilon_1/2$, or (ii) $\mathbb E\{(R/\widehat R - 1)I(R > \widehat R)\} > \varepsilon_1/2$.

        \paragraph{Step 2.1. Contradiction argument for case (i).} We
        first consider the case that (i) is true. Then $\mathbb E\{(1-R/\widehat R)I(\widehat R > R)\} > \varepsilon_1/2$ and therefore the event $E_1 = \{\widehat R > R\}$ has positive probability. On the event $E_1$, we have
        \begin{align}
        &\left|\frac{1}{\widehat{R}} \sum_{j=1}^{\widehat{R}} \widehat{T}_{(j)}-\frac{1}{R} \sum_{j=1}^{R} T_{(j)}\right|\notag\\
        =&\left|\frac{1}{\widehat{R}} \sum_{j=1}^{\widehat{R}} \widehat{T}_{(j)}-\frac{1}{\widehat{R}} \sum_{j=1}^{\widehat{R}} T_{(j)}+\frac{1}{\widehat{R}} \sum_{j=1}^{\widehat{R}} T_{(j)}-\frac{1}{R} \sum_{j=1}^{R} T_{(j)}\right| \notag\\
        =&\left|\frac{1}{\widehat{R}} \sum_{j=1}^{\widehat{R}} \left(\widehat{T}_{(j)}-T_{(j)}\right)+\frac{1}{\widehat{R}} \left(\sum_{j=1}^{R}+\sum_{j=R+1}^{\widehat{R}}\right) T_{(j)}-\frac{1}{R} \sum_{j=1}^{R} T_{(j)}\right| \notag\\
        =&\left|\frac{1}{\widehat{R}} \sum_{j=1}^{\widehat{R}} \left(\widehat{T}_{(j)}-T_{(j)}\right) + \frac{1}{\widehat{R}} \sum_{j=R+1}^{\widehat{R}} T_{(j)} -\left(1-\frac{R}{\widehat{R}}\right)\frac{1}{R} \sum_{j=1}^{R} T_{(j)} \right|\notag \\
        \geq&\left|\frac{1}{\widehat{R}} \sum_{j=R+1}^{\widehat{R}} T_{(j)} -\left(1-\frac{R}{\widehat{R}}\right)\frac{1}{R} \sum_{j=1}^{R} T_{(j)}\right| - \left|  \frac{1}{\widehat{R}} \sum_{j=1}^{\widehat{R}} \left(\widehat{T}_{(j)}-T_{(j)}\right)\right| \label{eq_T_ineq_1}\\
        \geq& \left|1-\frac{R}{\widehat{R}}\right|\left|T_{\left(R+1\right)}-\frac{1}{R} \sum_{j=1}^{R} T_{(j)}\right|-\left|\frac{1}{\widehat{R}} \sum_{j=1}^{\widehat{R}} \widehat{T}_{(j)}-\frac{1}{\widehat{R}} \sum_{j=1}^{\widehat{R}} T_{(j)}\right|,\label{eq_T_ineq_2}
        \end{align}
        where (\ref{eq_T_ineq_1}) holds due to triangle inequality $|a+b| \geq |b| - |a|$ and (\ref{eq_T_ineq_2}) holds because
        \begin{align*}
            &\frac{1}{\widehat{R}} \sum_{j=R+1}^{\widehat{R}} T_{(j)} -\left(1-\frac{R}{\widehat{R}}\right)\frac{1}{R} \sum_{j=1}^{R} T_{(j)}\\
            \geq& \frac{1}{\widehat{R}} \sum_{j=R+1}^{\widehat{R}} T_{(R+1)} - \left(1-\frac{R}{\widehat{R}}\right)\frac{1}{R} \sum_{j=1}^{R} T_{(j)}\\
            =& \frac{\widehat{R} - R}{\widehat{R}}T_{(R+1)} - \left(1-\frac{R}{\widehat{R}}\right)\frac{1}{R} \sum_{j=1}^{R} T_{(j)}\\
            =& \left(1-\frac{R}{\widehat{R}}\right) \left(T_{(R+1)} - \frac{1}{R} \sum_{j=1}^{R} T_{(j)}\right)
            \geq 0.
        \end{align*}

        In Step 2.1, our goal is to show that the right-hand side of (\ref{eq_T_ineq_2}) is positive with probability $1$ and thus contradicts with (\ref{eq_consistency_of_avg_T}). We first achieve Goal 1 that $|T_{(R+1)} - R^{-1}\sum_{j=1}^R T_{(j)}|$ in (\ref{eq_T_ineq_2}) is positive with probability $1$. Then we achieve Goal 2 that $\left|\frac{1}{\widehat{R}} \sum_{j=1}^{\widehat{R}} \widehat{T}_{(j)}-\frac{1}{\widehat{R}} \sum_{j=1}^{\widehat{R}} T_{(j)}\right|$ in (\ref{eq_T_ineq_2}) converges to $0$ in probability as $m\to\infty$.

            \paragraph{Goal 1:} 
            We show that $|T_{(R+1)} - R^{-1}\sum_{j=1}^R T_{(j)}|$ in (\ref{eq_T_ineq_2}) is positive with probability $1$.
         Since the event $\{R^{-1}\sum_{j=1}^R T_{(j)} \leq q\}$ has probability $1$, it suffices to show that
        \begin{align}
            T_{(R+1)} \geq& \lambda_{\mathrm{OR}}^{\infty} \text{ with probability approaching $1$.}\label{eq_T_R_plus_1_lower_bound}\\
            \lambda_{\rm OR}^\infty >&  q. \label{eq_lambda_OR_greater_than_q}
        \end{align}
        Since $\widehat{\lambda}_{\mathrm{OR}} > q$ with probability $1$, for $0<\gamma<1$ and $\widehat Q_{\rm OR}$ defined in (\ref{eq_Q_hat_OR}), we have
        \begin{align}
            \widehat{Q}_{\mathrm{OR}}(\widehat{\lambda}_{\mathrm{OR}}) =& \frac{1}{R}\sum_{j=1}^m T_j I(T_j \leq \widehat\lambda_{\rm OR})\notag\\
            =& \frac{1}{R}\sum_{j=1}^m T_j I(T_j \leq \gamma q) + \frac{1}{R}\sum_{j=1}^m T_j I(\gamma q < T_j \leq  \widehat{\lambda}_{\mathrm{OR}})\notag\\
            \leq& \gamma q\frac{\sum_{j=1}^m  I(T_j \leq \gamma q)}{\sum_{j=1}^m  I(T_j \leq  \widehat{\lambda}_{\mathrm{OR}})} + \widehat{\lambda}_{\mathrm{OR}}\frac{\sum_{j=1}^m  I(\gamma q < T_j \leq  \widehat{\lambda}_{\mathrm{OR}})}{\sum_{j=1}^m  I(T_j \leq  \widehat{\lambda}_{\mathrm{OR}})}. \label{eq_Q_hat_upper_bound}
        \end{align}
        
        Lemma \ref{lemma_consistency_of_thresholds} shows that $\widehat{\lambda}_{\mathrm{OR}}^\infty \rightarrow \lambda_{\mathrm{OR}}^\infty$ in probability, and the construction of $T_j$ and $T_j^\infty$ gives that $\widehat \lambda_{\rm OR} - \widehat \lambda_{\rm OR}^\infty \rightarrow 0$ in probability. Therefore, 
        \begin{align}
            \widehat\lambda_{\rm OR} \to \lambda_{\rm OR}^\infty\text{ in probability as } m\to\infty.
            \label{eq_lambda_hat_OR_to_lambda_OR_inf}
        \end{align}
        Similarly, we also have
        \begin{align}
            \widehat\lambda_{\rm rLIS} \to \lambda_{\rm OR}^\infty\text{ in probability as } m\to\infty.
            \label{eq_lambda_hat_rLIS_to_lambda_OR_inf}
        \end{align}
        Combining (\ref{eq_lambda_hat_OR_to_lambda_OR_inf}) and (\ref{eq_lambda_hat_rLIS_to_lambda_OR_inf}), we have
        \begin{align}
            \widehat \lambda_{\rm OR} - \widehat\lambda_{\rm rLIS} \to 0 \text{ in probability as } m\to\infty.
            \label{eq_lambda_hat_OR_minus_lambda_hat_rLIS_to_0}
        \end{align}
        By the rejection criteria (\ref{eq_oracle_test_1}) and (\ref{eq_oracle_test_2}), $T_{(R+1)} > \widehat\lambda_{\rm OR}$ with probability $1$. Thus (\ref{eq_T_R_plus_1_lower_bound}) holds. Moreover, for any $\epsilon > 0$, $\mathbb P(|\widehat\lambda_{\rm OR} - \lambda_{\rm OR}^\infty| > \epsilon) \rightarrow 0$ as $m\rightarrow\infty$. Then on the event $|\widehat\lambda_{\rm OR} - \lambda_{\rm OR}^\infty| \leq \epsilon$,
        \begin{align*}
            \frac{1}{m}\sum_{j=1}^m I(T_j \leq \widehat\lambda_{\rm OR}) \leq& \frac{1}{m}\sum_{j=1}^m I(T_j \leq \lambda_{\rm OR}^\infty + \epsilon),\\
            \frac{1}{m}\sum_{j=1}^m I(T_j \leq \widehat\lambda_{\rm OR}) \geq& \frac{1}{m}\sum_{j=1}^m I(T_j \leq \lambda_{\rm OR}^\infty - \epsilon).
        \end{align*}
        Birkhoff's ergodic theorem \citep{birkhoff1931proof} gives that as $m\rightarrow\infty$,
        \begin{align}
            \frac{1}{m}\sum_{j=1}^m I(T_j \leq \lambda_{\rm OR}^\infty + \epsilon) \rightarrow&  G^\infty(\lambda_{\rm OR}^\infty + \epsilon) \text{ almost surely, and } \notag\\
            \frac{1}{m}\sum_{j=1}^m I(T_j \leq \lambda_{\rm OR}^\infty - \epsilon) \rightarrow& G^\infty(\lambda_{\rm OR}^\infty - \epsilon) \text{ almost surely.}\notag
        \end{align}
        When $\epsilon$ tends to $0$, the continuity of $G^\infty$ gives that as $m \rightarrow \infty$,
        \begin{align}
            \frac{1}{m}\sum_{j=1}^m I(T_j \leq \widehat\lambda_{\rm OR}) \rightarrow G^\infty(\lambda_{\rm OR}^\infty) \text{ almost surely.} \label{eq_Birkhoff1}
        \end{align} 
        Similarly,
        \begin{align}
            \frac{1}{m}\sum_{j=1}^m I(\widehat T_j \leq \widehat\lambda_{\rm rLIS}) \rightarrow G^\infty(\lambda_{\rm OR}^\infty) \text{ almost surely.} \label{eq_limit_of_R_hat_over_m}
        \end{align}
        Moreover, by Birkhoff's ergodic theorem \citep{birkhoff1931proof}, we have
        \begin{align}
            \frac{1}{m}\sum_{j=1}^m I(T_j \leq \gamma q) \rightarrow&  G^\infty(\gamma q)\text{ almost surely.}\label{eq_Birkhoff2}
        \end{align}
        
        Combining (\ref{eq_Q_hat_upper_bound}), (\ref{eq_Birkhoff1}) and (\ref{eq_Birkhoff2}), we have
        \begin{align*}
            \widehat{Q}_{\mathrm{OR}}(\widehat{\lambda}_{\mathrm{OR}}) \leq &\gamma q\frac{G^\infty(\gamma q)}{G^\infty(\lambda_{\mathrm{OR}}^\infty)} + \lambda_{\mathrm{OR}}^\infty\frac{G^\infty(\lambda_{\mathrm{OR}}^\infty) - G^\infty(\gamma q)}{G^\infty(\lambda_{\mathrm{OR}}^\infty)},
        \end{align*}
        with probability approaching $1$.
        Recall that $\widehat{Q}_{\mathrm{OR}}(\widehat{\lambda}_{\mathrm{OR}}) = R^{-1} \sum_{j=1}^{R} T_{(j)} = q + o_p(1)$. Thus for any $\gamma \in (0, 1)$,
        \begin{align*}
            q \leq& \gamma q\frac{G^\infty(\gamma q)}{G^\infty(\lambda_{\mathrm{OR}}^\infty)} + \lambda_{\mathrm{OR}}^\infty\frac{G^\infty(\lambda_{\mathrm{OR}}^\infty) - G^\infty(\gamma q)}{G^\infty(\lambda_{\mathrm{OR}}^\infty)} + o(1).
        \end{align*}
        Equivlently, we have
        \begin{align*}
            \lambda_{\mathrm{OR}}^\infty \geq& \sup_{\gamma \in (0, 1)}\left\{q + \frac{q(1-\gamma)G^\infty(\gamma q)}{G^\infty(\lambda_{\mathrm{OR}}^\infty) - G^\infty(\gamma q)}\right\} > q.
        \end{align*}
        Since $G^\infty$ is strictly increasing in $(0, \alpha_*)$ and $\alpha_* > \lambda_{\rm OR}^\infty$, we have $G^\infty(\gamma q) > 0$ and $G^\infty(\lambda_{\mathrm{OR}}^\infty) - G^\infty(\gamma q) > 0$. Therefore, (\ref{eq_lambda_OR_greater_than_q}) holds.

        By (\ref{eq_average_of_T}), (\ref{eq_T_R_plus_1_lower_bound}) and (\ref{eq_lambda_OR_greater_than_q}), we have
        \begin{align*}
            &\mathbb E\left(T_{(R+1)} - \frac{1}{R}\sum_{j=1}^R T_{(R)}\right) \\
            =& \mathbb E\left(T_{(R+1)} - \lambda_{\rm OR}^\infty\right) + (\lambda_{\rm OR}^\infty - q) + \mathbb E\left(q -  \frac{1}{R}\sum_{j=1}^R T_{(R)}\right)\\
            \geq& \lambda_{\rm OR}^\infty - q \text{ as } m\to\infty.
        \end{align*}
        It implies that
        \begin{align}
            T_{(R+1)} - \frac{1}{R}\sum_{j=1}^R T_{(R)} \geq \lambda_{\rm OR} - q \text{ with probability approaching } 1, \label{eq_T_R_plus_1_minus_avg_T_greater_than_lambda_minus_q}
        \end{align}
        thus Goal 1 is achieved.

            \paragraph{Goal 2:}
            We show
            \begin{align}
                \mathbb E \left| \frac{1}{\widehat{R}} \sum_{j=1}^{\widehat{R}} \widehat{T}_{(j)}-\frac{1}{\widehat{R}}  \sum_{j=1}^{\widehat{R}} T_{(j)} \right| \to 0 \text{ as } m\to\infty. \label{eq_E_avg_T_hat_minus_T}
            \end{align}

        Note that 
        \begin{align}
            \left|\frac{1}{\widehat R}\sum_{j=1}^{\widehat R} T_{(j)} - \frac{1}{\widehat R}\sum_{j=1}^{\widehat R}\widehat T_{(j)}\right| \leq & \left|\frac{1}{\widehat R}\sum_{j=1}^{\widehat R} T_{(j)} - \frac{1}{\widehat R}\sum_{j=1}^{\widehat R} T_{(j)}^\infty\right|\notag\\
            +& \left|\frac{1}{\widehat R}\sum_{j=1}^{\widehat R} T_{(j)}^\infty - \frac{1}{\widehat R}\sum_{j=1}^{\widehat R} \widehat T_{(j)}^\infty\right|\notag\\
            +& \left|\frac{1}{\widehat R}\sum_{j=1}^{\widehat R} \widehat T_{(j)}^\infty - \frac{1}{\widehat R}\sum_{j=1}^{\widehat R} \widehat T_{(j)}\right|.
            \label{eq_R_hat_avg_T_minus_T_hat_3_bounds}
        \end{align}
        Denote $S_{\widehat R} = \{j: T_j \leq T_{(\widehat R)}\}$ and $S_{\widehat R}^\infty = \{j: T_j^\infty \leq T_{(\widehat R)}^\infty\}$. Since $\sum_{j\in S_{\widehat R}} T_j \leq \sum_{j\in S_{\widehat R}^\infty} T_j$ and $\sum_{j\in S_{\widehat R}^\infty} T_j^\infty \leq \sum_{j\in S_{\widehat R}} T_j^\infty$, we have
        \begin{align*}
            \sum_{j\in S_{\widehat R}} T_j - \sum_{j\in S_{\widehat R}} T_j^\infty
            \leq
            \sum_{j\in S_{\widehat R}} T_j - \sum_{j\in S_{\widehat R}^\infty }T_j^\infty
            \leq 
            \sum_{j\in S_{\widehat R}^\infty} T_j - \sum_{j\in S_{\widehat R}^\infty }T_j^\infty.
        \end{align*}
        Therefore,
        \begin{align}
            \left|\sum_{j\in S_{\widehat R}} T_j - \sum_{j\in S_{\widehat R}^\infty }T_j^\infty\right|
            \leq 
            \left|\sum_{j\in S_{\widehat R}} T_j - \sum_{j\in S_{\widehat R}} T_j^\infty\right|
            +
            \left|\sum_{j\in S_{\widehat R}^\infty} T_j - \sum_{j\in S_{\widehat R}^\infty }T_j^\infty\right|.
            \label{eq_avg_T_minus_T_inf_upper_bound}
        \end{align}
        For $L_m = m^\kappa$ with $\kappa \in (0, 1)$, denote 
        \begin{align*}
            \tilde S_{\widehat R} =& \{j\in S_{\widehat R}: L_m+1 < j < m-L_m-1\}\text{, and}\\
            \tilde S_{\widehat R}^\infty =& \{j\in S_{\widehat R}^\infty: L_m+1 < j < m-L_m-1\}.
        \end{align*}
        By (\ref{eq_upper_bound_of_Ed}), $\mathbb E|T_j - T_j^\infty| \leq  C_0\beta_0^{L_m}$ for $\beta_0 \in (0, 1)$ if $L_m+1 < j < m-L_m-1$. Whenever $j \leq L_m+1$ or $j \geq m-L_m-1$, $|T_j - T_j^\infty| \leq 1$. Note that Lemma \ref{lemma_R_to_infty} shows that $\widehat R / m \geq G^\infty(q/2)$ almost surely as $m\to\infty$. Therefore, by (\ref{eq_avg_T_minus_T_inf_upper_bound}),
        \begin{align}
            \mathbb E\left|\frac{1}{\widehat R}\sum_{j\in S_{\widehat R}} T_j - \frac{1}{\widehat R}\sum_{j\in S_{\widehat R}^\infty }T_j^\infty\right|
            \leq&
            2\mathbb E\left(\frac{2L_m + 1}{\widehat R}\right) + \mathbb E\left|\frac{1}{\widehat R}\sum_{j\in \tilde S_{\widehat R}}(T_j - T_j^\infty)\right| + \mathbb E\left|\frac{1}{\widehat R}\sum_{j\in \tilde S_{\widehat R}^\infty}(T_j - T_j^\infty)\right|\notag
            \\
            \leq& \frac{2(2m^\kappa + 1)}{mG^\infty(q/2)} + 2 \max_{L_m+1<j<m-L_m-1}\mathbb E |T_j - T_j^\infty|
            \notag\\
            \leq& \frac{2(2m^\kappa + 1)}{mG^\infty(q/2)} + 2 C_0\beta_0^{m^\kappa} \to 0 \text{ as } m\to\infty.\label{eq_E_upper_bound_1}
        \end{align}
        Similarly, for $\widehat S_{\widehat R} = \{j: \widehat T_j \leq \widehat T_{(\widehat R)}\}$ and $\widehat S_{\widehat R}^\infty = \{j: \widehat T_j^\infty \leq \widehat T_{(\widehat R)}^\infty\}$, we have
        \begin{align}
            \mathbb E\left|\frac{1}{\widehat R}\sum_{j\in \widehat S_{\widehat R}^\infty} \widehat T_j^\infty - \frac{1}{\widehat R}\sum_{j\in \widehat S_{\widehat R} }\widehat T_j\right|
            \leq \frac{2(2m^\kappa + 1)}{mG^\infty(q/2)} + 2 C_0\beta_0^{m^\kappa} \to 0 \text{ as } m\to\infty.
            \label{eq_E_upper_bound_3}
        \end{align}
        
        Furthermore, denote $S_{\widehat R}^\infty = \{j: T_j^\infty \leq T_{(\widehat R)}^\infty\}$ and $\widehat S_{\widehat R}^\infty = \{j: \widehat T_j^\infty \leq \widehat T_{(\widehat R)}^\infty\}$. By (\ref{eq_limit_of_R_hat_over_m}), we have $\widehat R/m \to G^\infty(\lambda_{\rm OR}^\infty)$ almost surely as $m\to\infty$. By Birkhoff's ergodic theorem \citep{birkhoff1931proof}, we have
        $T_{(\widehat R)}^\infty \to \lambda_{\rm OR}^\infty$ almost surely as $m\to\infty$. 
        Note that
        \begin{align*}
            \frac{1}{\widehat R}\sum_{j\in S_{\widehat R}^\infty} T_j^\infty =& \frac{1}{\widehat R} \sum_{j\in S_{\widehat R}^\infty} T_j^\infty I(T_j^\infty \leq T_{(\widehat R)}^\infty).
        \end{align*}
        Birkhoff's ergodic theorem \citep{birkhoff1931proof} gives that
        \begin{align*}
            \mathbb E\left\{\frac{1}{\widehat R} \sum_{j\in S_{\widehat R}^\infty} T_j^\infty I(T_j^\infty \leq T_{(\widehat R)}^\infty)\right\} \to & \mathbb E\{T_1^\infty \mid T_1^\infty \leq \lambda_{\rm OR}^\infty\}
            = \frac{1}{G^\infty(\lambda_{\rm OR}^\infty)}\int_0^{\lambda_{\rm OR}^\infty} x {\rm d}G^\infty(x) \text{ as } m\to\infty.
        \end{align*}
        Similarly, we have $\widehat T_{(\widehat R)}^\infty \to \lambda_{\rm OR}^\infty$ almost surely as $m\to\infty$ by Birkhoff's ergodic theorem \citep{birkhoff1931proof}. Therefore, 
        \begin{align*}
            \mathbb E\left\{\frac{1}{\widehat R}\sum_{j\in\widehat S_{\widehat R}^\infty}\widehat T_j^\infty\right\}\to \frac{1}{G^\infty(\lambda_{\rm OR}^\infty)}\int_0^{\lambda_{\rm OR}^\infty} x {\rm d}G^\infty(x) \text{ as } m\to\infty.
        \end{align*}
        Therefore, we have
        \begin{align}
            \mathbb E\left|\frac{1}{\widehat R}\sum_{j\in S_{\widehat R}^\infty} T_j^\infty - \frac{1}{\widehat R}\sum_{j\in\widehat S_{\widehat R}^\infty}\widehat T_j^\infty\right| \to 0 \text{ as } m\to\infty. \label{eq_E_upper_bound_2}
        \end{align}
        Combining (\ref{eq_R_hat_avg_T_minus_T_hat_3_bounds}), (\ref{eq_E_upper_bound_1}), (\ref{eq_E_upper_bound_3}) and (\ref{eq_E_upper_bound_2}), we have (\ref{eq_E_avg_T_hat_minus_T}) and thus Goal 2 is achieved.

        Then by (\ref{eq_T_ineq_2}), (\ref{eq_T_R_plus_1_minus_avg_T_greater_than_lambda_minus_q}) and (\ref{eq_E_avg_T_hat_minus_T}), for any $M>0$, there exists some $m \geq M$ satisfying
        \begin{align*}
            &\mathbb E\left|\frac{1}{\widehat{R}} \sum_{j=1}^{\widehat{R}} \widehat{T}_{(j)}-\frac{1}{R} \sum_{j=1}^{R} T_{(j)}\right|\\
            \geq& \mathbb E\left\{\left|1 - \frac{R}{\widehat R}\right|\cdot I(\widehat R >R)\cdot \bigg|T_{(R+1)} - \frac{1}{R}\sum_{j=1}^R T_{(j)}\bigg|\right\} - \mathbb E\bigg|\frac{1}{\widehat R}\sum_{j=1}^{\widehat R}\widehat T_{(j)} - \frac{1}{\widehat R}\sum_{j=1}^{\widehat R} T_{(j)} \bigg|\\
            >&\frac{\varepsilon_1}{2}\left|\lambda_{\mathrm{OR}}^{\infty}- q \right|+o(1).
        \end{align*}
        This is a contradiction to (\ref{eq_consistency_of_avg_T}). Therefore, (i) does not hold.

        \paragraph{Step 2.2. Contradiction argument for case (ii).} Now
        consider the case when (ii) is true. In this case, $\mathbb E\{(R/\widehat R - 1)I(R> \widehat R)\} > \varepsilon_1/2$ and therefore the event  $E_2 = \{R/\widehat R > 1+\varepsilon_1/2\}$ has positive probability. By (\ref{eq_average_of_T_hat}) and (\ref{eq_E_avg_T_hat_minus_T}), we have 
        \begin{align}
            (1 / \widehat{R}) \sum_{j=1}^{\widehat{R}} T_{(j)} = q
            \text{  with probability approaching $1$}.
            \label{eq_R_hat_avg_T_equal_q}
        \end{align}
        Thus $T_{(\widehat R+1)} \geq q $ with probability $1$. Then we can use a similar method as (\ref{eq_T_ineq_2}) and obtain that on the event $E_2$,
        \begin{align*}
            \left|\frac{1}{R} \sum_{j=1}^{R} T_{(j)}-\frac{1}{\widehat{R}} \sum_{j=1}^{\widehat{R}} \widehat{T}_{(j)}\right|
            =&\left|\frac{1}{\widehat{R}} \sum_{j=1}^{\widehat{R}} \left(\widehat{T}_{(j)}-T_{(j)}\right) + \left(1-\frac{\widehat{R}}{R}\right)\left(\frac{1}{\widehat{R}}\sum_{j=1}^{\widehat{R}}T_{(j)}- \frac{1}{R - \widehat{R}}\sum_{j=\widehat{R}+1}^{R}T_{(j)}\right)  \right|\notag \\
            \geq&\left|1-\frac{\widehat{R}}{R}\right|\left|\frac{1}{R-\widehat{R}} \sum_{j=\widehat{R}+1}^{R} T_{(j)}-\frac{1}{\widehat{R}} \sum_{j=1}^{\widehat{R}} T_{(j)}\right|-\left|\frac{1}{\widehat{R}} \sum_{j=1}^{\widehat{R}} \widehat{T}_{(j)}-\frac{1}{\widehat{R}} \sum_{j=1}^{\widehat{R}} T_{(j)}\right| .
        \end{align*}
        In Step 2.2, our goal is to show that the right-hand side of the above inequality is positive with probability $1$ and thus contradicts with (\ref{eq_consistency_of_avg_T}). 

        Let $\eta \in \left(q,  \lambda_{\mathrm{OR}}^{\infty}\right), \mathcal S_1=\left\{j: T_{(\widehat R+1)}\leq T_{(j)} \leq \eta\right\}$ and $\mathcal S_2=\left\{j: \eta<T_{(j)}\leq T_{(R)}\right\}$. We know $|\mathcal{S}_1| + |\mathcal{S}_2| = R-\widehat{R}$, where $|\cdot |$ denotes the cardinality of a set.
        Since $T_{(\widehat R+1)} \geq q$  with probability $1$, we have
        \begin{align*}
            \frac{1}{R-\widehat{R}} \sum_{j=\widehat{R}+1}^{R} T_{(j)} 
            =&  \frac{1}{R-\widehat{R}}\left( \sum_{j \in S_1} T_{(j)} + \sum_{j \in S_2} T_{(j)}\right)\\
            \geq & \frac{1}{R-\widehat{R}}\bigg(|\mathcal{S}_1|q + |\mathcal{S}_2|(\eta - q + q) + o_p(1)\bigg) 
            \\
            =& q+\frac{\left|\mathcal{S}_2\right|}{R-\widehat{R}}(\eta-q)+o_p(1).
        \end{align*}
        We apply the ergodic theorem \citep{birkhoff1931proof} and continuity of $G^{\infty}$ to obtain 
        \begin{align*}
            \frac{1}{m}\left|\mathcal{S}_2\right|=& 
            \frac{1}{m}\sum_{j=1}^m I\left(\eta<T_{j}\leq T_{(R)}\right) \rightarrow G^{\infty}\left(\lambda_{\mathrm{OR}}^{\infty}\right)-G^{\infty}(\eta) \text{ almost surely}
        \end{align*} 
        as $m\to\infty.$
        Since $T_{(R)} \leq \lambda_{\rm OR}^\infty$ with probability $1$ and $T_{(\widehat R+1)} \geq q$ with probability, we have
        $$
        \frac{1}{m}\left(R-\widehat{R}\right) = \frac{1}{m}\sum_{j=1}^m I\left(T_{(\widehat R+1)}\leq T_j \leq T_{(R)}\right) \leq G^\infty\left(\lambda_{\mathrm{OR}}^{\infty}\right)-G^\infty(q) \text{ almost surely}$$
        as $m\to\infty.$
        Since $|\mathcal S_2|/(R - \widehat R) \leq 1$, the continuous mapping theorem gives that
        \begin{align}
            \frac{1}{R-\widehat{R}} \sum_{j=\widehat{R}+1}^{R} T_{(j)} \geq q+ \frac{G^{\infty}\left(\lambda_{\mathrm{OR}}^{\infty}\right)-G^{\infty}(\eta)}{G^\infty \left(\lambda_{\mathrm{OR}}^{\infty}\right)-G^\infty(q)}(\eta-q)\text{ almost surely}
            \label{eq_R_minus_R_hat_avg_T_greater_than_q_plus_nu_0}
        \end{align}
        as $m\to\infty.$
        Denote $\nu_0=\left[\left\{G^{\infty}\left(\lambda_{\mathrm{OR}}^{\infty}\right)-G^{\infty}(\eta)\right\} /\left\{G\left(\lambda_{\mathrm{OR}}^{\infty}\right)-G(q)\right\}\right](\eta-q)$.
        Note that $G^{\infty}(t)$, the cumulative distribution function of $T_j^{\infty}$, is strictly increasing in $t$ over the interval $\left(0, \alpha_*\right)$. It implies that $\nu_0>0$. Hence by (\ref{eq_R_hat_avg_T_equal_q}) and (\ref{eq_R_minus_R_hat_avg_T_greater_than_q_plus_nu_0}), we have
        \begin{align}
        \left|\frac{1}{R} \sum_{j=1}^{R} T_{(j)}-\frac{1}{\widehat{R}} \sum_{j=1}^{\widehat{R}} \widehat{T}_{(j)}\right| \geq \left|1 - \frac{\widehat R}{R}\right|\nu_0 - \bigg|\frac{1}{\widehat R}\sum_{j=1}^{\widehat R}\widehat T_{(j)} - \frac{1}{\widehat R}\sum_{j=1}^{\widehat R} T_{(j)}\bigg|. \label{eq_lower_bound_ii}
        \end{align}
        By (\ref{eq_E_avg_T_hat_minus_T}), we take expectations on both sides of (\ref{eq_lower_bound_ii}) to get 
        \begin{align*}
            \mathbb E\left|\frac{1}{R} \sum_{j=1}^{R} T_{(j)}-\frac{1}{\widehat{R}} \sum_{j=1}^{\widehat{R}} \widehat{T}_{(j)}\right| 
            \geq& \mathbb E\left\{\left|1 - \frac{\widehat R}{R}\right|\cdot I(E_2)\right\}\cdot \nu_0 - \mathbb E\bigg|\frac{1}{\widehat R}\sum_{j=1}^{\widehat R}\widehat T_{(j)} - \frac{1}{\widehat R}\sum_{j=1}^{\widehat R} T_{(j)}\bigg|\\
            =& \mathbb E\left\{1-\widehat R / R\mid E_2\right\}\cdot \mathbb P(E_2)\cdot \nu_0 + o(1)\\
            \geq& \frac{\nu_0\varepsilon_1/2}{1+\varepsilon_1/2} \cdot \mathbb P(E_2) + o(1) > 0.
        \end{align*}
        The result is contradictory to (\ref{eq_consistency_of_avg_T}). Therefore, (ii) does not hold either.  

        We have shown that neither (i) nor (ii) holds, which implies that $\lim_{m\rightarrow \infty}\mathbb E|R/\widehat{R} - 1| = 0$. Similarly, we can obtain that $\lim_{m\rightarrow \infty}\mathbb E|V/\widehat{V} - 1| = 0$. The details are omitted.
    \end{proof}

\section{Competing methods}\label{sec_competing_methods}
We summarize the details of some of the competing methods below.

\subsection{The PLACO method}
The PLACO method \citep{ray2020powerful} is designed to detect pleiotropic variants associated with two traits under a composite null hypothesis. Let $Z_{1j}$ and $Z_{2j}$ denote the marginal GWAS $Z$-statistics for SNP $j$ in two traits. PLACO tests
\[
H_{0j}: \beta_{1j}\beta_{2j}=0
\quad \text{versus} \quad
H_{1j}: \beta_{1j}\beta_{2j}\neq 0,
\]
where the null includes variants associated with neither trait or only one trait.

The test statistic is the product of the two $Z$-statistics,
\[
T_j = Z_{1j}Z_{2j}.
\]
Under the composite null, $T_j$ follows a mixture of product-normal distributions corresponding to the sub-null states
\[
(0,0),\quad (0,1),\quad (1,0).
\]
PLACO approximates the null distribution of $T_j$ using genome-wide summary statistics and computes a pleiotropy $p$-value for each SNP. If the two GWAS have overlapping samples or correlated traits, PLACO first decorrelates the two $Z$-statistics using an estimated correlation matrix. Finally, SNPs with PLACO $p$-values below a genome-wide significance threshold, such as $5\times 10^{-8}$, are declared pleiotropic.

\subsection{The Primo method}
The Primo method \citep{gleason2020primo} integrates multiple sets of GWAS and omics QTL summary statistics to identify joint association patterns and provide mechanistic interpretation. Suppose there are $J$ studies or traits, and let
\[
T_i=(T_{i1},\ldots,T_{iJ})
\]
denote the vector of association statistics for SNP $i$. Each SNP can belong to one of
\[
K=2^J
\]
possible association patterns. Let $Q=(q_{kj})$ be the $K\times J$ binary matrix of all possible patterns, where $q_{kj}=1$ means association with trait $j$ under pattern $k$.

Primo models the posterior probability that SNP $i$ belongs to pattern $k$ as
\[
P(a_i=k\mid T_i)
=
\frac{\pi_k D_k(T_i)}
{\sum_{b=1}^K \pi_b D_b(T_i)},
\]
where $\pi_k$ is the genome-wide proportion of SNPs in pattern $k$, and $D_k(T_i)$ is the pattern-specific multivariate density. Primo first estimates marginal null and alternative densities for each study, then estimates the pattern proportions $\pi_k$ by an EM algorithm.

For a biological query of interest, such as association with a complex trait and at least one omics trait, Primo sums the posterior probabilities over the corresponding set of patterns. If $\widehat P_i$ denotes this collapsed posterior probability, the estimated FDR at threshold $\lambda$ is
\[
\widehat{\mathrm{FDR}}(\lambda)
=
\frac{\sum_i (1-\widehat P_i)I(\widehat P_i\ge \lambda)}
{\sum_i I(\widehat P_i\ge \lambda)}.
\]
SNPs with $\widehat P_i\ge \lambda$ are selected, where $\lambda$ is chosen to control the estimated FDR. In gene regions containing known trait-associated SNPs, Primo further performs conditional association analysis to reduce spurious multi-omics associations caused by linkage disequilibrium.

\subsection{The QCH method}
The QCH method \citep{mary2022querying} provides a general framework for testing composed hypotheses using multiple sets of $p$-values. Suppose each item $i$ has $Q$ $p$-values,
\[
P_i=(P_{i1},\ldots,P_{iQ}).
\]
For each test $q$, let $Z_{iq}=0$ denote the null state and $Z_{iq}=1$ denote the alternative state. Then each item belongs to one of
\[
2^Q
\]
configurations
\[
c=(c_1,\ldots,c_Q)\in\{0,1\}^Q.
\]

QCH defines a composed null and alternative by partitioning the configuration space into two sets,
\[
C_0 \cup C_1=\{0,1\}^Q,\qquad C_0\cap C_1=\emptyset.
\]
For example, the intersection-union alternative that an item is non-null in all $Q$ studies corresponds to
\[
C_1=\{(1,\ldots,1)\}.
\]

QCH fits a joint mixture model
\[
P_i \sim \sum_{c\in\{0,1\}^Q} w_c \psi_c,
\]
where $w_c$ is the proportion of items in configuration $c$. Under a conditional independence assumption, the component density is written as
\[
\psi_c(P_i)
=
\prod_{q:c_q=0} f_0^q(P_{iq})
\prod_{q:c_q=1} f_1^q(P_{iq}),
\]
where $f_0^q$ is the null density and $f_1^q$ is the alternative density for the $q$th $p$-value set.

After estimating the marginal alternative densities and the configuration proportions by EM, QCH computes the posterior probability that item $i$ satisfies the composed alternative:
\[
\widehat s_i
=
\sum_{c\in C_1}\widehat P(Z_i=c\mid P_i).
\]
Items are ranked by $\widehat s_i$. For a threshold $t$, the estimated FDR is
\[
\widehat{\mathrm{FDR}}(t)
=
1-
\frac{1}{N(t)}
\sum_{i:\widehat s_i>t}\widehat s_i,
\qquad
N(t)=\sum_i I(\widehat s_i>t).
\]
The final rejection set is obtained by choosing the smallest threshold $t$ such that the estimated FDR is controlled at the target level.

\subsection{The Cartesian HMM method}

The Cartesian hidden Markov model (Cartesian HMM) method \citep{wang2019replicability} is designed for replicability analysis across two GWAS studies while accounting for local dependence among adjacent SNPs. Let $p_{1j}$ and $p_{2j}$ denote the $p$-values for SNP $j$ in the two studies, and let
\[
z_{ij}=\Phi^{-1}(1-p_{ij}), \qquad i=1,2,
\]
be the corresponding one-sided $z$-scores. For study $i$ and SNP $j$, let $H_{ij}=1$ indicate that SNP $j$ is associated with the phenotype in study $i$, and let $H_{ij}=0$ otherwise. The no-replicability null hypothesis is
\[
H^{\mathrm{NR}}_{0j}:
(H_{1j},H_{2j})\in\{(0,0),(1,0),(0,1)\},
\]
whereas the replicability alternative is
\[
H^{\mathrm{R}}_{1j}:
(H_{1j},H_{2j})=(1,1).
\]

To model local dependence, the Cartesian HMM assumes that the joint latent states
\[
\{(H_{1j},H_{2j})\}_{j=1}^m
\]
form a stationary, irreducible, and aperiodic four-state Markov chain with state space
\[
\{(0,0),(1,0),(0,1),(1,1)\}.
\]
The transition probabilities are given by
\[
A_{uv}
=
P\{(H_{1,j+1},H_{2,j+1})=v\mid (H_{1j},H_{2j})=u\},
\]
where $u$ and $v$ range over the four joint states.

Conditional on the latent states, the observed $z$-scores are assumed to be independent across studies and SNPs. Specifically,
\[
Z_{ij}\mid H_{ij}
\sim
(1-H_{ij})f_{i0}+H_{ij}f_{i1},
\]
where $f_{i0}$ is the null density and $f_{i1}$ is the non-null density for study $i$. In practice, the method assumes
\[
f_{10}=f_{20}=N(0,1),
\]
and models the non-null densities parametrically as
\[
f_{11}=N(\mu_1,\sigma_1^2),
\qquad
f_{21}=N(\mu_2,\sigma_2^2).
\]

The testing statistic is the replicated local index of significance, defined as the posterior probability that SNP $j$ is not replicable:
\[
\mathrm{repLIS}_j
=
P\{H^{\mathrm{NR}}_{0j}\text{ is true}\mid (z_{1k},z_{2k})_{k=1}^m\}.
\]
Equivalently,
\[
\mathrm{repLIS}_j
=
P\{(H_{1j},H_{2j})\in\{(0,0),(1,0),(0,1)\}
\mid (z_{1k},z_{2k})_{k=1}^m\}.
\]
These posterior probabilities are computed efficiently using the forward--backward algorithm.

The repLIS procedure ranks SNPs by increasing $\mathrm{repLIS}_j$. Let
\[
\mathrm{repLIS}_{(1)}\le \cdots \le \mathrm{repLIS}_{(m)}
\]
be the ordered values. For a target FDR level $\alpha$, the rejection number is chosen as
\[
\ell
=
\max\left\{
t:
\frac{1}{t}\sum_{j=1}^t \mathrm{repLIS}_{(j)}
\le \alpha
\right\}.
\]
The method then rejects $H^{\mathrm{NR}}_{0(j)}$ for $j=1,\ldots,\ell$, declaring these SNPs replicable across the two studies.

When the Cartesian HMM parameters are unknown, they are estimated by an EM algorithm. The estimated parameters include the initial state probabilities, the transition matrix of the four-state Markov chain, and the parameters of the study-specific non-null normal densities. The estimated parameters are then plugged into the forward--backward algorithm to obtain the data-driven $\widehat{\mathrm{repLIS}}_j$ values.

The Cartesian HMM method exploits the clustering of associated SNPs along the genome and can improve power relative to methods that ignore local dependence. However, it is primarily developed for two-study replicability analysis, uses parametric non-null densities, and a direct extension to many studies would require a high-dimensional Markov chain with exponentially many latent states.

\subsection{The AdaFilter procedure}
The AdaFilter procedure \citep{wang2022detecting} tests partial conjunction null hypotheses with $n$ studies: for some $1 \leq r \leq n$, the null hypothesis for the $j$th SNP is 
\begin{align*}
    H_{0j}^{r/n}: \text{ fewer than } r \text{ out of } n \text{ hypotheses for SNP $j$ are non-null.}
\end{align*}
In this paper, we study the case where $r = n$.

AdaFilter for FDR control works as follows:
\begin{itemize}
    \item {\it Step 1.} For each $j$, order the $p$-values from $n$ studies as $p_{(1)j} \leq p_{(2j)} \leq \dots \leq p_{(nj)}$ and construct the filtering and selection ``$p$-values'' as
    \begin{align*}
        F_j =& (n - r +1)p_{(r-1)j},\\
        S_j =& (n-r+1)p_{(r)j}.
    \end{align*}
    \item {\it Step 2.} Rank the selection $p$-values as $S_{(1)} \leq  \dots \leq S_{(n)}$ with corresponding null hypotheses $H_{0(1)}^{r/n}, \ldots, H_{0(n)}^{r/n}$. For each $j = 1,\ldots, m$, construct an AdaFilter adjustment number
    \begin{align*}
        M_{(j)}^{\rm AF} = \sum_{j'=1}^m I(F_{j'} \leq S_{(j)}).
    \end{align*}
    \item {\it Step 3.} Construct the AdaFilter BH adjusted $p$-value for $H_{0(j)}^{r/m}$ as
    \begin{align*}
        p_{(j)}^{\rm BH} = \min \left\{\min_{j' \geq j} \left\{S_{(j')}\frac{M_{(j')}^{\rm AF}}{j'}\right\}, 1\right\},
    \end{align*}
    and reject the null hypotheses with AdaFilter adjusted $p$-values less than $q$.
\end{itemize}

\subsection{Other methods}
For details of the competing methods including \textit{ad hoc} BH \citep{benjamini1995controlling}, MaxP, MaRR \citep{philtron2018maximum}, radjust \citep{bogomolov2018assessing}, JUMP \citep{lyu2023jump} and STAREG \citep{li2024stareg}, please refer to the Supplementary Materials of \citet{li2024stareg}.

\end{document}